\newtheorem{theorem}{Theorem}[section]
\newtheorem{proposition}[theorem]{Proposition}
\newtheorem{lemma}[theorem]{Lemma}
\newtheorem{claim}{Claim}
\theoremstyle{definition}
\theoremstyle{remark}
\newtheorem{remark}{Remark}[section]
\numberwithin{equation}{section}
\newcommand{\C}{\mathbb{C}}
\renewcommand{\epsilon}{\varepsilon}
\newcommand{\N}{\mathbb{N}}
\newcommand{\G}{\mathbb{G}}
\renewcommand{\phi}{\varphi}
\newcommand{\R}{\mathbb{R}}
\newcommand{\T}{\mathbb{T}}
\newcommand{\Z}{\mathbb{Z}}
\newcommand{\E}{\mathbb{E}}
\newcommand{\one}{\mathbbm{1}}
\newcommand{\wt}{\widetilde}
\DeclareFontFamily{U}{mathx}{}
\DeclareFontShape{U}{mathx}{m}{n}{<-> mathx10}{}
\DeclareSymbolFont{mathx}{U}{mathx}{m}{n}
\DeclareMathAccent{\widehat}{0}{mathx}{"70}
\DeclareMathAccent{\widecheck}{0}{mathx}{"71}
\newcommand{\ipc}[2]{ \langle #1 , #2  \rangle }
\renewcommand{\P}{\mathbb{P}}
\newcommand{\eps}{\varepsilon}
\newcommand{\vertiii}[1]{{\left\vert\kern-0.25ex\left\vert\kern-0.25ex\left\vert #1
    \right\vert\kern-0.25ex\right\vert\kern-0.25ex\right\vert}}
\DeclareMathOperator{\supp}{supp}
\newcommand\numberthis{\addtocounter{equation}{1}\tag{\theequation}}
\begin{document}

 \title[Spectrum and Lifshitz tails for SG]{Spectrum and Lifshitz tails for the Anderson model on the Sierpinski gasket graph}

\author[L. Shou, W. Wang, S. Zhang]{Laura Shou, Wei Wang, Shiwen Zhang}

\begin{abstract}
 In this work, we study the Anderson model on the Sierpinski gasket graph. We first identify the almost sure spectrum of the Anderson model when the support of the random potential has no gaps. We then prove the existence of the integrated density states of the Anderson model and show that it has Lifshitz tails with Lifshitz exponent determined by the ratio of the volume growth rate and the random walk dimension of the Sierpinski gasket graph. 
\end{abstract}

  \maketitle

\section{Introduction}
Random Schr\"odinger operators play an important role in describing the quantum state of a particle in a disordered material.
The tight-binding approximation simplifies the model further by restricting the positions of electrons, leading to a Hamiltonian given by a discrete random Schr\"odinger $-\Delta+V$ acting on a Hilbert space $\ell^2(\mathbb G)$, with $\mathbb G$ the vertices of a graph, $\Delta$ the graph Laplacian, and $V$ a random potential, acting as a multiplication operator $V(x)=V_\omega(x)$.  The simplest, but also one of the most prominent, form of the random potential is to take $\{V(x)\}_{x\in\G}$ as independent and identically distributed (i.i.d.) random variables. The corresponding random operator is called the \emph{Anderson model}. 
The Anderson model, along with more general random operators, have been long investigated in the ergodic setting, on a vertex-transitive graph such as the standard $\Z^d$ lattice, or the Bethe lattice (a regular tree graph); see a thorough discussion in e.g. \cite{aizenman2015random}.  Fractal graphs are notable examples that are not vertex-transitive. 
 
 In this paper, we study spectral properties of the  Anderson model on such a fractal graph, the Sierpinski gasket graph, also called the \emph{Sierpinski lattice}\footnote{The `true' Sierpinski gasket $K$ is a compact fractal subset of $\R^2$, constructed by a self-similar iterated
function scheme, see e.g. \cite{fal1990,barlow2017random}. The Sierpinski lattice $\G$, as an finite combinatorial graph, is defined with a similar structure, whose large-scale
structure mimics the microscopic structure of the Sierpinski gasket $K$. 
}. There is a large literature about the free Laplacian $\Delta$ on the Sierpinski gasket or Sierpinski lattice, as well as more general nested fractals and p.c.f. self-similar sets or graphs, see e.g. \cite{alex1984,ram1984, kusuoka1987diffusion,goldstein1987random, barlow1988, kig1989,  
fuku1988,shima1991eigenvalue,fuku1992,lind1990,lapidus1994,malo1993,
teplyaev1998spectral,dalrymple1999fractal,stri1999}, which is far from a complete list.  For the Anderson model and other random Hamiltonians on fractals, there are many works in the physics literature, see e.g. \cite{schreib1996,stic2016,kosi2017,manna2024}, among others.  But limited work has been done for the Anderson model on fractals in the math literature, except in, e.g. \cite{shima1991lifschitz, pietruska1991lifschitz,ma95,balsam2023density}.   Lacking the ergodic setting, it is not easy in general to determine the full spectrum of a random operator, nor its different spectral components. In particular, to the best to our knowledge, there is no work on the (topological) structure or identification of deterministic spectra for the Anderson model on the Sierpinski gasket or Sierpinski lattice, nor on other fractal sets or graphs.  

The first result of this paper is the almost-sure spectrum of the discrete random Schr\"odinger operator on  the Sierpinski lattice when the support of the random potential has no gaps. There are also partial results of the spectrum for general random potentials. On the other hand, the works of \cite{shima1991lifschitz, pietruska1991lifschitz,balsam2023density} concern   the continuous/differential random Schr\"odinger operator  on the Sierpinski gasket (rather than the discrete/graphical Sierpinski lattice), and prove the existence and the Lifshitz-type singularity of the integrated density of states (IDS) for the random Schr\"odinger operator. 
There are no discrete analogue of these results. The second part of this paper obtains the existence of the IDS for the Anderson model on the Sierpinski lattice, and establishes the Lifshitz tail estimates of the IDS near the bottom of the spectrum.

 \subsection{Main results}
 
 To state our main results, we first introduce the Sierpinski lattice, and the associated Anderson model. 
  Let $T_0\subseteq \R^2$ be the unit equilateral triangle with the vertex set  $\G_0=\{(0,0),(1,0),(1/2,\sqrt 3/2)\}=\{ a_1,a_2,a_3\}$.  
Now, recursively define $T_n$ by
\begin{align}\label{eqn:T-rec}
  T_{n+1}=T_n\cup \big(T_n+2^na_2\big) \cup \big(T_n+2^na_3\big),
  \ n\ge 0. 
\end{align}
Each $T_n$ consists of $3^n$ many translations of the unit triangle $T_0$. 
Let $\G_n$ be the collection of all vertices of the triangles in $T_n$. The set of edges $\mathcal E_n=\{(x,y):x,y\in \G_n\}$ is defined by the relation $(x,y)\in \mathcal E_n$ iff there exists a triangle $T$ on side 1 in $T_n$ with $x,y\in T$.

Let
\begin{align}\label{eqn:SG-def}
    \G=\bigcup_{n\ge 0}(\G_n\, \cup \,  \G_n'), \ \ \mathcal E=\bigcup_{n\ge 0}(\mathcal E_n\, \cup \mathcal E_n'), \ \ \Gamma=(\G,\mathcal E), 
\end{align}
where $\G_n'$ and $\mathcal E_n'$ are the symmetric image of $\G_n$ and $\mathcal E_n$ with respect to the $y$-axis respectively; see Figure~\ref{fig:V3}. 
\begin{figure}
    \centering
    \includegraphics[width=.8\linewidth]{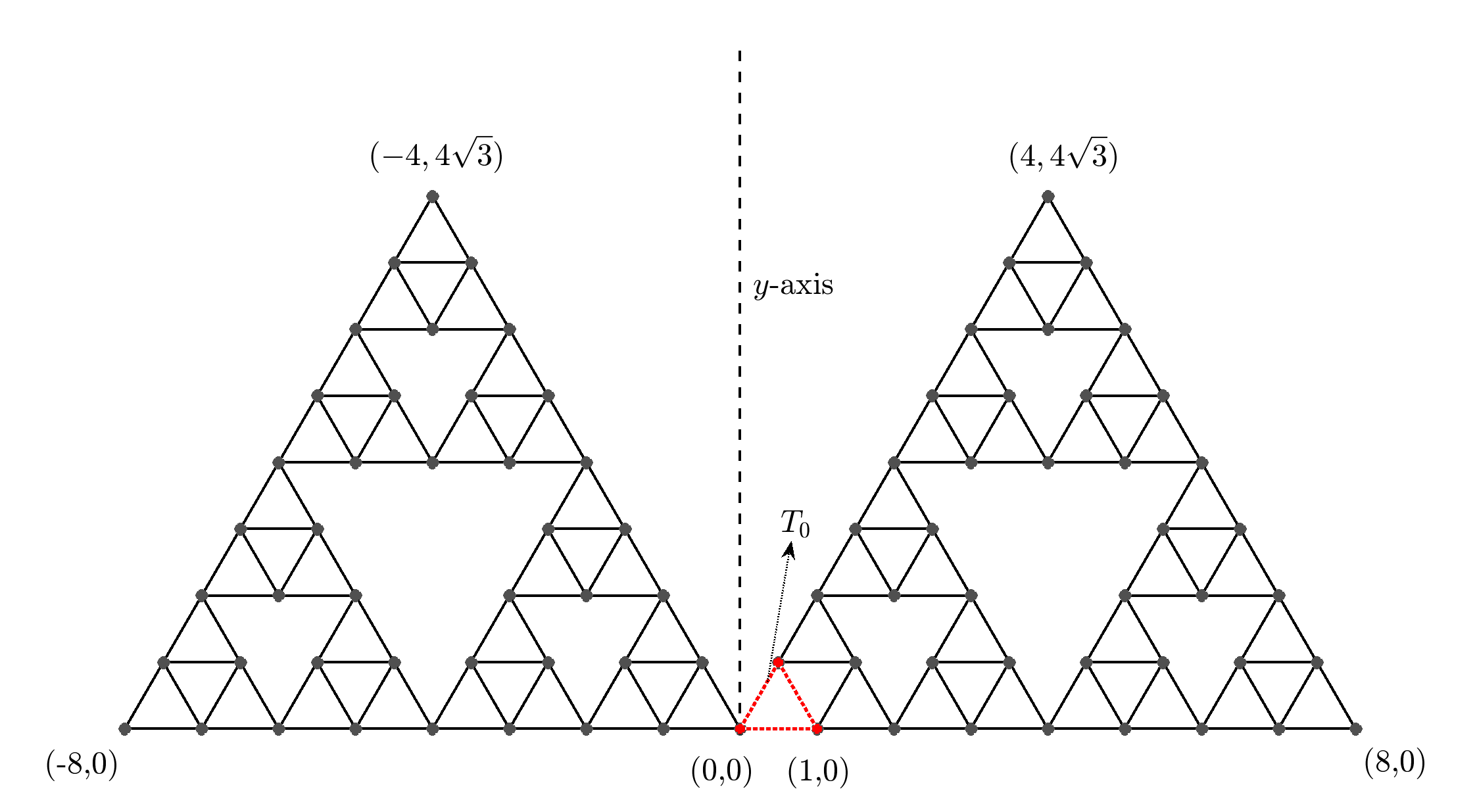}
\caption{
The unit triangle $T_0$ is located next to the origin, with vertices $\G_0=\{(0,0),(1,0),(1/2,\sqrt3/2)\}$.
The right hand side of the $y$-axis is the 3rd step of construction $T_3$ and the left hand side is its reflective mirror with respect to the $y$-axis. The picture contains $2\times 27$ many unit triangles $T$, which are all translations of $T_0$.  The dots form the vertex set $\G_3'\cup\G_3$. The edge set $\mathcal E_3'\cup\mathcal E_3$ consists of all edges of length 1 of the unit triangles. }
    \label{fig:V3}
\end{figure}
 $\Gamma=(\G,\mathcal E)$ is called the (infinite) Sierpinski gasket graph, or Sierpinski lattice, with the vertex set $\G$ and the edge set $\mathcal E$. We write $x\sim y$ to mean $(x,y)\in \mathcal E$, and say that $y$ is a neighbor
of $x$.

The  (combinatorial) graph Laplacian on the Sierpinski lattice is given by
\begin{align}\label{eqn:Lap}
   \Delta f(x)= \sum_{y\in \G:y\sim x}\left(f(y)- f(x)\right), \ x\in \G,  
\end{align}
acting on $\ell^2(\G)$ equipped with the usual inner product $\langle f,g\rangle=\sum_{x\in\G}  f(x)\overline{ g(x)}$. 
 The Anderson model on the Sierpinski lattice is given by the random Hamiltonian $H_\omega=-\Delta+V_\omega$ on $\ell^2(\G)$:
\begin{align}\label{eqn:AM}
    H_\omega f(x)=-\sum_{y\in \G: y\sim x}(f(y)-f(x))+V_\omega(x)f(x), \ \ x\in \G,
\end{align}
where $V_\omega$ is a random potential, acting as the usual multiplicative operator. $\{V_\omega(x)\}_{x\in \G}$ are independent, identically distributed (i.i.d.) random variables, with a common distribution $P_0$.  
We denote by $ \supp P_0$ the (essential) support of the measure $P_0$, defined as 
 \begin{align}\label{eqn:Vsupp}
    \supp P_0=\{x\in \R\ | \ P_0(x-\varepsilon,x+\varepsilon)>0\ {\rm for\ all\ } \varepsilon>0\ \}.
\end{align}

The first result of the paper is the a.s. structure of the spectrum of $H_\omega=-\Delta+V_\omega$. 
\begin{theorem}\label{thm:AM-as-spectrum}
 Let $H_\omega=-\Delta+V_\omega$ be the Anderson model \eqref{eqn:AM} on the Sierpinski lattice $\G$. Then almost surely,  
    \begin{align}\label{eqn:AM-as-spectrum1}
 \sigma(-\Delta)+ \supp P_0      \subseteq  \sigma(H_\omega)\subseteq \sigma(-\Delta)+[\inf V_\omega,\sup V_\omega]     , 
    \end{align}
    and
    \begin{align}\label{eqn:AM-as-spectrum3}
   \sigma(H_\omega)\subseteq[0,6]+ \supp P_0 
    \end{align}
    In particular, if the essential support of the potential is an interval, i.e., $ \supp P_0 =[a,b]$ for $-\infty\le a<b\le \infty$, then the spectrum of $H_\omega$ is almost surely a constant set given by 
    \begin{align}\label{eqn:AM-as-spectrum2}
      \sigma(H_\omega)    =\sigma(-\Delta)+ \supp P_0   .
    \end{align}
\end{theorem}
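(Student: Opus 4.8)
The plan is to obtain the two upper inclusions from a soft spectral lemma, the lower inclusion from a Weyl‑sequence argument that replaces translation invariance by the self‑similarity of $\G$, and to assemble everything with one Borel--Cantelli computation. The soft ingredient is the elementary fact that for any self-adjoint $X$ and any \emph{bounded} self-adjoint $Y$ on a Hilbert space one has $\sigma(X+Y)\subseteq\sigma(X)+\mathrm{conv}(\sigma(Y))$, where $\mathrm{conv}(\sigma(Y))=[\inf\sigma(Y),\sup\sigma(Y)]$. (Sketch: translate so that $Y\ge 0$, $\|Y\|=b$; if $E\notin\sigma(X)+[0,b]$ then the compact interval $[E-b,E]$ misses the closed set $\sigma(X)$, hence lies at a positive distance from it, so on the spectral subspace of $X$ below $E-b$ one has $X-E<-b$; this gives $(X-E)^{-1}\ge c_0 I$ with $c_0 b\in(-1,0)$, whence $Y^{1/2}(X-E)^{-1}Y^{1/2}\ge c_0 b\,I>-I$, so $-1\notin\sigma\big((X-E)^{-1}Y\big)$ and $X+Y-E=(X-E)\big(I+(X-E)^{-1}Y\big)$ is invertible.) Applying this with $(X,Y)=(-\Delta,V_\omega)$ — a trivial statement if $V_\omega$ is unbounded — gives $\sigma(H_\omega)\subseteq\sigma(-\Delta)+[\inf V_\omega,\sup V_\omega]$, since the multiplication operator $V_\omega$ has $\sigma(V_\omega)=\overline{\{V_\omega(x):x\in\G\}}$ with convex hull $[\inf V_\omega,\sup V_\omega]$; applying it with $(X,Y)=(V_\omega,-\Delta)$ gives $\sigma(H_\omega)\subseteq\sigma(V_\omega)+\mathrm{conv}(\sigma(-\Delta))$.

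Next I would record two auxiliary facts. First, $\sigma(-\Delta)\subseteq[0,6]$: writing $\langle-\Delta f,f\rangle=\sum_{\{x,y\}\in\mathcal E}|f(x)-f(y)|^2\ge0$, grouping the edge sum according to the unique unit triangle each edge belongs to, and using $|a-b|^2+|b-c|^2+|c-a|^2=3(|a|^2+|b|^2+|c|^2)-|a+b+c|^2\le3(|a|^2+|b|^2+|c|^2)$ together with the fact that every vertex of $\G$ lies in at most two unit triangles, one gets $\langle-\Delta f,f\rangle\le6\|f\|^2$. Second, almost surely $\sigma(V_\omega)=\supp P_0$: since $P_0(\supp P_0)=1$, a.s.\ $V_\omega(x)\in\supp P_0$ for all $x$, so $\sigma(V_\omega)\subseteq\supp P_0$; conversely, for a fixed countable dense $\{v_k\}\subseteq\supp P_0$ and all $k,m$ we have $P_0(v_k-1/m,v_k+1/m)>0$, so by the second Borel--Cantelli lemma (the $V_\omega(x)$ being independent) a.s.\ some $x$ has $|V_\omega(x)-v_k|<1/m$, and intersecting over $k,m$ gives $\supp P_0\subseteq\overline{\{V_\omega(x)\}}$ a.s. Combining: a.s.\ $\sigma(H_\omega)\subseteq\sigma(V_\omega)+[0,6]=\supp P_0+[0,6]$, which is \eqref{eqn:AM-as-spectrum3}, and $\sigma(H_\omega)\subseteq\sigma(-\Delta)+[\inf V_\omega,\sup V_\omega]$ is the upper inclusion in \eqref{eqn:AM-as-spectrum1}. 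When $\supp P_0=[a,b]$, the second fact also gives $[\inf V_\omega,\sup V_\omega]=[a,b]=\supp P_0$ a.s., so \eqref{eqn:AM-as-spectrum2} will follow from \eqref{eqn:AM-as-spectrum1} once the lower inclusion is established.

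For the lower inclusion, I would show $\sigma(-\Delta)+\supp P_0\subseteq\sigma(H_\omega)$ almost surely. This set is closed (a compact plus a closed set), so it suffices to prove, for each fixed $\lambda\in\sigma(-\Delta)$ and $v\in\supp P_0$, that $\lambda+v\in\sigma(H_\omega)$ a.s., and then intersect over a countable dense set of such pairs and use that $\sigma(H_\omega)$ is closed. For fixed $\lambda,v$ and each $k$, pick a finitely supported $\psi$ with $\|(-\Delta-\lambda)\psi\|\le k^{-1}\|\psi\|$ (possible since finitely supported vectors are dense in $\ell^2(\G)$ and $-\Delta$ is bounded), and set $F=\supp\psi$. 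The key geometric input is a recurrence property of the Sierpinski lattice: the finite pattern $F$ together with its $\G$-neighborhood recurs infinitely often, i.e.\ there are pairwise disjoint sets $W_i$ and induced-subgraph isomorphisms $\phi_i$ carrying $F\cup\partial_{\G}F$ onto $W_i$, such that $\phi_i(F)$ has all of its $\G$-neighbors inside $W_i$. Because $-\Delta$ acts locally and $\G$ is $4$-regular, the transported functions $\psi_i:=\psi\circ\phi_i^{-1}$ (extended by zero) then satisfy $\|(-\Delta-\lambda)\psi_i\|=\|(-\Delta-\lambda)\psi\|\le k^{-1}\|\psi_i\|$. The events $\{\,|V_\omega(x)-v|<k^{-1}\ \text{for all}\ x\in\phi_i(F)\,\}$ are independent with common positive probability $P_0(v-k^{-1},v+k^{-1})^{|F|}$, so a.s.\ at least one of them occurs; on that event $\|(H_\omega-\lambda-v)\psi_i\|\le k^{-1}\|\psi_i\|+k^{-1}\|\psi_i\|$, hence $\dist(\lambda+v,\sigma(H_\omega))\le 2k^{-1}$. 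Letting $k\to\infty$ and intersecting the countably many full-measure events yields $\lambda+v\in\sigma(H_\omega)$ a.s., completing the argument modulo the recurrence statement.

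I expect the recurrence statement to be the main obstacle, and the one place where the specific structure of $\G$ (rather than general operator theory) is needed. I would prove it by enlarging $F$ to a metric ball $B_R(x_0)$, verifying that $\G$ has only finitely many isomorphism types of $R$-balls, and exhibiting infinitely many pairwise vertex-disjoint copies of the relevant type directly from the recursion \eqref{eqn:T-rec}--\eqref{eqn:SG-def} — for instance, copies of $T_m$ placed at "corner positions" of $T_{m+j}\subset\G$ for growing $j$, which for $m$ large contain $B_R(x_0)$ well inside their interior (away from the three corners, which are the only vertices of a $T_m$-copy whose $\G$-neighborhood is not already contained in the copy, since $\G$ is $4$-regular while $T_m$ has degree $4$ at every non-corner vertex). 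The delicate points will be checking that such copies can be chosen genuinely disjoint and that the $\G$-neighborhood each one inherits is isomorphic to the one surrounding $B_R(x_0)$; this is where the careful bookkeeping with the self-similar construction has to be done.
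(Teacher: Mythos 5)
Your proposal is correct in outline, and for the key lower inclusion in \eqref{eqn:AM-as-spectrum1} it takes a genuinely different route from the paper. The upper inclusions are handled essentially as in the paper (the paper quotes the same ``$\sigma(X+Y)\subseteq\sigma(X)+[\inf\sigma(Y),\sup\sigma(Y)]$'' fact and the role-switching trick for \eqref{eqn:AM-as-spectrum3}; you additionally give self-contained proofs that $\sigma(-\Delta)\subseteq[0,6]$ and that a.s.\ $\sigma(V_\omega)=\supp P_0$, which the paper gets from \cite{teplyaev1998spectral} and leaves implicit, respectively). For the lower inclusion, the paper uses the structural theorem of Teplyaev: eigenvalues of $-\Delta$ are dense in $\sigma(-\Delta)$, each has compactly supported eigenfunctions living in $2^\ell$-triangles, and such an eigenfunction can be translated to any other $2^\ell$-triangle because all interior degrees equal $4$; combined with a Borel--Cantelli claim producing, for every $\ell$, a $2^\ell$-triangle on which $V_\omega$ is $\eps$-close to $\mu$, this yields a Weyl sequence for $\lambda+\mu$. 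You instead use truncated (finitely supported) approximate eigenfunctions of $-\Delta$ and transport them via a pattern-recurrence lemma for $\G$ (infinitely many disjoint induced-subgraph copies of any ball, with matching neighborhoods), which is the direct analogue of the classical $\Z^d$ argument with translation invariance replaced by self-similarity. Your route avoids the spectral-theoretic input from \cite{teplyaev1998spectral} but shifts the burden onto a combinatorial recurrence lemma; the paper's route makes the transport step trivial (triangles are literally isometric and eigenfunctions vanish at the three extreme vertices) at the cost of invoking the exact eigenfunction structure.

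The one genuine soft spot is the recurrence lemma itself, precisely where you flagged the bookkeeping. Your sketch places $B_R(x_0)$ ``well inside the interior'' of a copy of a $2^m$-triangle, away from its three extreme vertices; this fails when $x_0$ lies within distance $R$ of a hierarchical junction vertex that is an extreme vertex at \emph{all} sufficiently large scales --- most notably the origin $O$, which is an extreme vertex of both $\G_m$ and $\G_m'$ for every $m$, so no ball around $O$ ever avoids the corners of a single $2^m$-triangle containing it. Since the finitely supported approximate eigenfunction is arbitrary, its support may well meet such a vertex, and you cannot first move it away without the very lemma you are proving. The fix is routine but must be stated: if $B_R(x_0)$ meets a junction vertex, take $m$ with $2^m>4R$, note $B_R(x_0)$ is contained in the union of the (at most two) $2^m$-triangles through that vertex and avoids their remaining extreme vertices, and use that the union of any two $2^m$-triangles glued at a shared extreme vertex is graph-isomorphic to $\G_m\cup\G_m'$ (by the corner-permuting symmetry of each triangle), with all external edges attached only at the four unused corners; there are infinitely many pairwise disjoint such adjacent pairs, giving the required disjoint copies. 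With this case added, your recurrence lemma holds and the rest of your argument (independence of the potential events on disjoint copies, second Borel--Cantelli, closedness of $\sigma(-\Delta)+\supp P_0$ and of $\sigma(H_\omega)$) goes through.
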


\begin{remark}
The spectrum of the free Laplacian $-\Delta$ on the Sierpinski lattice as a set was first computed by \cite{belli1982stab,belli1988ren,fuku1992}. The nature of $\sigma(-\Delta)$ and the structure of eigenfunctions were later determined in \cite{teplyaev1998spectral}. As shown in \cite[Theorem 2]{teplyaev1998spectral}, the spectrum $\sigma(-\Delta)\subseteq[0,6]$ consists of isolated eigenvalues and a Cantor set. In particular,  $0=\inf \sigma(-\Delta)$ is in the essential/Weyl spectrum, and $6=\sup \sigma(-\Delta)$ is the largest isolated eigenvalue of infinite multiplicity.

For the Anderson model $-\Delta+V_\omega$, from Theorem~\ref{thm:AM-as-spectrum} we only know that the almost sure constant spectrum structure \eqref{eqn:AM-as-spectrum2} holds when there are no gaps in the support of the potential, e.g. for the uniform distribution, Gaussian distribution, etc. 
For general distributions, we do not know how to show $ \sigma(-\Delta+V_\omega)$ is a non-random set. 
In addition, the relation \eqref{eqn:AM-as-spectrum2} may not hold for, e.g. Bernoulli distributions; see for example numerical experiments in Figure~\ref{fig:gasket-ids}.
Given a Bernoulli potential with 
 $ \supp P_0 =\{a,b\}$, 
 the best that we obtain by \eqref{eqn:AM-as-spectrum3} is
 \begin{align}
     \sigma(-\Delta+V_\omega)\subseteq[0,6]+\{a,b\}= \Big(a+[0,6] \Big)\cup \Big(b+[0,6]\Big).  
 \end{align}

\end{remark}

\begin{remark}
   It is well known that the spectrum of a family of \emph{ergodic} operators is almost surely non-random, dating back to L. Pastur \cite{pastur1980}. 
However, since the Sierpinski lattice $\G$ is not vertex transitive,  \eqref{eqn:AM} 
  is not realized as an ergodic family in the usual way by a natural vertex-transitive group of graph automorphisms (see e.g. \cite[Definition 3.4.]{aizenman2015random}). We thus do not know if there is a similar approach to show the deterministic nature of the spectrum.

 The more specific structure of the spectrum as described by the form of \eqref{eqn:AM-as-spectrum2} was determined for the Anderson model on $\Z^d$ in \cite{kunz1980,kirsch1982cmp} using the Weyl criterion. Theorem~\ref{thm:AM-as-spectrum} can be viewed as an extension of the result of \cite{kunz1980} from the $\Z^d$ lattice to the Sierpinski lattice $\G$ in certain cases. 
 
\end{remark}

 \begin{figure}[!ht]
	\centering
	\subfigure []{
		\includegraphics[width=7.0 cm]{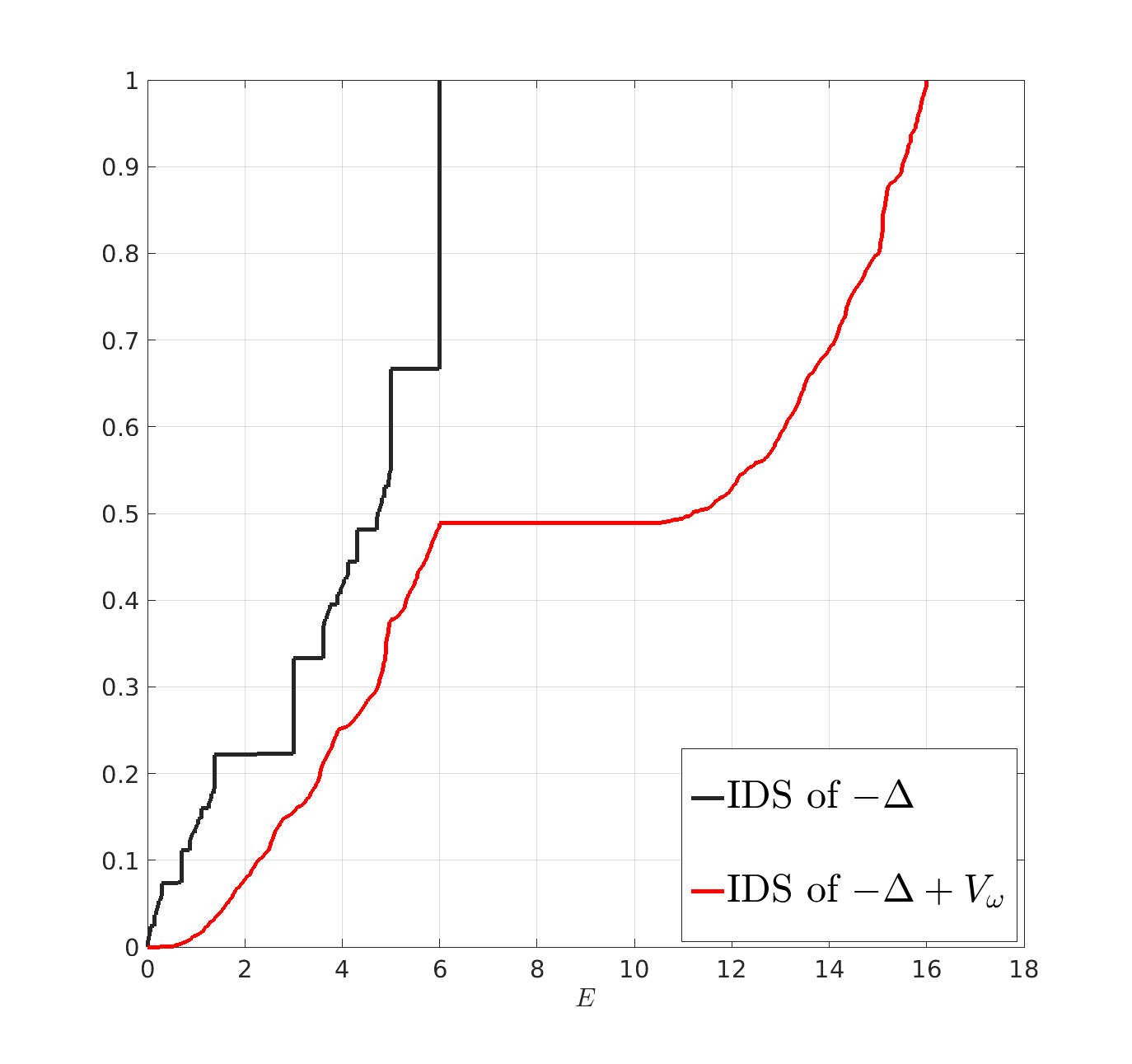}
	}
	\quad
        \subfigure[]{
		\includegraphics[width=7.0cm]{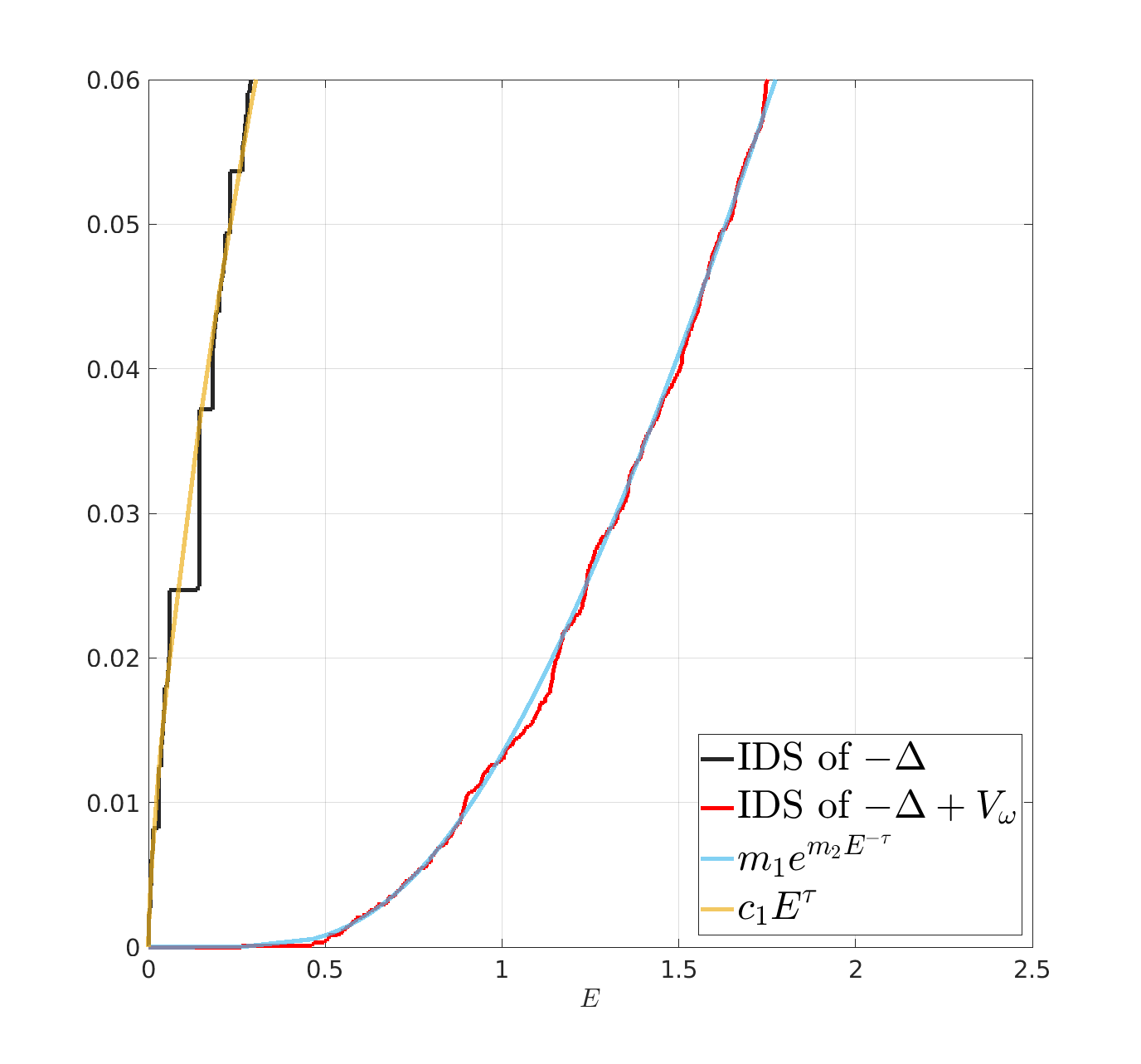}
	}
	\caption{
    (a) IDS of $-\Delta$ and $-\Delta+V_\omega$ with a 0-10 Bernoulli potential, for a finite gasket with sidelength $2^8$. The spectrum $\mathcal C=\sigma(-\Delta)$ is a Cantor subset of $[0,6]$. Clearly,  $\sigma(-\Delta+V_\omega)\subseteq[0,6]\cap[10,16]$. But it does not appear that $\sigma(-\Delta+V_\omega)\subseteq\mathcal C\cap (10+\mathcal C)$. 
    (b) The enlarged view of the bottom part of the IDS shows the Lifshitz tails in the Bernoulli case (red), along with a reference exponential function (light blue) where $\tau=\frac{\log 3}{\log 5}$, $m_1=1.38$, $m_2=-4.64$.
    Additionally, we plot the IDS for the standard Laplacian (in black) for comparison along with the reference line $c_1E^{\tau}$, with  $c_1=0.135$, in yellow.}\label{fig:gasket-ids}
\end{figure}

%%%%%%%%%%%%%%%%%%%%%%%%%%%%%%%%%%%%%%%%%%%%%%%%%%%%%%%%%%%%%%%%%%%%%%%%%%%%%%%%%%%%%%%%%%%%%%%%%%%%%%%%%%%%%%%%%%%%%%%%%%%%%%%% 
 
   Next, we study the integrated density of states of the random Schr\"odinger $H_\omega$, as the limit of a sequence of finite volume eigenvalue counting functions.  Given $L\in \N$, let  $B_{L}=B(O,2^L)\subseteq\G$    
    be the (graph metric) ball, centered at the origin $O=(0,0)$, with radius $2^L$. Let $H_\omega^{B_L}=\one_{B_L}H_\omega \one_{B_L}$ be the restriction $H_\omega$ on the finite-dimensional space $\ell^2(B_L)$. Denote the eigenvalue counting function below the energy $E$ of $H_\omega^{B_L}$ by 
    \begin{align}
        \mathcal N(E;H_\omega^{B_L})=\#\big\{\ {\rm eigenvalues}\ E'\ {\rm of}\ H_\omega^{B_L} \ {\rm such\ that\ }\ E'\le E\ \big\}.
    \end{align}

\begin{theorem}\label{thm:IDS-exist-lif}
Let $H_\omega=-\Delta+V_\omega$  be the Anderson model \eqref{eqn:AM} on the Sierpinski lattice $\G$.   Then there exists a non-random right continuous non-decreasing function  $N(E)$ such that almost surely, 
\begin{align}\label{eqn:IDS-exist-intro}
  N(E)   =   \lim_{L\to\infty}\frac{1}{|B_{L}|} \E \mathcal N(E; H_\omega^{ B_{L}})=\lim_{L\to\infty}\frac{1}{|B_{ L}|} \mathcal N(E; H_\omega^{ B_{ L}}) \ \   {\textrm{for all continuity points of}}\; N(E).  
    \end{align} 
    The limit $N(E)$  is defined to be the integrated density of states of $H_\omega$.

    If, in addition, suppose the common distribution $P_0$ of $\{V_\omega(x)\}_{x\in \G}$ satisfies
$\inf \supp P_0=0$ and 
$P_0([0,\eps])\ge C\eps^\kappa$ for some $C,\kappa>0$. 
 Then    
 \begin{align}\label{eqn:lif-SG-intro}
  \lim_{E\searrow 0}   \frac{\log \big|\log  N  (E)\big|}{\log E}=-\frac{\log 3}{\log5}. 
\end{align}
\end{theorem}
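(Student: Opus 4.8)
Throughout set $d_f=\log 3/\log 2$ and $d_w=\log 5/\log 2$, the volume-growth and random-walk dimensions of $\G$, so that the target Lifshitz exponent is $d_f/d_w=\log 3/\log 5$. The proof has two independent halves: existence of the IDS, and the Lifshitz asymptotics.

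\emph{Existence of $N(E)$.} The plan is a Dirichlet--Neumann bracketing adapted to the self-similarity of $\G$. Fix a level $m$. By finite ramification, $B_L$ is covered by $M_L\asymp 3^{L-m}$ cells $\Lambda_1,\dots,\Lambda_{M_L}$, each a congruent copy of the level-$m$ gasket $\G_m$, pairwise overlapping only in corner vertices and joined through $O(3^{L-m})$ edges, while $\partial B_L$ contributes $o(|B_L|)$ further vertices. Writing $H^{D}_{\omega,\Lambda}$ for the compression of $H_\omega$ to $\ell^2(\Lambda)$ (keeping the $\G$-degrees on the diagonal) and $H^{N}_{\omega,\Lambda}$ for the variant with the intrinsic cell degrees, the decomposition $-\Delta=\sum_{\{x,y\}\in\mathcal E}(e_x-e_y)(e_x-e_y)^{*}$ together with eigenvalue interlacing gives
\[
\sum_{j}\mathcal N(E;H^{D}_{\omega,\Lambda_j})-O(3^{L-m})\ \le\ \mathcal N(E;H_\omega^{B_L})\ \le\ \sum_{j}\mathcal N(E;H^{N}_{\omega,\Lambda_j}).
\]
Since the cells carry i.i.d.\ potentials, taking $\E$, dividing by $|B_L|$ and letting $L\to\infty$ traps $\limsup_L$ and $\liminf_L$ of $|B_L|^{-1}\E\mathcal N(E;H_\omega^{B_L})$ between $|\G_m|^{-1}\E\mathcal N(E;H^{D}_{\omega,\G_m})-O(3^{-m})$ and $|\G_m|^{-1}\E\mathcal N(E;H^{N}_{\omega,\G_m})$; as $H^{D}_{\omega,\G_m}-H^{N}_{\omega,\G_m}$ has rank $O(1)$, these agree up to $O(3^{-m})$, and the one-step identity $\E\mathcal N(E;H^{D}_{\omega,\G_{m+1}})=3\,\E\mathcal N(E;H^{D}_{\omega,\G_m})+O(1)$ (three copies plus a bounded junction) forces $|\G_m|^{-1}\E\mathcal N(E;H^{D}_{\omega,\G_m})$ to be Cauchy in $m$. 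Hence $N_0(E):=\lim_L|B_L|^{-1}\E\mathcal N(E;H_\omega^{B_L})$ exists for every $E$; let $N$ be its right-continuous modification. For the almost-sure convergence I would observe that $\mathcal N(E;H_\omega^{B_L})$, as a function of the independent family $(V_\omega(x))_{x\in B_L}$, has bounded differences with constant $1$ (changing one coordinate is a rank-one perturbation), so McDiarmid's inequality yields $\P(|\mathcal N(E;H_\omega^{B_L})-\E\mathcal N(E;H_\omega^{B_L})|>|B_L|^{3/4})\le 2e^{-2|B_L|^{1/2}}$, which is summable in $L$; Borel--Cantelli then gives $|B_L|^{-1}\mathcal N(E;H_\omega^{B_L})\to N_0(E)$ a.s.\ for each fixed $E$, and running this over a countable dense set of $E$ together with monotonicity in $E$ upgrades it to all continuity points of $N$, which is \eqref{eqn:IDS-exist-intro}.

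\emph{Lifshitz tail, upper bound.} From the Neumann side of the bracketing, $N(E)\le C|\G_\ell|^{-1}\E\mathcal N(E;H^{N}_{\omega,\G_\ell})\le C\,\P(\lambda_1(H^{N}_{\omega,\G_\ell})\le E)$ for every $\ell$. I would take $\ell=\ell(E)$ as large as possible with $\lambda_2(-\Delta^{N}_{\G_\ell})\ge K_0E$, $K_0$ a large constant. By the spectral decimation of the gasket Laplacian (the works cited for $\sigma(-\Delta)$), the first nonzero Neumann eigenvalue of $\G_\ell$ is $\asymp 5^{-\ell}$ — the $5^{-\ell}$ scaling being precisely the random-walk dimension $d_w$ at work — so $\ell(E)=\log(1/E)/\log 5+O(1)$ and $|\G_{\ell(E)}|\asymp 3^{\ell(E)}\asymp E^{-d_f/d_w}$. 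If $\psi_0$ is a normalized ground state of $H^{N}_{\omega,\G_\ell}$ with energy $\le E$, split it as $\psi_0=a|\G_\ell|^{-1/2}\one+\psi_0^{\perp}$ with $\psi_0^{\perp}\perp\one$; using $\langle\psi_0,V_\omega\psi_0\rangle\ge 0$ one gets $\|\psi_0^{\perp}\|^2\le E/\lambda_2(-\Delta^{N}_{\G_\ell})\le 1/K_0$, hence $|a|^2\ge 1-1/K_0$, and for $K_0$ large relative to $\eta^{-1}$ the set $G=\{x:|\psi_0(x)|^2\ge\tfrac14|\G_\ell|^{-1}\}$ satisfies $|G|\ge(1-\eta)|\G_\ell|$, while $\sum_xV_\omega(x)|\psi_0(x)|^2\le E$ forces $V_\omega(x)\le 4E|\G_\ell|$ for all $x\in G$. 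Thus $\{\lambda_1(H^{N}_{\omega,\G_\ell})\le E\}$ implies that some subset of $\G_\ell$ of relative size $\ge 1-\eta$ carries potential values $\le 4E|\G_\ell|$, and $4E|\G_\ell|\asymp(3/5)^\ell\downarrow 0$; a union bound over the $\le e^{H(\eta)|\G_\ell|}$ such subsets, together with $P_0([0,s])\to P_0(\{0\})<1$ as $s\downarrow 0$, gives — after fixing $\eta$ small (depending only on $P_0(\{0\})$) and then $K_0=K_0(\eta)$ — the estimate $\P(\lambda_1(H^{N}_{\omega,\G_\ell})\le E)\le e^{-c|\G_\ell|}$. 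Therefore $N(E)\le e^{-c'E^{-d_f/d_w}}$, so $\limsup_{E\downarrow 0}\log|\log N(E)|/\log E\le -d_f/d_w$.

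\emph{Lifshitz tail, lower bound.} This is where the hypothesis $P_0([0,\eps])\ge C\eps^{\kappa}$ enters. Fix $\ell=\ell(E)$ minimal with $\lambda_1(-\Delta^{D}_{\G_\ell})\le E/2$ (again $\asymp 5^{-\ell}$, so $|\G_\ell|\asymp E^{-d_f/d_w}$), pack $M_L\asymp 3^{L-\ell}$ pairwise disjoint cells $\Lambda_j\cong\G_\ell$ into $B_L$, and let $\phi_j$ be the Dirichlet ground state of $\Lambda_j$ (so $\phi_j$ vanishes at the corners of $\Lambda_j$), extended by $0$ to $B_L$. Finite ramification makes the $\phi_j$ disjointly supported, mutually $H_\omega^{B_L}$-orthogonal, and $\langle\phi_j,(-\Delta^{B_L})\phi_j\rangle=\lambda_1(-\Delta^{D}_{\Lambda_j})\|\phi_j\|^2\le\tfrac E2\|\phi_j\|^2$. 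On the event $\mathcal A_j=\{V_\omega\le E/2\text{ on }\supp\phi_j\}$ this yields $\langle\phi_j,H_\omega^{B_L}\phi_j\rangle\le E\|\phi_j\|^2$, so $\{\phi_j:\mathcal A_j\text{ holds}\}$ witnesses $\mathcal N(E;H_\omega^{B_L})\ge\#\{j:\mathcal A_j\text{ holds}\}$; taking $\E$,
\[
\frac{1}{|B_L|}\,\E\mathcal N(E;H_\omega^{B_L})\ \gtrsim\ 3^{-\ell}\,\P(\mathcal A_1)\ \ge\ 3^{-\ell}\big(C(E/2)^{\kappa}\big)^{|\G_\ell|}\ \ge\ e^{-c\,E^{-d_f/d_w}\log(1/E)},
\]
and letting $L\to\infty$ gives $N(E)\ge e^{-c\,E^{-d_f/d_w}\log(1/E)}$, hence $\liminf_{E\downarrow 0}\log|\log N(E)|/\log E\ge -d_f/d_w$. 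Combining the two bounds proves \eqref{eqn:lif-SG-intro}.

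\emph{Main obstacle.} The two places needing genuine care are: making the Dirichlet--Neumann bracketing rigorous on the metric ball $B_L$, which is not itself a self-similar cell, so one must control precisely the number of covering cells, the junction set, and the boundary $\partial B_L$; and the entropy-versus-probability trade-off in the upper bound, which is exactly why one must first spread the ground state over a $(1-\eta)$-fraction of a cell using the spectral gap $\lambda_2(-\Delta^{N}_{\G_\ell})\asymp 5^{-\ell}$ rather than the trivial bound $\lambda_1(-\Delta^{N}_{\G_\ell})=0$. The remaining inputs — the $5^{-\ell}$ scaling of the bottom Dirichlet/Neumann eigenvalues from spectral decimation, Cram\'er-type large deviations, and eigenvalue interlacing — are standard.
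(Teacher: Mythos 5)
Your proposal is correct in substance and follows the same Dirichlet--Neumann bracketing skeleton, but three steps take genuinely different routes from the paper. (i) For the almost-sure convergence in \eqref{eqn:IDS-exist-intro}, the paper applies the strong law of large numbers to the i.i.d.\ counting functions on the disjoint truncated cells (Theorem~\ref{thm:IDS-exist-gen}); you instead use the bounded-differences (McDiarmid) concentration of $\mathcal N(E;H_\omega^{B_L})$ plus Borel--Cantelli, which is valid and sidesteps the disjointness surgery. (ii) For the Lifshitz upper bound, the paper runs Temple's inequality (Proposition~\ref{prop:temple}) with the constant test function, a truncated potential, and Hoeffding; you use the spectral gap $\lambda_2(-\Delta^{\G_\ell,N})\asymp 5^{-\ell}$ to force the ground state to spread over a $(1-\eta)$-fraction of the cell, then an entropy-versus-probability union bound over subsets. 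This is a sound alternative; note, though, that the gap bound (and the Dirichlet bound $\asymp 5^{-\ell}$ in your lower bound) is not simply quotable from the literature --- it is one of the paper's technical contributions (Propositions~\ref{prop:Neumann-ev} and~\ref{prop:Diri-ev}, proved via Teplyaev's decimation formula and an iteration estimate), so your sketch implicitly relies on proving exactly that. (iii) For the lower bound you avoid the modified Dirichlet operators \eqref{eqn:Lap-D} and Lemma~\ref{lem:A3} altogether by inserting disjointly supported Dirichlet ground states, vanishing at cell corners, directly into $B_L$; since edges leave a $2^\ell$-cell only through its three extreme vertices, these test functions have full-graph Rayleigh quotient equal to the cell quotient and distinct cells do not interact, so this is a somewhat more elementary argument than the paper's route through $H^{\wt\G_\ell,D}$ and the inner triangle $\wt T$.

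One step needs repair: your Neumann-side inequality $\mathcal N(E;H_\omega^{B_L})\le\sum_j\mathcal N(E;H^{N}_{\omega,\Lambda_j})$, with the full potential on every cell, does not follow from the quadratic-form comparison, because the cells overlap at corner vertices and the summed cell potentials exceed $\ipc{f}{Vf}$ there, so the hypothesis of Lemma~\ref{lem:NH<NH12} points the wrong way. In the existence argument this is harmless: doubling the potential at the junction vertices is a diagonal perturbation of rank $O(3^{L-m})$, absorbed by the slack you already carry. In the Lifshitz upper bound, however, a rank-type correction produces an additive error of order $3^{-\ell}\asymp E^{\alpha/\beta}$ after normalization, which would swamp the target bound $e^{-cE^{-\alpha/\beta}}$; the correct fix is the paper's device \eqref{eqn:N-brack} of placing $\tfrac12 V$ on each cell (which requires $V\ge 0$, available under the Lifshitz hypothesis), after which your delocalization-plus-union-bound argument goes through verbatim with the threshold $4E|\G_\ell|$ replaced by $8E|\G_\ell|$, still tending to $0$.
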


\begin{remark}
In general, to relate the finite volume approximations to the infinite operator, it is convenient to set $H_\omega^{B_L}$ to be zero in the
complement of $B_L$, corresponding to the so-called simple or zero Dirichlet boundary conditions. 
This is how one can interpret the operators in $H_\omega^{B_L}$ in \eqref{eqn:IDS-exist-intro}.
The finite volume operator can however take  many rather arbitrary choices of boundary conditions: free, Neumann, or wired in some way. We will state a more general version of the existence result \eqref{eqn:IDS-exist-intro} in Theorem~\ref{thm:IDS-exist-gen} in Section~\ref{sec:exi-ids}, where we see boundary conditions do not affect the limiting IDS. 
\end{remark}

\begin{remark}
We know by \eqref{eqn:AM-as-spectrum1} that the bottom of the spectrum $\sigma(-\Delta+V_\omega)$ is 0 assuming $\inf \supp V=0$. The limit \eqref{eqn:lif-SG-intro} is a weak version of what we expect to be a stronger form of the asymptotic behavior of the IDS near 0, 
\begin{align}\label{eqn:lif-gen}
    N(E)\sim C_1e^{-C_2E^{-\tau}},\ \tau=\frac{\log 3}{\log 5}, \ \ \ {\rm as}\ E\searrow 0. 
\end{align}
(See Figure~\ref{fig:gasket-ids}(b) for a numerical illustration.)
Such a drastic thinning tail of the IDS as in Eqs.~\eqref{eqn:lif-SG-intro} and \eqref{eqn:lif-gen} is referred to as Lifshitz tails. 
Note that for the free Laplacian $-\Delta$ on the Sierpinski gasket, the IDS $N(E)$ near the bottom $0$ of $\sigma(-\Delta)$ vanishes at the rate $N(E)\sim CE^{\tau}$  as $E\searrow 0$, as obtained in \cite{fuku1988,fuku1992}. In contrast to the polynomial behavior of the free Laplacian, the IDS for a random Schr\"odinger operator exhibits a different extreme behavior near the bottom such as \eqref{eqn:lif-gen}. The double-log asymptotic behavior \eqref{eqn:lif-SG-intro} will be proved in Section~\ref{sec:lif}.  We first discuss more background about Lifshitz tails to finish the introduction. 
\end{remark}

%%%%%%%%%%%%%%%%%%%%%%%%%%%%%%%%%%%%%%%%%%%%%%%%%%%%%%%%%%%%%%%%%%%%%%%%%%%%%%%%%%%%%%%%%%%%%%%%%%%%%%%%%%%%%%%%

 \subsection{More background and historical work on the Lifshitz tails}\label{sec:background}
 
On the (continuum) Sierpinski gasket in $\R^2$, Lifshitz tails of the IDS were first proved for the Laplacian with Poisson obstacles in \cite{pietruska1991lifschitz}. Approximately the same time, similar Lifshitz tails were obtained for random Schr\"odinger operators on general nested fractals in $\R^d,d\ge2$ in \cite{shima1991lifschitz}.  
More recently, there are more generalizations in \cite{balsam2023density} for the differential case on 
nested fractals with good labeling properties.   All these works are for continuous/differential operators on the Sierpinski gasket or other continuous fractals. In this context, Eq.~\eqref{eqn:lif-SG-intro} of Theorem~\ref{thm:IDS-exist-lif} extends the Lifshitz tails to the Anderson model on the discrete/graphical Sierpinski lattice $\G$.  

The original Lifshitz tails phenomenon was first identified in the 1960s by I.M. Lifshitz  \cite{lif1965}. 
It has since been extensively studied with rigorous proof for various random models on $\R^d$ or $\Z^d$. 
We do not discuss all the related works here, but we refer readers to \cite[Section 6]{kirsch2007invitation} and  \cite[Chapter 4]{aizenman2015random} for a thorough review. Note that the original Lifshitz tail of the IDS $N(E)$ on $\R^d$ or $\Z^d$ is asymptotically $ N(E)\sim C_1e^{-C_2E^{-d/2}}$, as $ E\searrow0$ (assuming the bottom of the spectrum is at 0). The index $d/2$ is usually referred to as the Lifshitz exponent. One obtains a different Lifshitz exponent $\tau=\log3/\log5$ on the Sierpinski gasket (\cite{pietruska1991lifschitz,shima1991lifschitz}) and on the Sierpinski lattice \eqref{eqn:lif-SG-intro}. 
To relate $\tau$ to the exponent $d/2$, one denotes $\tau=d_s/2$, where $d_s$ is the `spectral dimension' of the gasket. 

Actually, there is a more intrinsic way to link the Lifshitz exponent to two other parameters in the so-called \emph{Heat Kernel Bound} $\mathrm{HK}(\alpha,\beta)$, a property for the free Laplacian on the corresponding space.  The Euclidean space/lattice $\R^d$ or $\Z^d$ satisfies $\mathrm{HK}(d,2)$, while the Sierpinski gasket or Sierpinski lattice satisfies $\mathrm{HK}(\log3/\log2,\log5/\log 2)$. In either case, we see that the Lifshitz exponent is given by the ratio of the two parameters $\alpha/\beta$. 
We are very interested in whether the Lifshitz singularity with exponent $\alpha/\beta$ holds for the Anderson model on more general graphs with the \emph{Heat Kernel Bound} $\mathrm{HK}(\alpha,\beta)$, without certain additional regularity/self-similarity/good labeling properties of the graph.

    For random differential/continuous operators on the Sierpinski gasket in $\R^2$ (or more generally, on nested fractals in $\R^d$), there are two main ways in the literature to prove the existence of the IDS and  Lifshitz tails. 
     \begin{itemize}
         \item  One way is the works of Pietruska-Paluba, Balsam, Kaleta, and Olszewski 
\cite{pietruska1991lifschitz,balsam2023density}.  The existence of the IDS is obtained by the convergence of the expected values of the underlying Laplace transforms. Then the Lifshitz tail is obtained by the long time behavior of the associated Laplace transform. 
         \item The other method is the work of  Shima \cite{shima1991lifschitz}.  The existence of the IDS was obtained directly as the limit of the finite volume IDS, by the law of large numbers. Then the Lifshitz tail of the IDS is obtained by the  Dirichlet--Neumann bracketing method, combined with the specific bound  by the Dirichlet form due to Kusuoka \cite{dobrushin1993lecture}.          
     \end{itemize}
    In this work, we extend the existence and the Lifshitz tail of the IDS of \cite{shima1991lifschitz} to the discrete  Sierpinski lattice $\G$,  using an adapted Dirichlet and Neumann bracketing method for $\G$.  There are numerous proofs of Lifshitz tails for different models using the method of large deviations.  
The idea of using Dirichlet--Neumann bracketing for Lifshitz tails
first appeared in a physics paper \cite{harris1973rigorous}.  The original Dirichlet and Neumann bracketing method was first established for
continuum models in $\R^d$ by Kirsch and Martinelli \cite{kirsch1983large}, with the advantage of being very close to Lifshitz's intuition. It was later extended by Simon \cite{simon1985lifschitz} to the $\Z^d$ setting, where there are some technical aspects special to the discrete model, see more discussion in \cite[\S]{kirsch2007invitation}, \cite[\S 4.3]{aizenman2015random}.
       The main technical difficulties of adapting the bracketing method to the Sierpinski lattice (which are not present for $\Z^d$ or continuum Sierpinski gasket cases) are: 
\begin{enumerate}[(i)]
\item There is not a natural disjoint partition of the Sierpinski lattice $\G$.   One has to carefully treat the overlap of the subdomains and the associated edge energy in order to bound the Hamiltonian. \item 
       The required bounds on the (low lying) eigenvalues of the Dirichlet or Neumann Laplacian on a finite Sierpinski triangle are not derived previously in the literature. These estimates might have independent interests in studying the spectrum of the associated Anderson model.
\end{enumerate}
       One of our goals is to describe some of these technical difficulties where Dirichlet--Neumann bracketing on the  Sierpinski lattice is not so
common,  honoring the bracketing method by  Kirsch-Martinelli--Simon and the adaption by Shima.  
 These results may also be useful to further study the Anderson localization and other open questions for random Schr\"odinger operators on the Sierpinski lattice, and on more general discrete graphs.

\subsection{Outline}
The rest of this article is organized as follows. 
\begin{itemize}
\item In Section~\ref{sec:pre}, we introduce background and preliminaries on the Sierpinski lattice. 
\item In Section~\ref{sec:spe}, we prove Theorem~\ref{thm:AM-as-spectrum}, the almost sure spectrum of the Anderson model on the Sierpinski lattice.
\item In Section~\ref{sec:exi-ids}, we study the partition of the Sierpinski lattice and show the  existence of the IDS under different boundary conditions. 
\item In Section~\ref{sec:lif}, we prove  the   Lifshitz tail \eqref{eqn:lif-SG-intro}, first the upper bound using the Neumann bracketing, and then the lower bound using the (modified) Dirichlet bracketing. 
\end{itemize}

Throughout the paper, constants such as $C$, $c$, and $c_i$ may change from line to line. 
We will use the notation $X\lesssim Y$ to mean $X\le cY$, and $X\gtrsim Y$ to mean $X\ge cY$, for some
 constant $c$ depending only on $\Gamma$. If $X\lesssim Y\lesssim X$, we may also write $X\approx Y$.

\section{Preliminaries}\label{sec:pre}

In this section, we collect several useful facts about Sierpinski lattices.   
We refer readers for more background about the Laplacian on Sierpinski lattices to \cite{shima1991eigenvalue,teplyaev1998spectral}, about random walks on general graphs to \cite{barlow2017random}, and about random Schr\"odinger operators to \cite{kirsch2007invitation,aizenman2015random}.

Let $\Gamma=(\G,\mathcal E)$ be the Sierpinski lattice defined in \eqref{eqn:SG-def}. $\Gamma$, or $\G$, is also called the full Sierpinski lattice with empty boundary/corner, while the one without the symmetric image $\G_n'$ is referred to as the  right half Sierpinski lattice with the boundary or corner $O=(0,0)$. 
Let $A\subseteq\G$ be a subset of vertices. The exterior  boundary of $A$ is   $\partial A=\{ x\not\in A: \exists y \in A \ \ {\rm with}\ \  x\sim y \}, $
 and the interior boundary is defined as $\partial^i A=\partial(\G\backslash A) $. The subset $A$ induces a subgraph  $\Gamma_A=(A,\mathcal E_A),$  where $\mathcal E_A=\{(x,y)\in \mathcal E: x,y\in A\}$.  
When there is no ambiguity, we may identify a graph or a subgraph with its vertex set and vice versa. For example,  we may call either $\Gamma_A$ or  just $A$ a subgraph of the Sierpinski lattice. We may also abuse the notation and write $x\in \Gamma_A$ if $x\in A$. 
The vertex degree of $x$, $\deg(x)=\#\{y: x  \sim y\}$ is the number of neighbors of $x$. Notice on the full Sierpinski lattice, $\deg(x)\equiv 4$, while on the right half Sierpinski lattice, $\deg(O)=2$ and $\deg(x)=4$ for $x\neq O$. 
As usual, a ball in $\Gamma$,  centered at $x$, with radius $r$, is defined as $B(x,r)=\{y:d(x,y)\le r\}$, where the natural metric $d(x,y)=$ length of the shortest path from $x$ to $y$.  For any subset $A\subseteq\G$, we denote by $|A|=\#\{x: x\in A\}$ the cardinality of $A$. 

For any $L\in \N$, $\G_L$ in \eqref{eqn:SG-def} induces a subgraph. The entire Sierpinski lattice consists of infinitely many translations of $\G_L$, denoted as $\{\G_{L,j}\}_{j=1}^\infty$, glued together at corner points. We  call either $ \G_L$ or any of its translations a $2^L$-triangle. The extreme points/vertices of $\G_{L,j}$ are the three vertices of the biggest triangle, i.e., the interior boundary of $\G_{L,j}$. We say that two $2^L$-triangles are adjacent if they share an extreme point. Due to the recursive relation \eqref{eqn:T-rec}, 
\begin{align}
    |\G_L|=\frac{1}{2}(3^{L+1}+3),\ \ \ L=0,1,2,\cdots. 
\end{align}
Suppose $L\ge \ell$. Then the $2^L$-triangle $\G_L$ consists of $3^{L-\ell}$ many $2^\ell$-triangles $ \G_{\ell,j},j=1,\cdots,3^{L-\ell}$. All these $ \G_{\ell,j}$ are subgraphs isometric to $\G_\ell$.

We consider functions on the vertices $\G$, which will be denoted by the function space 
$C(\G):=\C^{\G}=\{f: \G\to  \C \}$.  
 The space $\ell^2(\G)$ is defined via the $\ell^2$ norm induced by the usual (non-weighted) inner product  $\langle f,g\rangle:=\sum_{x\in \G} f(x) \overline{ g(x)}. $
The subspace  $\ell^2(A)$ is defined accordingly for any subset $A\subseteq\G$. The  Laplacian $\Delta$ on the full Sierpinski lattice $\G$ is defined as in \eqref{eqn:Lap}. It is a bounded nonpositive selfadjoint operator in  $\ell^2(\G)$. For $f,g\in \ell^2(\G)$, the following Discrete Gauss–Green theorem holds:
\begin{align}\label{eqn:gauss-green}
    \sum_{x\in \G}g(x)\Delta f(x)=-\frac{1}{2}\sum_{x\in \G}\sum_{\substack{y\in \G\\
    y\sim x}}\big(f(x)-f(y)\big)\big(g(x)-g(y)\big).
\end{align}
The probabilistic Laplacian $\Delta_p$ on the full Sierpinski lattice $\G$ is defined to be $\Delta_p=\frac{1}{4}\Delta$ since $\deg(x)\equiv 4,x\in \G$. The  structure of the  spectrum   $\sigma(\Delta_p)=\frac{1}{4}\sigma(\Delta)$ was fully determined in \cite{teplyaev1998spectral}.

%%%%%%%%%%%%%%%%%%%%%%%%%%%%%%%%%%%%%%%%%%%%%%%%%%%%%%%%%%%%%%%
        \section{Spectrum of the Anderson model on the Sierpinski lattice}\label{sec:spe}
    
In this section we prove Theorem~\ref{thm:AM-as-spectrum}.
\begin{proof}[Proof of Theorem~\ref{thm:AM-as-spectrum}]
 Suppose $ \supp P_0  \subseteq[a,b]$ for some $a\le b$, so that almost surely $a\le V_\omega \le b$. By the standard power series expansion argument of self-adjoint operators (see e.g. \cite[Problem~3.6]{teschl2014mathematical}), for  $a, b$ finite  we obtain the right hand side inclusion of \eqref{eqn:AM-as-spectrum1}, 
    \[  \sigma(-\Delta+V_\omega)\subseteq\sigma(-\Delta)+[a,b]. \] 
 If $a$ or $b$ is infinite, the inclusion is immediate since the right hand side is then an interval $(-\infty,\infty)$, $[a,\infty)$, or $(-\infty,b+6]$.
 The same argument as in the $a, b$ finite case implies \eqref{eqn:AM-as-spectrum3}, using the fact that $\sigma(-\Delta)\subseteq[0,6]$ and switching the role of $-\Delta$ and $V_\omega$.      
     
    It remains to prove the left hand side of \eqref{eqn:AM-as-spectrum1}. The outline follows from the proof of the $\Z^d$ case, see e.g. \cite[Theorem 3.9]{kirsch2007invitation}. There are two key ingredients needed for the Sierpinski lattice, stated in the following two claims.

\begin{claim}\label{clm:Vconst}
There is a set $\Omega_0$ of probability one such that the following is
true:     For any $\omega\in \Omega_0$, any  $\mu\in  \supp P_0 $, $\ell>0$ and $\varepsilon>0$,  there exists a $2^\ell$-triangle $\G_\ell\subseteq\G$ such that 
\begin{align}
    \sup_{x\in \G_{\ell}}|V_\omega(x)-\mu|<\varepsilon. 
\end{align}
\end{claim}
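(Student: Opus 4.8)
\textbf{Plan for the proof of Claim~\ref{clm:Vconst}.} The statement is a Borel--Cantelli / second moment argument exploiting the i.i.d.\ structure of $\{V_\omega(x)\}$ together with the fact that $\G$ contains infinitely many disjoint copies of any fixed $2^\ell$-triangle. First I would reduce to a countable situation: fix a countable dense subset $S\subseteq \supp P_0$ (e.g.\ $S = \supp P_0 \cap \Q$, which is dense in $\supp P_0$ by definition of the essential support), fix $\ell \in \N$, and fix $\varepsilon = 1/k$ for $k\in\N$. It suffices to produce, for each such triple $(\mu,\ell,k)$ with $\mu\in S$, a full-probability event on which infinitely many disjoint $2^\ell$-triangles $\G_{\ell,j}$ satisfy $\sup_{x\in\G_{\ell,j}}|V_\omega(x)-\mu|<1/(2k)$; intersecting over the countably many triples gives $\Omega_0$, and a general $\mu\in\supp P_0$, $\varepsilon>0$ is handled by approximating $\mu$ within $\varepsilon/2$ by some element of $S$ and choosing $k$ with $1/k<\varepsilon/2$.

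Next, the key probabilistic input. As recorded in Section~\ref{sec:pre}, the entire lattice decomposes into infinitely many translates $\{\G_{\ell,j}\}_{j\ge 1}$ of $\G_\ell$ glued only at corner points, so one can extract an infinite subfamily $\{\G_{\ell,j_i}\}_{i\ge1}$ that are \emph{pairwise vertex-disjoint} (discard every other one, or more carefully pass to a subfamily sharing no extreme points; since each $2^\ell$-triangle is adjacent to only boundedly many others, a positive-density disjoint subfamily exists). Let $m = |\G_\ell| = \tfrac12(3^{\ell+1}+3)$, which is finite. For each $i$, the event
\[
A_i = \Big\{\, \sup_{x\in\G_{\ell,j_i}}|V_\omega(x)-\mu| < \tfrac{1}{2k}\,\Big\}
\]
depends only on the $m$ coordinates $\{V_\omega(x): x\in\G_{\ell,j_i}\}$, and by independence and identical distribution,
\[
\P(A_i) = P_0\big((\mu-\tfrac{1}{2k},\mu+\tfrac{1}{2k})\big)^{m} =: p > 0,
\]
where $p>0$ precisely because $\mu\in\supp P_0$. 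Since the $\G_{\ell,j_i}$ are pairwise disjoint, the events $\{A_i\}_{i\ge1}$ are independent, and $\sum_i \P(A_i) = \sum_i p = \infty$. By the (second) Borel--Cantelli lemma, $\P(A_i \text{ i.o.}) = 1$; in particular almost surely at least one $\G_{\ell,j_i}$ — hence some $2^\ell$-triangle $\G_\ell\subseteq\G$ — satisfies the required bound.

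The only mild subtlety — and the step I would be most careful about — is the extraction of an infinite pairwise-disjoint subfamily of $2^\ell$-triangles so that the $A_i$ are genuinely independent: the naive tiling $\{\G_{\ell,j}\}$ shares corner vertices, which would correlate neighboring events. This is handled by a trivial graph-combinatorial observation (the ``adjacency graph'' on the $2^\ell$-triangles has bounded degree, so it has an infinite independent set, e.g.\ take triangles at mutual graph-distance $>\ell$), and once disjointness is arranged the rest is textbook Borel--Cantelli. Everything else (the countable reduction, the $\varepsilon/2$ approximation, the computation of $\P(A_i)$) is routine. I would close by noting that the analogous statement with $\G_\ell$ replaced by any of its translates, or by the ``right half'' triangle with a corner, follows identically, since only the cardinality $m$ and the i.i.d.\ property were used.
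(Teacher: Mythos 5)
Your proposal is correct and follows essentially the same route as the paper: the second Borel--Cantelli lemma applied to infinitely many vertex-disjoint $2^\ell$-triangles (each event having probability $P_0\big((\mu-\varepsilon,\mu+\varepsilon)\big)^{|\G_\ell|}>0$ precisely because $\mu\in\supp P_0$), followed by intersecting full-measure events over a countable family of parameters with $\mu$ ranging in a countable dense subset of $\supp P_0$; your explicit extraction of a disjoint subfamily and the $\varepsilon/2$-approximation step merely spell out what the paper leaves implicit. One small correction: $\supp P_0\cap\Q$ need not be dense in $\supp P_0$ (e.g.\ $P_0=\delta_{\sqrt{2}}$ has support disjoint from $\Q$), but a countable dense subset of $\supp P_0$ always exists since it is a subset of the separable space $\R$, which is all the argument requires.
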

\begin{remark}
    This is a ``Sierpinski lattice analogue'' of the result on $\Z^d$. It is important that the full probability set is independent of $\mu,\ell$ and $\eps$. We only need the existence of some (one) triangle for any size $\ell$. The conclusion can be made stronger by requesting infinitely many triangles;  see the analog for $\Z^d$ in e.g. \cite[Proposition 3.8]{kirsch2007invitation}.
\end{remark}

\begin{claim}\label{clm:ef-move}

Eigenvalues of $-\Delta$ are dense in the spectrum $\sigma(-\Delta)$.
Each eigenvalue has infinitely many compactly supported eigenfunctions. For any eigenfunction $\varphi$ supported on some $2^\ell$-triangle $\G_\ell$ and  for any other $2^\ell$-triangle $\G_{\ell,1}$,
there is a translation of $\varphi$, denoted as $\psi$, supported on $  \G_{\ell,1}$, belonging to the same eigenvalue. 
\end{claim}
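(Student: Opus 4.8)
The plan is to prove Claim~\ref{clm:ef-move} by combining two ingredients from the structure theory of $\sigma(-\Delta)$ on the Sierpinski lattice established in \cite{teplyaev1998spectral}, together with the graph automorphisms coming from the self-similar construction \eqref{eqn:T-rec}--\eqref{eqn:SG-def}. First I would recall that $\sigma(-\Delta) = \{6\} \cup \overline{\bigcup_m R^{-m}(\{3,5\})} \cup (\text{Julia set part})$ in the notation of \cite{teplyaev1998spectral}, where $R(x) = x(5-x)$ is the spectral decimation polynomial; the point is that all these spectral values except the two ``easy'' ones are obtained by backward iteration, and $6$ is an eigenvalue of infinite multiplicity with compactly supported eigenfunctions (the so-called ``localized eigenfunctions'' supported on small hexagons/triangles around each interior junction point). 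More precisely, I would invoke \cite[Theorem~2 and the eigenfunction discussion]{teplyaev1998spectral}: there is a dense set of eigenvalues, and for each such eigenvalue there exists a finitely supported eigenfunction; by the spectral decimation recursion, an eigenfunction at level $\ell$ supported on a $2^\ell$-triangle gives rise, via the decimation algorithm, to eigenfunctions supported on $2^{\ell'}$-triangles for all $\ell' \ge \ell$ belonging to a related (decimated) eigenvalue. This handles the ``infinitely many compactly supported eigenfunctions per eigenvalue'' part: starting from one such $\varphi$ on $\G_\ell$, spectral decimation (or simply translating, see below) produces infinitely many more on arbitrarily large triangles.

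Next, for the translation statement: the Sierpinski lattice $\G$ is built out of countably many isometric copies $\{\G_{\ell,j}\}_j$ of the $2^\ell$-triangle $\G_\ell$, glued at corner points, as noted in Section~\ref{sec:pre}. I would make precise the following: for any two $2^\ell$-triangles $\G_{\ell,j}$ and $\G_{\ell,j'}$ there is a graph isometry $\iota: \G_{\ell,j} \to \G_{\ell,j'}$ (a composition of the translations in \eqref{eqn:T-rec} and the $y$-axis reflection in \eqref{eqn:SG-def}, possibly followed by a rotation of the base triangle); since an eigenfunction $\varphi$ of $-\Delta$ supported on $\G_{\ell,j}$ with eigenvalue $\lambda$ must vanish on the three corner (extreme) points of $\G_{\ell,j}$ — otherwise the eigenvalue equation at a corner would force nonzero values outside $\G_{\ell,j}$, contradicting compact support on the triangle — the function $\psi := \varphi \circ \iota^{-1}$, extended by zero off $\G_{\ell,j'}$, satisfies $-\Delta \psi = \lambda \psi$ on all of $\G$. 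Here I use that the Laplacian is a local operator and that the only vertices of $\G_{\ell,j'}$ adjacent to anything outside $\G_{\ell,j'}$ are its three corners, at which $\psi$ vanishes to second order in the sense that $\psi$ and its ``neighbor sum'' both vanish. This gives the desired $\psi$ supported on $\G_{\ell,1}$ in the same eigenvalue. Finally, density of eigenvalues in $\sigma(-\Delta)$ is exactly \cite[Theorem~2]{teplyaev1998spectral} (the isolated eigenvalues together with the decimation preimages accumulate on the whole spectrum), so I would just cite it.

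The main obstacle, I expect, is carefully justifying that a compactly supported eigenfunction on a $2^\ell$-triangle necessarily vanishes at the three extreme points, and more generally pinning down exactly which eigenvalues admit such triangle-supported eigenfunctions and at which scales $\ell$. The subtlety is that not every eigenvalue of $-\Delta$ on $\G$ has an eigenfunction supported on a single $2^\ell$-triangle for every $\ell$ — the spectral decimation relates eigenfunctions at level $\ell$ to eigenvalue $\lambda$ with eigenfunctions at level $\ell+1$ to an eigenvalue $\lambda'$ with $R(\lambda') = \lambda$ (or similar), so ``the same eigenvalue'' part of the claim really relies on the translation argument within a fixed scale $\ell$, not on decimation across scales. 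I would therefore structure the proof so that: (a) density of eigenvalues and existence of \emph{some} compactly supported eigenfunction for each is quoted from \cite{teplyaev1998spectral}; (b) for a fixed eigenfunction $\varphi$ on $\G_\ell$, the vanishing-at-corners lemma is proved by a short local computation with \eqref{eqn:Lap}; and (c) the isometry/translation argument produces, for each eigenvalue that has a $2^\ell$-triangle eigenfunction, infinitely many such eigenfunctions (one per $2^\ell$-triangle, and there are infinitely many of those), yielding both assertions of the claim simultaneously. The only real care needed is the bookkeeping of which rigid motions of $\R^2$ restrict to graph isometries between the various $\G_{\ell,j}$, which follows from \eqref{eqn:T-rec} but should be stated explicitly.
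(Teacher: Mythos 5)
Your proposal is correct and follows essentially the same route as the paper: density of eigenvalues and the existence of infinitely many compactly supported eigenfunctions are quoted from \cite[Theorem 2]{teplyaev1998spectral}, and the relocation of an eigenfunction from one $2^\ell$-triangle to another is done via the self-similar structure and the constant degree $\deg(x)\equiv 4$, so that the local action of $-\Delta$ is identical on the two triangles. Your explicit corner-vanishing observation (an eigenfunction of the full Laplacian supported in a $2^\ell$-triangle must vanish at the three extreme vertices, which is what makes the zero extension outside the target triangle consistent with the eigenvalue equation at and near its corners) is a worthwhile elaboration of the step the paper compresses into the remark that the restrictions of $-\Delta$ agree as finite matrices.
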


Claim~\ref{clm:Vconst} is a consequence of the Borel--Cantelli lemma, and guarantees that the potential can be arbitrarily close to any  constant on a ``far away'' triangle.  Claim~\ref{clm:ef-move} is essentially a rephrase of \cite[Theorem 2]{teplyaev1998spectral}, which allows one to move a compactly supported eigenfunction anywhere on the lattice. The detailed proofs of the two claims are left to the end of the section. We first use them to complete the proof of Theorem~\ref{thm:AM-as-spectrum}.

 Suppose $\lambda\in  \sigma(-\Delta)$ and $\mu\in  \supp P_0 $. We will construct a Weyl sequence of $-\Delta+V_\omega$ associated with $\lambda+\mu$.  By Claim~\ref{clm:ef-move}, there is a sequence of 
compactly supported eigenfunctions $\varphi_k$ associated with eigenvalues\footnote{
If $\lambda$ itself is an eigenvalue with an eigenfunction $\varphi$, then we take $\lambda_k\equiv \lambda$ and $\varphi_k\equiv \varphi$.} $\lambda_k$ such that $|\lambda_k-\lambda|<1/k$ and $\|\varphi_k\|=1$ for all $k$. 
For each $\varphi_k$, since it is compactly supported, we assume $\supp \varphi_k$ is contained in some $2^{\ell_k}$-triangle $\G_{\ell_k}$.  
Now take $\omega\in \Omega_0$ given as in Claim~\ref{clm:Vconst}. Then for $\mu\in  \supp P_0 $,  $\ell=\ell_k$ and $\eps=1/k$,  there is a $2^{\ell_k}$-triangle $  \G_{\ell_k,1}$ (not necessarily the same as $\G_{\ell_k}$ where $\varphi_k$ is supported) such that 
\begin{align}
    \sup_{x\in   \G_{\ell_k,1}}\big|V_\omega(x)-\mu\big|<\frac{1}{k}. 
\end{align}
  
Next, we use Claim~\ref{clm:ef-move} to move $\varphi_k$ from $\G_{\ell_k}$ to $ \G_{\ell_k,1}$ which gives a new eigenfunction $\psi_k$ of $-\Delta$ satisfying $\|\psi_k\|=1$, $-\Delta\psi_k=\lambda_k \psi_k$, and $\supp \psi_k\subseteq \G_{\ell_k,1}$.

Then almost surely one has
 \begin{align}
     \|(-\Delta +V_\omega-(\lambda+\mu)\psi_k\|\le &   \|(-\Delta -\lambda)\psi_k\|+\|(V_\omega-\mu)\psi_k\|  \\ 
     \le & |\lambda_k-\lambda|\cdot \|\psi_k\|+ \sup_{x\in   \G_{\ell_k,1}} \big|V_\omega(x)-\mu\big| \cdot\|\psi_k\|\le \frac{2}{k}. 
 \end{align}
 Hence $\psi_k$ is a Weyl sequence of $-\Delta +V_\omega$ associated with $\lambda+\mu\in \sigma(-\Delta )+ \supp P_0 $. The   Weyl criterion (see e.g. \cite[Theorem VII.12]{reed1981functional} or \cite[Lemma 6.17]{teschl2014mathematical}) implies $\lambda+\mu \in \sigma(-\Delta+V_\omega)$ which completes the proof of  \eqref{eqn:AM-as-spectrum1}.
\end{proof}

The rest of this section contains the proofs of the two claims.
We first show 
\begin{proof}[Proof of Claim~\ref{clm:Vconst}]
Fix $\ell>0$,  $\mu\in  \supp P_0 $, and $\eps>0$. Let $\{\G_{\ell,j}\subseteq\G\}_{j\ge 1}$ be infinitely many disjoint $2^\ell$ triangles, and let $\mathcal E_j=\{\omega:|V_\omega(x)-\mu|<\varepsilon,\ x\in \G_{\ell,j}\}$. Since $\{V_\omega(x)\}$ are i.i.d.,  the probability of $\mathcal E_j$ is $\P(\mathcal E_j)= \P\big(\mu-\varepsilon,\mu+\varepsilon \big)^{|\G_{\ell,j}|}=p_{\mu,\ell,\varepsilon}>0,$
    which is independent of $j$. Hence, $\sum_{j=1}^\infty\P(\mathcal E_j)=\infty.$ 
  Since all $\G_{\ell,j}$ are disjoint and all the events $\mathcal E_j$ are independent, by the (second) Borel--Cantelli lemma (see e.g. \cite[Theorem 3.6]{kirsch2007invitation}), the set 
  \[\Big\{\omega\ | \ \omega\ {\textrm{belongs to infinitely many}}\ \mathcal E_j \Big\} \]
  has probability one, which implies that \begin{align}
      \Omega_{\ell,\varepsilon,\mu}=\big\{\omega\ |\ {\rm for\ some\ }2^\ell\ {\rm triangle}\ \G_{\ell}\subseteq\G: \sup_{x\in \G_{\ell}}|V_\omega(x)-\mu|<\varepsilon \ \big\}
  \end{align}
   has probability one.    Since the set $ \supp P_0\subseteq\R $ contains a countable dense set $\mathcal C_0$, then the countable intersection set 
   \begin{align}
      \Omega_0=\bigcap_{\substack{\ell\in \N, n\in \N\\
      \mu\in \mathcal C_0}}\Omega_{\ell,1/n,\mu} 
   \end{align}
   also has probability one, and satisfies the requirement of the claim.   
\end{proof}

Now we  verify  Claim~\ref{clm:ef-move}. 
\begin{proof}[Proof of Claim~\ref{clm:ef-move}]
By \cite[Theorem 2]{teplyaev1998spectral}, the spectrum of $\Delta$ is  $\sigma(\Delta)=-4(\mathcal D \cup \mathcal J)$, where $\mathcal D=\{-3/2\}\cup \big(\bigcup_{m=0}^\infty R_{-m}\{-3/4\}\big)$,  for $R(z)=z(4z+5)$ and $R_{-m}A$ the preimage of a set $A$ under the $m$-th composition power of $R$, is part of the eigenvalues of $-\Delta$. The set $\mathcal J$ is the Julia set of $R$ which  coincides with the set of limit points of $\mathcal D$. 
Hence, $ 4\mathcal D$ is dense in $\sigma(-\Delta)$.
Also by \cite[Theorem 2]{teplyaev1998spectral}, 
any eigenvalue  of $-\Delta$ has infinitely many compactly supported eigenfunctions.

It remains to verify the latter half, which allows us to move a compactly supported eigenfunction anywhere on the Sierpinski lattice. 
This follows from the repeated structure of the Sierpinski lattice.
More precisely, if $\varphi$ is an eigenfunction of $-\Delta$ supported on a $2^\ell$-triangle $\G_{\ell}$, then one can translate $\varphi$ to any other $2^\ell$-triangle $\G_{\ell,1}$ to create another eigenfunction with the same eigenvalue. This uses that $\deg(x)=4$ for all $x\in\G$, so that the restrictions of $-\Delta$ on $\G_\ell$ and $\G_{\ell,1}$ are exactly the same (as a finite dimensional matrix).
\end{proof}

\begin{remark}\label{rem:full-half}
The proof of Claim~\ref{clm:ef-move} is for the combinatorial Laplacian $\Delta$, or equivalently (4 times) the probabilistic Laplacian $4\Delta_p$,   on  the full Sierpinski lattice $\G=\bigcup_n(\G_n\cup\G_n')$.     
A similar argument (with accounting for the origin) applies to $-\Delta_p$ on the right half Sierpinski lattice, using the spectrum structure of $\sigma(\Delta_p)$ obtained in \cite{teplyaev1998spectral}. 
 \end{remark}

\section{Existence of the IDS for random Schr\"odinger operators on the Sierpinski lattice.}\label{sec:exi-ids}

Let   $H_\omega=-\Delta+V_\omega$ be the Anderson model as in \eqref{eqn:AM}.
 For simplicity, we omit the subscript $\omega$ and write $H=H_\omega$ and $V=V_\omega$. We first discuss restrictions of $H$ to a finite-dimensional subspace of $\ell^2(\G)$ with different boundary conditions.  
For any finite subset $A\subseteq\G$, we need to consider the following three Laplacians with different boundary conditions:
\begin{itemize}
    \item Simple boundary condition (the usual zero Dirichlet boundary condition): for $f\in \ell^2(A)$, 
    \begin{align}\label{eqn:Lap-simple}
        -\Delta^{A}f(x)=-\Delta^{A,S}f(x)= \deg_{\G\backslash A}(x) f(x)+\sum_{\substack{y\in A \\ y\sim x}}\big(f(x)-f(y)\big)=\deg(x)f(x)-\sum_{\substack{y\in A \\ y\sim x}}f(y).
    \end{align}
    \item Neumann boundary condition: for $f\in \ell^2(A)$, 
    \begin{align}\label{eqn:Lap-N}
        -\Delta^{A,N}f(x)= \sum_{\substack{y\in A \\ y\sim x}}\big(f(x)-f(y)\big)=\deg_A(x)f(x)-\sum_{\substack{y\in A \\ y\sim x}}f(y).
    \end{align}
    \item Modified Dirichlet boundary condition: for $f\in \ell^2(A)$, 
    \begin{align}\label{eqn:Lap-D}
        -\Delta^{A,D}f(x)=2\deg_{\G\backslash A}(x) f(x)+\sum_{\substack{y\in A \\ y\sim x}}\big(f(x)-f(y)\big)=\big(2\deg(x)-\deg_A(x)\big)f(x)-\sum_{\substack{y\in A \\ y\sim x}}f(y).
    \end{align}
\end{itemize}
The corresponding Schr\"odinger operator is $H^{A,\bullet}=-\Delta^{A,\bullet}+V^A$, where $\bullet$ is one of the above three boundary conditions and $V^A$ is just the restriction of $V$ to $A$. 

The zero/simple boundary Laplacian    $\Delta^{A}=\Delta^{A,S}$ corresponds to application of $\Delta$ on the subspace   $\{f\in \ell^2(\G):f(x)=0,x\notin A\}$, and is associated with the simple random walk killed upon exiting $A$.  The modified Dirichlet boundary Laplacian $\Delta^{A,D}=\Delta^{A}-\deg_{\G\backslash A}$ is a Dirichlet-type operator with  zero/simple boundary conditions 
on $\G\backslash A$, modified however at the interior boundary vertices of $A$, where it penalizes such boundary vertices. The Neumann boundary Laplacian $\Delta^{A,N}$ is the same as the regular graph Laplacian on the subgraph induced by $A$ (i.e. without consideration for the larger graph $\G$), and thus satisfies the same type of Gauss--Green theorem as the infinite lattice \eqref{eqn:gauss-green}: 
\begin{align}\label{eqn:gauss-green-finite}
     \ipc{f}{ -\Delta^{A,N}   f}_{\ell^2(A)} =&\, \frac{1}{2}\sum_{\substack{x,y \in A \\ x\sim y } }\big(f(x)-f(y)\big)^2. 
\end{align}

\begin{remark}
  Notice that in the interior, if $x\notin \partial^iA$,
 then $\deg(x)=\deg_A(x)$ and the three Laplacians behave in the same way. The simple and modified Dirichlet boundary conditions only add extra diagonal terms (corresponding to vertex degrees) to the interior boundary vertices. This will allow us to bound the extra edge energy terms in the quadratic form when we partition the graph into subgraphs. These modified vertex degree operators were first used in \cite{simon1985lifschitz} for $\Z^d$ case, demonstrating some technical aspects in the bracketing method
special to the discrete case; see also the discussion in \cite[\S 4.3]{aizenman2015random}. We will see how these boundary conditions will affect the energy partition on the Sierpinski lattice in Section \ref{sec:lif}.   
\end{remark}

For any $L\in \N$, let $\Gamma_{L,j}=(\G_{L,j},\mathcal E)$ be a $2^L$-triangle. The collection $\{\G_{L,j}\}_j$ forms a non-disjoint cover  of $\G$,
 \begin{align}\label{eqn:Part1-SG}
     \mathcal P=\{\G_{L,j}\}_{j\ge 1} ,\ \  \G =\bigcup_{j\ge 1} \G_{L,j}. 
 \end{align}
Two adjacent triangles $\G_{L,j}$  and $\G_{L,j'}$ share  only one extreme vertex. The overlap of $\mathcal P$ will not affect the Neumann bracketing side or the upper bound of the eigenvalue counting. For the Dirichlet bracketing side, we will need the following surgeries on $\mathcal P$ to deal with the overlap of two adjacent triangles.  
For any $\G_{L,j}\in \mathcal P$, let $\partial ^i\G_{L,j}=\{o_1,o_2,o_3\}$
be the three extreme vertices. Denote by $\wt \G_{L}=\G_{L}\backslash (\partial ^i\G_{L})$ the truncated $2^\ell$-triangle associated with $\G_\ell$ (Figure~\ref{fig:wtV}). Consider the following disjoint collection
\begin{align}\label{eqn:Part2-SG}
     \wt{\mathcal  P} =\{\wt \G_{L,j}\}_{j\ge 1},  \ \  \G=(\bigcup_{j\ge 1} \wt \G_{L,j})\cup \mathcal R,
 \end{align}
where $\mathcal R$ is the collection of all the extreme vertices of all $\G_{L,j}\in \mathcal P$. The associated subgraph is denoted as $\wt \Gamma_{L,j}=(\wt \G_{L,j},\mathcal E)$, see Figure \ref{fig:wtV}. Then $\{ \wt{\mathcal P},\mathcal R\}$ forms a partition of $\G$.   

\begin{figure}
    \centering
    \includegraphics[width=0.45\textwidth]{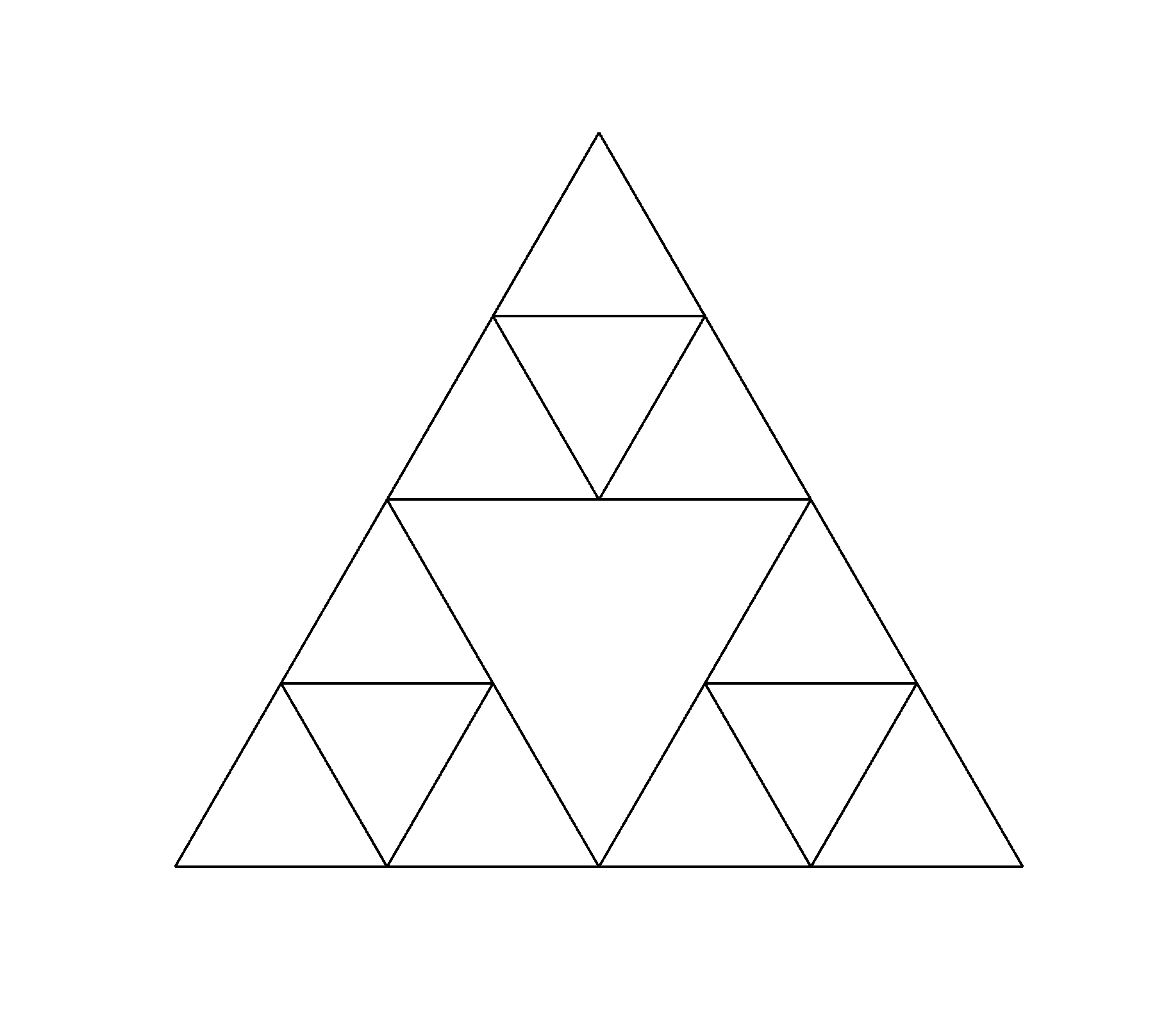}
    \includegraphics[width=0.45\textwidth]{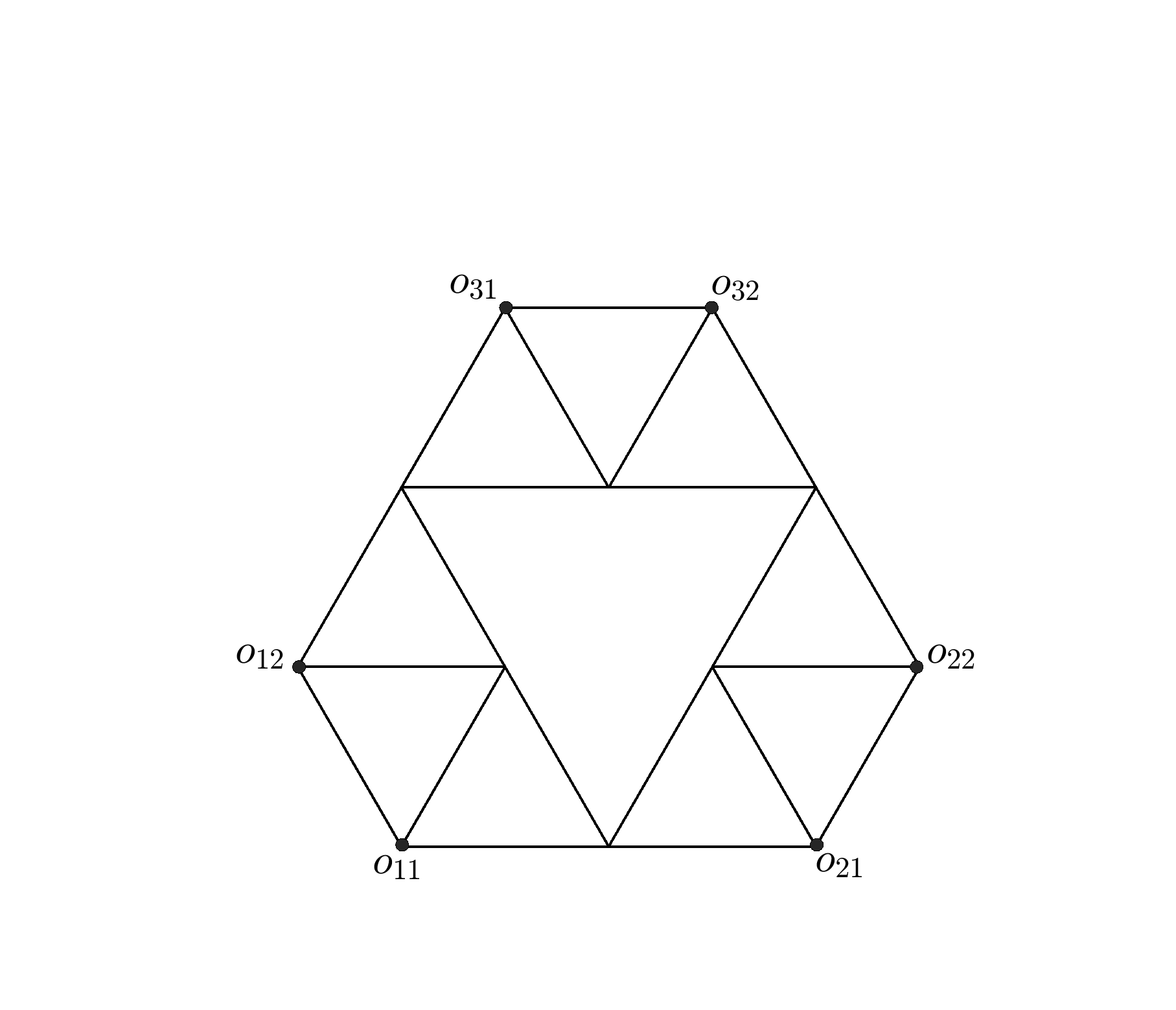}
    \caption{The left figure is the subgraph induced by $\G_2$. Removing the three extreme vertices in $\G_2$ leads to the truncated triangle $\wt \G_2$ on the right. The dashed edges are also removed from the subgraph induced by $\wt \G_2$.}
    \label{fig:wtV}
\end{figure}

We now denote by \[\mathcal N(E;X)=\#\{{\rm eigenvalues}\ E'\ {\rm of}\ X \ {\rm such\ that\ }\ E'\le E\}\] the eigenvalue counting function of an operator $X$ below the energy $E$. 
\begin{lemma}\label{lem:XL-induc} Let $X^L,Y^L$ be  any two of the operators in the set $\bigcup_{\bullet\in\{S,N,D\}}\{H^{\G_L,\bullet},H^{\wt\G_L,\bullet}\}$,
where $\bullet\in\{S,N,D\}$ is one of the three boundary conditions \eqref{eqn:Lap-simple}--\eqref{eqn:Lap-D}. Then 
    \begin{align}\label{eqn:XY}
        \big| \mathcal N(E; X^L)-\mathcal N(E; Y^L)\big|\le 9,
    \end{align}
and  
   \begin{align}\label{eqn:XL}
        \big| \mathcal N(E; X^L)-\sum_{i=1}^3\mathcal N(E; X^{L-1,i})\big|\le 30,
    \end{align}
    where $\{X^{L-1,i}\}_{i=1}^3$ are the corresponding Schr\"odinger operators on the three $2^{L-1}$-triangles contained in $\G_L$, with the same boundary condition as $ X^L $.  

\end{lemma}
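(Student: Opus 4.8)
The plan is to establish both inequalities via the min-max characterization of eigenvalues, comparing quadratic forms on the relevant finite-dimensional spaces. Recall that for a self-adjoint operator $X$ on a finite-dimensional space, $\mathcal N(E;X) = \#\{k : \lambda_k(X) \le E\}$ where $\lambda_k$ are the ordered eigenvalues, and that if $\mathcal Q_X \le \mathcal Q_Y$ as quadratic forms on a common space of dimension $D$, or more precisely if one form dominates the other up to a rank-$r$ perturbation, then $|\mathcal N(E;X) - \mathcal N(E;Y)| \le r$. So the whole lemma reduces to bounding the rank of the difference of the relevant Schr\"odinger operators.

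For \eqref{eqn:XY}: the operators $H^{\G_L,\bullet}$ all act on $\ell^2(\G_L)$ and differ from one another only in the diagonal terms at the interior boundary vertices $\partial^i \G_L = \{o_1,o_2,o_3\}$ — comparing \eqref{eqn:Lap-simple}, \eqref{eqn:Lap-N}, \eqref{eqn:Lap-D}, the off-diagonal (hopping) parts are identical and the potential $V^{\G_L}$ is the same, while the diagonal entries differ only at those three vertices. Hence $H^{\G_L,\bullet} - H^{\G_L,\bullet'}$ has rank at most $3$. Comparing an operator on $\G_L$ with one on $\wt\G_L = \G_L \setminus \{o_1,o_2,o_3\}$: the space $\ell^2(\wt\G_L)$ sits inside $\ell^2(\G_L)$ as the subspace of functions vanishing at $o_1,o_2,o_3$, a codimension-$3$ restriction, and on that subspace the various boundary-condition Laplacians again agree up to diagonal terms at the few vertices of $\wt\G_L$ that are neighbors of the removed corners. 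A generous count of all vertices where any two of the six operators can disagree (the three corners, plus their neighbors inside the triangle) together with the codimension-$3$ subspace restriction gives a total rank/codimension budget of at most $9$; I would just bound these crudely rather than optimize, since the constant $9$ is not sharp-critical. The relevant interlacing fact is the standard one: restricting a quadratic form to a subspace of codimension $m$ changes $\mathcal N(E;\cdot)$ by at most $m$, and a rank-$r$ additive perturbation changes it by at most $r$.

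For \eqref{eqn:XL}: $\G_L$ is the union of its three constituent $2^{L-1}$-triangles $\G_{L-1,1}, \G_{L-1,2}, \G_{L-1,3}$, which overlap only pairwise at three shared extreme vertices (the "midpoints" of the three sides of $\G_L$). Write $X^L$ for any of the six operators on $\G_L$ and consider the direct sum $\bigoplus_{i=1}^3 X^{L-1,i}$ on $\bigoplus_i \ell^2(\G_{L-1,i})$. There is a natural near-identification between $\ell^2(\G_L)$ and $\bigoplus_i \ell^2(\G_{L-1,i})$: the latter has dimension $|\G_L| + 3$ because each of the three shared vertices is counted twice. The two operators agree on all interior vertices not near the sharing points; they disagree because (a) the hopping terms across edges that straddle two sub-triangles (edges incident to a shared corner) are absent in the direct sum — but by the construction of the Sierpinski lattice, the only edges joining different $\G_{L-1,i}$ within $\G_L$ are those meeting at the three shared corners, a bounded number of edges; and (b) the diagonal boundary terms at the shared corners and at $\partial^i\G_L$ are apportioned differently. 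All of this is supported on the three shared corners, the three original extreme corners, and their neighbors — a uniformly bounded set — and combined with the dimension discrepancy of $3$, the total perturbation/restriction rank is bounded by the stated $30$ (again a crude, non-optimized constant). I would phrase this as: there is a partial isometry / coordinate projection relating the two spaces off a subspace of dimension $\le 30$, on which the two operators differ by a form of rank $\le 30$, hence the eigenvalue counting functions differ by at most $30$.

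The main obstacle is the bookkeeping in \eqref{eqn:XL}: one must carefully track, in the non-disjoint decomposition, exactly which edges and which diagonal degree-terms are lost or double-counted when passing from $H^{\G_L,\bullet}$ to the direct sum over sub-triangles, and verify that all discrepancies are localized at the three gluing vertices and the three outer corners (so that their number is $L$-independent). This is where the special geometry of the Sierpinski gasket graph enters — it is crucial that sub-triangles meet in single points rather than along edges, exactly the feature flagged in difficulty (i) of the introduction. Once locality is established, the numerical bound follows from a routine (if slightly tedious) count of affected vertices times their degree $4$, plus the constant dimension gap; I would not attempt to make the constants $9$ and $30$ optimal, only safely valid.
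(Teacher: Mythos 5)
Your treatment of \eqref{eqn:XY} is essentially the paper's argument: the six operators differ only through diagonal entries at the three extreme vertices of $\G_L$ (rank $\le 3$), resp.\ at the six neighbors of the removed corners in $\wt\G_L$ (rank $\le 6$), and passing from $\G_L$ to $\wt\G_L$ is a codimension-$3$ compression handled by Cauchy interlacing; that half is fine. The problem is \eqref{eqn:XL}. Your comparison is set up between $X^L$ on $\ell^2(\G_L)$ and the direct sum $\bigoplus_i X^{L-1,i}$ on $\bigoplus_i\ell^2(\G_{L-1,i})$, and you justify it by ``a partial isometry / coordinate projection relating the two spaces off a subspace of dimension $\le 30$.'' No such partial isometry exists: the natural map $f\mapsto(f|_{\G_{L-1,1}},f|_{\G_{L-1,2}},f|_{\G_{L-1,3}})$ is injective onto a codimension-$3$ subspace but is \emph{not} isometric (the norms disagree exactly on functions charging the three shared vertices), so ``the two operators differ by a form of rank $\le 30$'' has no meaning until you fix a common inner product, and interlacing/rank-perturbation bounds cannot be invoked across two different Hilbert space structures. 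Relatedly, your item (a) is wrong for the overlapping decomposition: no edges of $\G_L$ straddle two sub-triangles (every edge lies inside one $\G_{L-1,i}$), so nothing is ``absent'' there; what actually differs is the double-counted potential and degree terms at the shared corners and the agreement constraint at those vertices. Edges are only lost if you disjointify into truncated triangles $\wt\G_{L-1,i}$ plus the six extreme vertices, and that is precisely where the one genuinely non-routine idea of the lemma lives, which your proposal never identifies: to get the lower bound $\mathcal N(E;X^L)\ge\sum_i\mathcal N(E;X^{L-1,i})-C$ one must dominate the cut edge energy $(f(x)-f(y))^2\le 2f(x)^2+2f(y)^2$ by the extra diagonal penalty $2\deg_{\G\setminus A}$ built into the \emph{modified Dirichlet} Laplacian \eqref{eqn:Lap-D}, and then feed the resulting one-sided form inequalities into the counting lemmas (the paper's Lemma~\ref{lem:NH<NH12} for the Neumann/upper direction with the doubled potential $\wt V$ at shared corners, and Lemma~\ref{lem:A3} for the modified-Dirichlet/lower direction). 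Without this mechanism, the ``routine count of affected vertices'' does not produce the two-sided bound.

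A repaired version of your rank-counting idea is possible, but it must be carried out on a \emph{single} space: decompose $\ell^2(\G_L)=\bigl(\bigoplus_i\ell^2(\wt\G_{L-1,i})\bigr)\oplus\ell^2(\mathcal R)$ with $\mathcal R$ the six extreme vertices, note that $H^{\G_L,N}$ and $\bigoplus_i H^{\wt\G_{L-1,i},N}\oplus(\mathrm{diag})$ differ by a symmetric matrix whose off-diagonal part has all entries in rows/columns indexed by $\mathcal R$ (rank $\le 12$) and whose diagonal part is supported on the $18$ neighbors of $\mathcal R$, then convert boundary conditions via \eqref{eqn:XY}. This is rigorous and gives an $L$-independent constant, but it is larger than $30$ (you also pick up the $\le 6$ eigenvalues of the $\mathcal R$-block and the \eqref{eqn:XY} conversions), so as written your proposal neither substantiates the stated constant nor supplies the bracketing argument the paper actually uses; the weaker constant would still suffice for the IDS application, but the proof of the lemma as stated is not there.
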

The bounds in \eqref{eqn:XY} and \eqref{eqn:XL} essentially follow from the min-max principle, using applications to rank-one perturbations or orthogonal projections between matrices. We give the proof here, referencing some standard technical lemmas included in Appendix~\ref{sec:min-max}. We will see in the proof that the general upper bounds in the right hand side of \eqref{eqn:XY} and \eqref{eqn:XL} hold for any choice of the boundary conditions, and that there are better estimates for a specific $X^L$. 
\begin{proof}
The three boundary conditions  \eqref{eqn:Lap-simple}--\eqref{eqn:Lap-D} differ only at the 3 extreme vertices of the triangle $\G_L$. In other words, the operators $H^{\G_L,D},H^{\G_L}, H^{\G_L,N}$ all act on $\ell^2(\G_L)$ in the same way,
except for the 3 diagonal terms  at the 3 extreme vertices. Then by a perturbation argument (Lemma~\ref{lem:A1} Eq.~\eqref{eqn:NH-diag-pert}), for $X^L,Y^L$ being  any two of $\{H^{\G_L,D},H^{\G_L}, H^{\G_L,N}\}$, we have 
  $ \big| \mathcal N(E; X^L)-\mathcal N(E; Y^L)\big|\le 3$.

Next, we link the counting between $\G_L$ and its truncation $\wt \G_L$. 
 By the definition of the simple boundary condition \eqref{eqn:Lap-simple}, $H^{\wt\G_L}$ is the orthogonal projection (restriction) of $H^{\G_L,N}$ onto the subspace $\ell^2(\wt \G_L)$,
 since the diagonal vertex degree terms are 4 in all cases for non-extreme vertices.
 By Lemma~\ref{lem:A1} Eq.~\eqref{eqn:NH-cauchy-inter}, 
 \begin{align}\label{eqn:NH+2}
    \mathcal N(E;H^{\wt\G_L})\le  \mathcal N(E;H^{\G_L,N})\le  \mathcal N(E;H^{\wt\G_L})+3.
\end{align}
On the other hand, when we remove the extreme vertices $o_i$, $i=1,2,3$, from $\G_L$, we create 6 new interior boundary points in $\wt \G_L$, denoted as $o_{ij}\in \wt \G_L$, $i=1,2,3$, $j=1,2$ with $o_{ij}\sim o_i$ (see e.g. Figure~\ref{fig:wtV}). When we count eigenvalues on $\wt \G_L$ corresponding to the different boundary conditions, we only need to consider the differences in the vertex degree at these 6 points $o_{ij}$. Similar as on the $\G_L$,   the operators $H^{\wt\G_L,D},H^{\wt\G_L}, H^{\wt\G_L,N}$  act on $\ell^2(\wt \G_L)$  with exactly the same matrix elements, except for the 6 diagonal terms at  $o_{ij}\in \wt \G_L$. Then again by \eqref{eqn:NH-diag-pert}, for $X^L,Y^L$ being  any two of $\{H^{\wt\G_L,D},H^{\wt\G_L}, H^{\wt\G_L,N}\}$,  we obtain $ \big| \mathcal N(E; X^L)-\mathcal N(E; Y^L)\big|\le 6$. Together with \eqref{eqn:NH+2}, this yields
  \eqref{eqn:XY}. 

\begin{figure}
    \centering
     \includegraphics[width=0.5\linewidth]{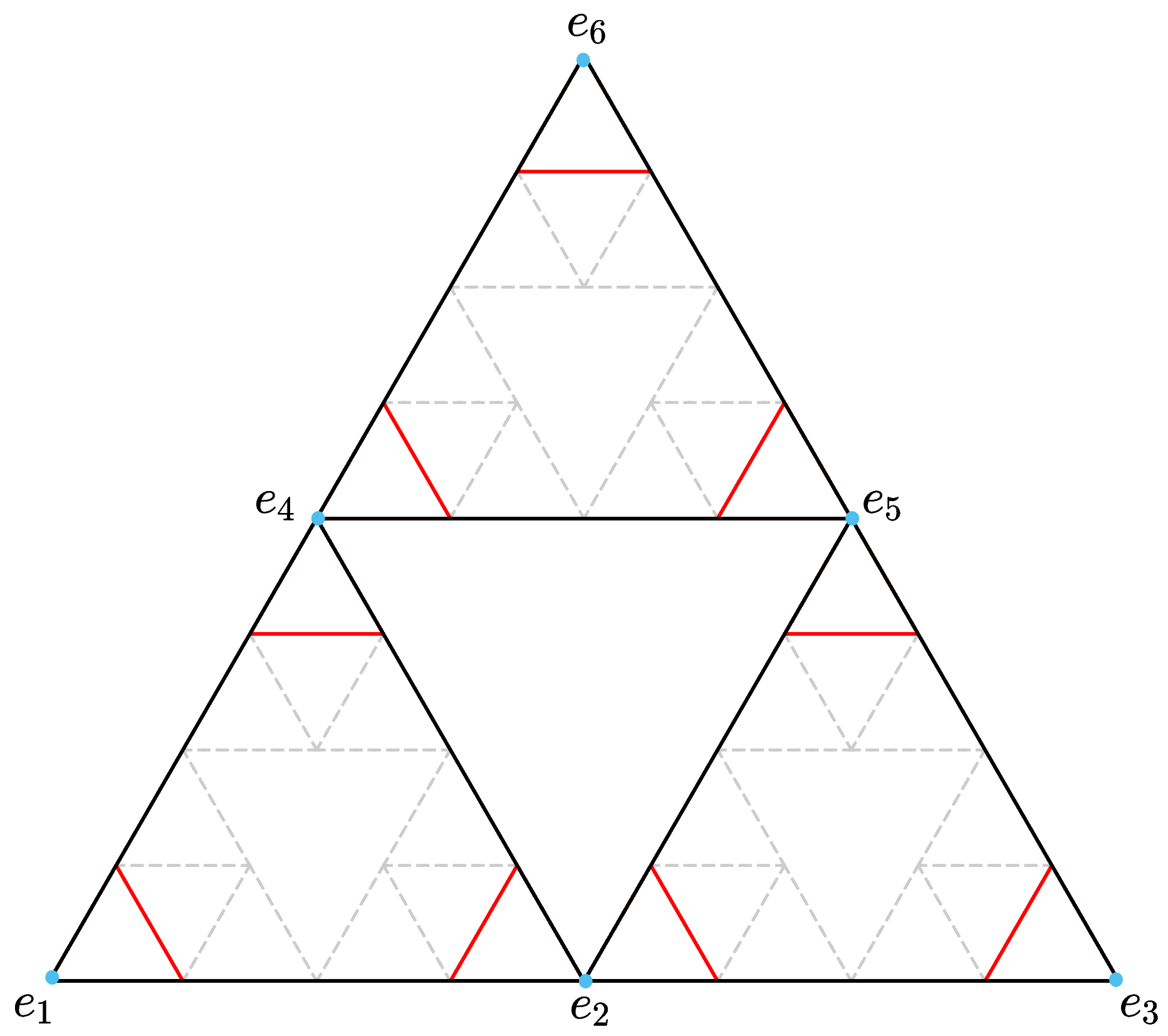}
    \caption{The $2^3$-triangle $\G_3$ (the big triangle) consists of three $2^2$-triangle $\{\G_{2,i}\}_{i=1}^3$. The three $2^2$-triangles have totally 6 extreme vertices $\{e_j\}_{j=1}^6$ (in blue).  
    Remove these extreme vertices, then the truncated triangles $\wt \G_{2,i}$ are disjoint, and $(\cup_{i=1}^3\wt \G_{2,i})\cup\{e_j\}_{j=1}^6$ form a disjoint partition of $\G_3$.}
    \label{fig:Vpart}
\end{figure}

For the upper bound in \eqref{eqn:XL}, we write the decomposition $\G_L=\cup_{i=1}^3\G_{L-1,i}$ where each $\G_{L-1,i}$ is a $2^{L-1}$-triangle isometric to $\G_{L-1}$. These three triangles only (pairwise) intersect at their three extreme vertices, $\cup_{i\neq j}\big(\G_{L-1,i}\cap\G_{L-1,j}\big)=\{e_1,e_2,e_3\}$ (see Figure~\ref{fig:Vpart}).   We consider only the Neumann boundary condition on each of these three triangles, since we can then use \eqref{eqn:XY} for the other boundary conditions. By the definition of the Neumann Laplacian \eqref{eqn:Lap-N},  we have
\begin{align}\label{eqn:Lap-N-quad}
    \ipc{f}{-\Delta^{\G_L,N} f}_{\ell^2(\G_L)}=\sum_{i=1}^3\ipc{f}{-\Delta^{\G_{L-1,i},N} f}_{\ell^2(\G_{L-1,i})}. 
\end{align}
Set $\wt V(x):=V(x)$ if $x\neq e_i$ and $\wt V_{e_i}:=2V_{e_i}$. Since the three $2^{L-1}$-triangles only overlap once at each $e_i$, then  
\begin{align}\label{eqn:Lap-N-quad-sch}
   \ipc{f}{\big(-\Delta^{\G_L,N}+\wt V^{\G_L}\big) f}_{\ell^2(\G_L)}=\sum_{i=1}^3\ipc{f}{\big(-\Delta^{\G_{L-1,i},N}+V^{\G_{L-1,i}} \big)f}_{\ell^2(\G_{L-1,i})}.   
\end{align}
 Hence, by Lemma~\ref{lem:NH<NH12} Eq.~\eqref{eqn:NH<H12},  we obtain
\begin{align}\label{eqn:412}
    \mathcal N(E; -\Delta^{\G_L,N}+\wt V^{\G_L})\le \sum_{i=1}^3\mathcal N(E; -\Delta^{\G_{L-1,i},N}+V^{\G_{L-1,i}}) .  
\end{align}
Since $\wt V^{\G_L}$ is an diagonal perturbation of $ V^{\G_L}$ with rank 3, then by Lemma~\ref{lem:A1} Eq.~\eqref{eqn:NH-diag-pert}, 
\begin{align}\label{eqn:413}
     \mathcal N(E; -\Delta^{\G_L,N}+\wt V^{\G_L})\ge   \mathcal N(E; -\Delta^{\G_L,N}+  V^{\G_L})-3=\mathcal N(E; H^{\G_L,N})-3. 
\end{align}
Putting \eqref{eqn:412} and \eqref{eqn:413} together with \eqref{eqn:XY} gives 
\begin{align}
    \mathcal N(E; X^L)\le \sum_{i=1}^3\mathcal N(E; X^{ L-1,i})+30 ,
\end{align}
for $X^L$ being any of $\{H^{\G_L,N},H^{\wt\G_L,D},H^{\wt\G_L}, H^{\wt\G_L,N}\}$. 

For the lower bound in \eqref{eqn:XL}, we need to split $\G_L$ into disjoint (truncated) smaller triangles in order to apply Lemma~\ref{lem:A3} (see Figure~\ref{fig:Vpart}).  Let $\G_L=\cup_{i=1}^3\G_{L-1,i}$ be as above. Let $\wt \G_{L-1,i}\subseteq\G_{L-1,i},i=1,2,3$ be the three truncated $2^{L-1}$ triangles isometric to $\wt \G_{L-1}$ as in \eqref{eqn:Part2-SG}. Denote by $\mathcal R=\{e_j\}_{j=1}^6:=\G_L\backslash\cup_{i=1}^3\wt \G_{L-1,i}$, where $e_j$ are   the 6 extreme vertices of $ \{\G_{L-1,i}\}_{i=1}^3$. 
Let $\mathcal H_i=\ell^2(\wt \G_{L-1,i}),i=1,2,3$ and $\mathcal H_0=\ell^2(\mathcal R)$.  Hence, $\mathcal H_i,i=0,1,2,3$ are orthogonal to each other and $\oplus_{i=0}^3  \mathcal H_i =\ell^2(\G_L)$. 

When we split the graph $ (\G_L,\mathcal E)$ into subgraphs $(\wt \G_{L-1,i},\mathcal E)$ (and the 6 isolated vertices $e_j$), we removed 18 edges connecting $\wt \G_{L-1,i}$ and $e_j$. 
Starting similarly to \eqref{eqn:Lap-N-quad}, we have, using \eqref{eqn:gauss-green},
\begin{align*}
    \ipc{f}{ -\Delta^{\G_L,N}   f}_{\ell^2(\G_L)}  
=& \sum_{i=1}^3\ipc{f}{ -\Delta^{\wt\G_{L-1,i},N}   f}_{ \mathcal H_i } +   \frac{1}{2}\sum_{\substack{x\sim y \\ x\, {\rm or}\,  y \in \mathcal R} } \big(f(x)-f(y)\big)^2 \\
 \le &  \sum_{i=1}^3\ipc{f}{ -\Delta^{\wt\G_{L-1,i},N}   f}_{ \mathcal H_i }  +  \sum_{\substack{x\sim y \\ x\, {\rm or}\,  y \in \mathcal R} } \big(f(x)^2+f(y)^2 \big)   \\
 \le  &  \sum_{i=1}^3\ipc{f}{ -\Delta^{\wt\G_{L-1,i},N}   f}_{ \mathcal H_i }  +\sum_{i=1}^3 
  \sum_{\substack{x\in \wt\G_{L-1,i} \\
 x\sim  \mathcal R}} 
 2f(x)^2 +8\sum_{j=1}^6 f(e_j)^2  \\
 \numberthis=& \sum_{i=1}^3\ipc{f}{ -\Delta^{\wt\G_{L-1,i},D}   f}_{ \mathcal H_i }   +   8\sum_{j=1}^6 f(e_j)^2. \label{eqn:energy-bound1}
\end{align*}
In the last line, we used that if $ x\in \wt\G_{L-1,i}$ and $x\sim\mathcal R$ (i.e. $ x\sim y$ for some $y\in  \mathcal R  $), then $x$ is an interior boundary point of $\wt\G_{L-1,i}$. Since 
 \[\deg_{\G\backslash \wt\G_{L-1,i}}(x)=\deg_{\G }(x)-\deg_{  \wt\G_{L-1,i}}(x)=\begin{cases}
     1, \ &x \ {\textrm{is an   interior boundary point of }}\ \wt\G_{L-1,i}\\
     0,  \ \ & {\rm otherwise}
 \end{cases},  \]
then by the definition \eqref{eqn:Lap-D} of the modified Dirichlet boundary condition, we have
\[\ipc{f}{ -\Delta^{\wt\G_{L-1,i},D}   f} =\ipc{f}{ -\Delta^{\wt\G_{L-1,i},N}   f}   +2\sum_{\substack{x\in \wt\G_{L-1,i} \\
 x\sim  \mathcal R}} f(x)^2, \] leading to \eqref{eqn:energy-bound1}.    
Adding the potential to \eqref{eqn:energy-bound1} and applying Lemma~\ref{lem:A3} with the subspaces $\wt \G_{L-1,i}$ and those spanned by each $e_j$, we get (dropping the contributions from the $e_j$ below)
\begin{align}\label{eqn:4.16}
      \mathcal N(E; -\Delta^{\G_L,N}+  V^{\G_L})\ge \sum_{i=1}^3\mathcal N(E; -\Delta^{\wt\G_{L-1,i},D}+V^{\wt \G_{L-1,i}}) .  
\end{align}
Similar to the upper bound, applying \eqref{eqn:XY} then implies  
\begin{align}
    \mathcal N(E; X^L)\ge \sum_{i=1}^3\mathcal N(E; X^{ L-1,i})-27 ,
\end{align}
for $X^L$ being any of $\{H^{\G_L,N},H^{\wt\G_L,D},H^{\wt\G_L}, H^{\wt\G_L,N}\}$. This completes the proof of \eqref{eqn:XL}. 
\end{proof}

The relation \eqref{eqn:XL} suggests the eigenvalue counting $\mathcal N(E; X^L)$ is almost (or very close to) a subadditive process (up to some constant shift). Applying \eqref{eqn:XL} inductively on each smaller $2^{L-j}$-triangle ($j=0,1,2,\cdots$) and then applying the ergodic theory/law of large numbers, we obtain the existence of the IDS for the Anderson model on the (infinite) Sierpinski lattice. 
\begin{theorem}\label{thm:IDS-exist-gen}
    The integrated density states $N(E)$ for the Anderson model $H=-\Delta+V$ \eqref{eqn:AM} on  the right half Sierpinski lattice  exists, and is a.s. a non-random function, which is defined by the following limit 
\begin{align}\label{eqn:IDS-exist-half}
        N(E)=\lim_{L\to\infty}\frac{1}{|\G_L|} \E\mathcal N(E; X^L)=\lim_{L\to\infty}\frac{1}{|\G_L|} \mathcal N(E; X^L),\ \  a.s. , 
    \end{align}
     where $X^L$ is any choice in  
     $\mathcal T_L=\bigcup_{\bullet\in\{S,N,D\}}\{H^{\G_L,\bullet},H^{\wt\G_L,\bullet}\}$.

     As a consequence, the integrated density states $N(E)$ of the Anderson model on  the full Sierpinski lattice,  exists as a non-random function, and can be defined by the following limit 
    \begin{align}\label{eqn:IDS-exist-full}
        N(E)=\lim_{L\to\infty}\frac{1}{|B_{L}|} \E \mathcal N(E; H^{ B_{L},\bullet})=\lim_{L\to\infty}\frac{1}{|B_{ L}|} \mathcal N(E; H^{ B_{ L},\bullet}), \ \   a.s.,  
    \end{align}
 with any boundary condition $\bullet=S,N$, or, $D$,  and where  
    $B_{L}=B(O,2^L)$    
     is the (graph metric) ball, centered at the origin $O$, with radius $2^L$.   
\end{theorem}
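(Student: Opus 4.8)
The plan is to deduce Theorem~\ref{thm:IDS-exist-gen} from the key recursive estimate \eqref{eqn:XL} of Lemma~\ref{lem:XL-induc} by setting up an (almost) subadditive/ergodic structure. First I would fix one choice of operator $X^L$ from $\mathcal T_L$ — by \eqref{eqn:XY} it does not matter which, since any two differ in their counting functions by at most $9$, and after dividing by $|\G_L|\approx 3^{L+1}/2$ this discrepancy vanishes in the limit. The recursion \eqref{eqn:XL} says that for each energy $E$,
\begin{align*}
    \Big| \mathcal N(E; X^L)-\sum_{i=1}^3\mathcal N(E; X^{L-1,i})\Big|\le 30,
\end{align*}
so if I write $a_L^\omega = \mathcal N(E; X^L)$ (suppressing the $E$-dependence), then $a_L^\omega$ is the sum of three copies of $a_{L-1}$ on the three sub-triangles, up to an additive error of size $30$. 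Iterating $j$ times, $\G_L$ decomposes into $3^j$ triangles of size $2^{L-j}$, and the accumulated error is bounded by $30(3^{j-1}+3^{j-2}+\cdots+1)\le 15\cdot 3^j$, i.e.
\begin{align*}
    \Big| \mathcal N(E; X^L)-\sum_{\text{$3^{L-\ell}$ sub-triangles of size }2^\ell}\mathcal N(E; X^{\ell,\cdot})\Big|\le 15\cdot 3^{L-\ell}.
\end{align*}

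Next I would take expectations and exploit the i.i.d.\ structure of $V$. Since the sub-triangles of a fixed size $2^\ell$ inside $\G_L$ are translates of $\G_\ell$ and the potential values on disjoint (up to shared corners) triangles are i.i.d., each $\E\mathcal N(E; X^{\ell,i})$ equals a common value, call it $g_\ell(E):=\E\mathcal N(E; X^{\G_\ell,\bullet})$ (the corner sharing again only contributes bounded errors, absorbed into constants). Dividing the expectation version of the displayed inequality by $|\G_L|$ and using $|\G_L|=\tfrac12(3^{L+1}+3)$, $|\G_\ell|=\tfrac12(3^{\ell+1}+3)$, one gets that the sequence $L\mapsto \frac{1}{|\G_L|}\E\mathcal N(E; X^L)$ is Cauchy: comparing levels $L$ and $\ell$ gives
\begin{align*}
    \Big|\frac{1}{|\G_L|}\E\mathcal N(E; X^L)-\frac{1}{|\G_\ell|}\E\mathcal N(E; X^{\G_\ell,\bullet})\Big|\lesssim 3^{-\ell},
\end{align*}
so the limit $N(E):=\lim_{L\to\infty}\frac{1}{|\G_L|}\E\mathcal N(E; X^L)$ exists for every $E$. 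Monotonicity in $E$ and right-continuity of $N$ are inherited from the counting functions (each $\mathcal N(\cdot;X^L)$ is a nondecreasing right-continuous step function, and these properties pass to the limit of a monotone sequence of averages, with right-continuity checked at continuity points or by a standard limiting argument), and the limit is non-random because it is defined by expectations.

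To get the almost sure statement — that the limit holds without the expectation — I would use the law of large numbers applied to the $3^{L-\ell}$ terms $\mathcal N(E; X^{\ell,i})$ appearing in the decomposition at a fixed small scale $\ell$. For fixed $\ell$, these random variables are (essentially) i.i.d.\ — the sub-triangles overlap only at corners, so by adjusting to the disjoint truncated triangles $\wt\G_{\ell,i}$ of \eqref{eqn:Part2-SG}, or simply by noting corner overlaps affect only finitely many i.i.d.\ blocks — and they are bounded by $|\G_\ell|$, hence integrable. By the strong law of large numbers, $\frac{1}{3^{L-\ell}}\sum_{i=1}^{3^{L-\ell}}\mathcal N(E; X^{\ell,i})\to g_\ell(E)$ a.s.\ as $L\to\infty$. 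Combining with the deterministic error bound $15\cdot 3^{L-\ell}$ divided by $|\G_L|\approx 3^{L+1}/2$, which is $O(3^{-\ell})$, we obtain
\begin{align*}
    \limsup_{L\to\infty}\Big|\frac{1}{|\G_L|}\mathcal N(E; X^L)-N(E)\Big|\lesssim 3^{-\ell}\quad\text{a.s.},
\end{align*}
and letting $\ell\to\infty$ (intersecting the countably many full-probability events over $\ell\in\N$ and over a countable dense set of energies, using monotonicity to pass to all continuity points of $N$) yields \eqref{eqn:IDS-exist-half}. For the full-lattice statement \eqref{eqn:IDS-exist-full}, I would compare the ball $B_L=B(O,2^L)$ with a union of boundedly many $2^L$-triangles that cover it with controlled overlap: $B_L$ sits between $\G_L$ (on each half) and a fixed finite union of translated $2^L$-triangles, and $|B_L|\approx |\G_L|$, so applying \eqref{eqn:XY}-type bounds between $H^{B_L,\bullet}$ and the triangle operators (again a bounded-rank / bounded-dimension perturbation at the gluing points, relative to $|B_L|$ this is negligible) transfers the limit from $\G_L$ to $B_L$.

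The main obstacle I anticipate is bookkeeping the accumulation of the $O(1)$ errors through the recursion and confirming they remain $o(|\G_L|)$ — this is where the geometric growth $3^{L-\ell}$ of the number of sub-triangles must be beaten by the normalization $|\G_L|\approx 3^{L}$, so one must be careful that iterating \eqref{eqn:XL} costs only $O(3^{L-\ell})$ total error (not, say, $O(L\,3^{L})$), and that the corner-overlap adjustments needed to invoke a genuine i.i.d.\ law of large numbers (passing between $\G_\ell$-triangles and the disjoint truncated triangles $\wt\G_\ell$, and handling the ``gluing'' set $\mathcal R$) only perturb things by the same harmless order. A secondary technical point is making the almost-sure convergence uniform enough to conclude at \emph{all} continuity points of $N$ rather than just on a countable dense set, which is handled by the standard monotonicity argument for IDS limits.
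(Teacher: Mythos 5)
Your proposal is correct and follows essentially the same route as the paper: iterate \eqref{eqn:XL} down to scale $\ell$ with total error $15\cdot 3^{L-\ell}$, get the expectation limit from the resulting almost-subadditivity, apply the strong law of large numbers to the i.i.d.\ counting functions on the disjoint truncated triangles $\wt\G_{\ell,i}$ (transferring between boundary conditions and truncated/untruncated triangles via \eqref{eqn:XY}), and handle the full lattice by gluing the two $2^L$-triangles $B_L=\G_L\cup\G_L'$ at the origin with a bounded bracketing error. The only loose remark is your alternative suggestion that corner overlaps ``affect only finitely many i.i.d.\ blocks'' (in fact every triangle shares corners with its neighbors, so independence genuinely fails for the untruncated family); your primary fix --- passing to the disjoint truncated triangles and using \eqref{eqn:XY} --- is exactly the paper's argument, so this does not create a gap.
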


\begin{proof}
 \noindent {\bf Case I: the right half Sierpinski lattice  $\bigcup_{L}\G_L$.}    Let $\mathcal T_L=\bigcup_{\bullet\in\{S,N,D\}}\{H^{\G_L,\bullet},H^{\wt\G_L,\bullet}\}$ be as in the theorem.   
    Suppose $\{V(x)\}_{x\in \G}$ are i.i.d. random variables. We first show that the limit
    \begin{align}\label{eqn:expNL}
     n_E:=   \lim_{L\to\infty}\frac{1}{3^L}\E \mathcal N(E; X^L)
    \end{align}
    exists and depends only on $E$ (with the same value for any choice of $X^L\in \mathcal T_L$). 
Retain the same notations as in Lemma \ref{lem:XL-induc}. Given $X^L$, denote by $ X^{L-1,i},i=1,2,3$ the corresponding Schr\"odinger operator on one of the smaller component $2^{L-1}$-triangles, $\G_{L-1,i}$ (or $\wt \G_{L-1,i}$, respectively)  with the same boundary condition as $\{X^L\}$. All eigenvalue counting functions $\mathcal N(E; X^{L-1,i}), i=1,2,3$,   have the same expectation value since $\G_{L-1,i}$ (or $\wt \G_{L-1,i}$) are isometric to $\G_{L-1}$ (or $\wt \G_{L-1}$ respectively), and $\{V(x)\}_{x\in \G}$ are i.i.d. 
Taking the expectation in \eqref{eqn:XL} of Lemma~\ref{lem:XL-induc} gives  
    \begin{align} 
        \E\mathcal N(E; X^L)\le 3\E\mathcal N(E; X^{L-1})+30. 
    \end{align}
    Thus for fixed $E$, the limit $n_E=\lim_{L\to\infty}\frac{1}{3^L} \E\mathcal N(E; X^L)$ exists since the number sequence $a_L:=\frac{1}{3^L} \big(\E\mathcal N(E; X^L)+15\big)$ is decreasing. Clearly, $n_E$ does not depend on the choice of $X^L$ due to \eqref{eqn:XY}.

Next, we study the a.s. limit (the second equality in \eqref{eqn:IDS-exist-half}).  For any $L> \ell \ge 1$, we apply \eqref{eqn:XL} inductively 
down to the $2^\ell$-triangle size to obtain
\begin{align}
  \mathcal N(E; X^L)\le \sum_{i=1}^{3^{L-\ell}}\mathcal N(E; X^{\ell,i})+  
  15\cdot 3^{L-\ell}.
\end{align}
 Dividing both sides by $3^L$  gives
 \begin{align}\label{eqn:220}
    \frac{1}{3^L} \mathcal N(E; X^L)\le \frac{1}{3^\ell} \cdot \frac{1}{3^{L-\ell}}\sum_{i=1}^{3^{L-\ell}}\mathcal N(E; X^{\ell,i})+ 15\cdot3^{ -\ell} .  
 \end{align}
 If $X^{\ell,i}$ is any one of $\{H^{\wt\G_{\ell,i},D},H^{\wt\G_{\ell,i}}, H^{\wt\G_{\ell,i},N}\}$ where $\wt \G_{\ell,i}$ is the truncated triangle isometric to $\wt \G_\ell$, then $\{\mathcal N(E; X^{\ell,i})\}_i$ are identically distributed with the common mean $\E \mathcal N(E; X^{\ell})$.  In addition, they are all independent since $\{\wt \G_{\ell,i}\}_{i=1,2,3}$ are disjoint. Hence, by the (strong) law of large numbers, for fixed $\ell$,
 \begin{align}\label{eqn:221}
    \lim_{L\to \infty} \frac{1}{3^{L-\ell}}\sum_{i=1}^{3^{L-\ell}}\mathcal N(E; X^{\ell,i})=\E \mathcal N(E; X^{\ell}), \ \ a.s. 
 \end{align}
For fixed $\ell$, taking the limit as $L\to \infty$ in \eqref{eqn:220} thus gives 
\begin{align} 
 \limsup_{L\to \infty}   \frac{1}{3^L} \mathcal N(E; X^L)\le \frac{1}{3^\ell} \E \mathcal N(E; X^{\ell})+ 15\cdot3^{ -\ell}, \ \ a.s. 
 \end{align}
Then taking the limit as $\ell\to \infty$ and recalling the definition \eqref{eqn:expNL} of $n_E$, we obtain 
\begin{align} 
 \limsup_{L\to \infty}   \frac{1}{3^L} \mathcal N(E; X^L)\le n_E, \ \ a.s. 
 \end{align}
The same argument via the lower bound in \eqref{eqn:XL} gives 
\begin{align} 
 \liminf_{L\to \infty}   \frac{1}{3^L} \mathcal N(E; X^L)\ge n_E, \ \ a.s. 
 \end{align}
 Putting the two together, we obtain 
 \begin{align} \label{eqn:NE-limit-as} \lim_{L\to \infty}   \frac{1}{3^L} \mathcal N(E; X^L)= n_E=\lim_{L\to\infty}\frac{1}{3^L}\E \mathcal N(E; X^L), \ \ a.s. 
 \end{align}
 We proved the above limit for $X^{L}\in \{H^{\wt\G_{L},D},H^{\wt\G_{L}}, H^{\wt\G_{L},N}\}$ where we used the independence of the eigenvalue counting on disjoint triangles $\wt \G_{\ell,i}$. 
 However, due to \eqref{eqn:XY}, Eq.~\eqref{eqn:NE-limit-as} holds for $X^L=H^{\G_L,\bullet}$ as well. 

Finally, since $ |\G_L|=\frac{1}{2}(3^L+3)$,
the following limit exists 
\begin{align}\label{eqn:IDS-half-limit-pf}        N(E)=2n_E=\lim_{L\to\infty}\frac{1}{|  \G_L|} \E\mathcal N(E; X^L)=\lim_{L\to\infty}\frac{1}{|  \G_L|} \mathcal N(E; X^L),\ \  a.s., 
    \end{align}
     where $X^{L}$ is any  choice of $ \mathcal T_L$. 
 
\noindent {\bf Case II: the full   Sierpinski lattice  $\bigcup_{L}(\G_L\cup \G_L')$. }
Notice that $B_{L}=B(O,2^L)=\G_L\cup \G_L'$ and $\G_L\cap \G_L'=\{O\}$, where $\G_L'$ is the reflection of $\G_L$ with respect to the $y$-axis. By the same argument used in Lemma~\ref{lem:XL-induc}, one can obtain
\begin{align}\label{eqn:full-half}
   \mathcal N(E; H^{\wt \G_L,D})+   \mathcal N(E; H^{\wt \G_L',D}) 
    \le    \mathcal N(E; H^{B_{L},N})    \le   \mathcal N(E; H^{\G_L,N})+  \mathcal N(E; H^{\G_L',N})+1. 
\end{align}

Since $\G_L'$ is isometric to $\G_L$, \eqref{eqn:IDS-half-limit-pf}         holds for $\G_L'$ (with any boundary condition). And the resulted limit for $\G_L'$ equals the limit for $\G_L$, still denoted by $N(E)$.  Using $ |B_L|=2|\G_L|-1$ and \eqref{eqn:full-half}, we obtain 
\begin{align}\label{eqn:full-half2}
   \lim_{L\to \infty} \frac{1}{|B_L|}\mathcal N(E; H^{B_{L},N})=\frac{1}{2}N(E)+\frac{1}{2}N (E)=N(E),\  a.s. ,  
\end{align}
which defines the integrated density of states on the full Sierpinski lattice.  
The other boundary conditions on $B_L$ can be proved similarly. 
\end{proof}

%%%%%%%%%%%%%%%%%%%%%%%%%%%%%%%%%%%%%%%%%%%%%%%%%%%%%%%%%%%%%%%%%%%%%%%%%%%%%%%%%%%%%%%%%%%%%%%%%%%%%%%%%%%%%%%%%%%%%%%%%%%%%%%%%%%%%%%%%%%%%%%%%%%%%%%%%%%%%%%%%%%%%%%%%%%%%%%%%%%%%%%%%%%%%%%%%%%%%%%%%%%%%%%

 \section{Lifshitz tails for the Anderson model on the Sierpinski lattice}
 \label{sec:lif}

Throughout this section, we set $\alpha=\log3/\log 2, \beta=\log5/\log 2$. 
  Because of \eqref{eqn:full-half} and \eqref{eqn:full-half2} in the previous section, it is enough to study the tail behavior of the IDS $N(E)$ only on the right half Sierpinski lattice.   
We will prove 
\begin{theorem}\label{thm:IDS-Lif}
Let   $H_\omega=-\Delta+V_\omega$ be the Anderson model as in \eqref{eqn:AM}.   Suppose  $\{V_\omega(x)\}_{x\in \G}$ are i.i.d random variables with a (non-trivial) common distribution $P_0$, satisfying 
\begin{align}\label{eqn:V-Lif-ass}
 \inf \supp P_0=0, \ {\rm and}\    P_0([0,\eps])\ge C\eps^\kappa, 
\end{align}
for some $C,\kappa>0$ and all sufficiently small $\varepsilon>0$.
Let $N(E)$ be the IDS given by Theorem~\ref{thm:IDS-exist-lif}.  Then    
 \begin{align}\label{eqn:lif-SG}
  \lim_{E\searrow 0}   \frac{\log \big|\log  N  (E)\big|}{\log E}=-\frac{\alpha}{\beta}. 
\end{align}
\end{theorem}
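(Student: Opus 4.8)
The plan is to prove the two-sided asymptotic \eqref{eqn:lif-SG} via the standard Lifshitz strategy, adapted to $\G$ through the Dirichlet--Neumann bracketing machinery already set up in Lemma~\ref{lem:XL-induc} and Theorem~\ref{thm:IDS-exist-gen}. Because $N(E)$ can be computed with \emph{any} boundary condition and on the right half lattice, I will work with a $2^\ell$-triangle $\G_\ell$, which has $|\G_\ell|\approx 3^\ell$ vertices and diameter $\approx 2^\ell$; the quantity that controls everything is the smallest nonzero eigenvalue of the Neumann (resp.\ modified Dirichlet) Laplacian on $\G_\ell$, which I will need to show behaves like $\lambda_1(-\Delta^{\G_\ell,N})\approx 5^{-\ell}$ and $\lambda_1(-\Delta^{\G_\ell,D})\approx 5^{-\ell}$. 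This is exactly the ``random walk dimension'' scaling: on the Sierpinski gasket the heat kernel satisfies $\mathrm{HK}(\alpha,\beta)$ with $\beta=\log 5/\log 2$, so the relevant time scale on a $2^\ell$-block is $\approx (2^\ell)^\beta = 5^\ell$. These spectral bounds on a finite Sierpinski triangle are the technical ingredient flagged as item (ii) in the introduction; I expect to establish them by the Dirichlet-form self-similarity/eigenvalue decimation for the gasket (the recursion $R(z)=z(4z+5)$ from \cite{teplyaev1998spectral}), or by a direct Rayleigh-quotient estimate using a harmonic-type test function, together with a matching lower bound from a Poincar\'e/Faber--Krahn inequality scaled by $5^\ell$.

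\textbf{Upper bound on $N(E)$ (Lifshitz upper bound via Neumann bracketing).} Fix small $E>0$ and choose the block scale $\ell=\ell(E)$ by $5^{-\ell}\approx E$, i.e.\ $2^\ell\approx E^{-1/\beta}$ and $|\G_\ell|\approx 3^\ell\approx E^{-\alpha/\beta}$. Using the Neumann bracketing inequality from Lemma~\ref{lem:XL-induc} applied inductively down to scale $\ell$ (Theorem~\ref{thm:IDS-exist-gen}), one has $N(E)\lesssim \frac{1}{|\G_\ell|}\E\,\mathcal N(E;H^{\G_\ell,N}) + o(1)$-type control; more precisely $\mathcal N(E;H^{\G_{L},N})\le \sum_i \mathcal N(E;H^{\G_\ell,i,N})+O(3^{L-\ell})$, and after dividing by $3^L$ and using the law of large numbers one gets $N(E)\le 3^{-\ell}\E\,\mathcal N(E;H^{\G_\ell,N})$. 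Now $\mathcal N(E;H^{\G_\ell,N})\ge 1$ always (the constant function is a zero mode of $-\Delta^{\G_\ell,N}$, shifted by $\min_{\G_\ell}V$), so this cheap bound is not enough; instead I estimate $\Pr{\mathcal N(E;H^{\G_\ell,N})\ge 2}$. The second Neumann eigenvalue of $H^{\G_\ell,N}=-\Delta^{\G_\ell,N}+V$ is at least $\lambda_1(-\Delta^{\G_\ell,N})+\min_{\G_\ell}V\approx c\,5^{-\ell}+\min_{\G_\ell}V$ by first-eigenvalue perturbation (or by the variational principle after removing the one-dimensional ground-space). Hence if $\min_{\G_\ell}V\ge c\,5^{-\ell}\approx E$ then $H^{\G_\ell,N}$ has at most one eigenvalue $\le E$. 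Since the $V(x)$ are i.i.d.\ with $\inf\supp P_0=0$, $\Pr{\min_{\G_\ell}V < E} \le |\G_\ell|\cdot P_0([0,E]) \le 3^\ell\cdot$(something), which is too large; the correct route is $\Pr{\min_{\G_\ell}V< c\,5^{-\ell}} \le$ small only if we instead bound $\Pr{\text{all }V(x)\ge cE \text{ on }\G_\ell}=\big(1-P_0([0,cE))\big)^{|\G_\ell|}$ from below and use it the other way. So the cleaner statement: $\E\,\mathcal N(E;H^{\G_\ell,N}) \le 1 + |\G_\ell|\cdot\Pr{\text{some eigenvalue of }H^{\G_\ell,N}\le E\text{ beyond the ground state}}$ is replaced by the standard count $\E\,\mathcal N(E;H^{\G_\ell,N}) \le |\G_\ell|\,\Pr{\lambda_1(H^{\G_\ell,N})\le E}$ when $\ell$ is large — and $\Pr{\lambda_1(H^{\G_\ell,N})\le E}\le \Pr{\text{average of }V\text{ over }\G_\ell \le E + \lambda_1(-\Delta^{\G_\ell,N})}\le\Pr{|\G_\ell|^{-1}\sum_{x\in\G_\ell}V(x)\le 2cE}$, which by a large-deviation (Chernoff/Cram\'er) estimate for the i.i.d.\ sum is $\le e^{-c'|\G_\ell|}=e^{-c'3^\ell}=e^{-c'E^{-\alpha/\beta}}$. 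Combining, $N(E)\le C_1 e^{-C_2 E^{-\alpha/\beta}}$, hence $\limsup_{E\searrow0}\log|\log N(E)|/\log E \le -\alpha/\beta$.

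\textbf{Lower bound on $N(E)$ (Lifshitz lower bound via modified Dirichlet bracketing).} For the complementary inequality I again set $\ell=\ell(E)$ with $5^{-\ell}\approx E$, and use the Dirichlet-type lower bound $N(E)\ge 3^{-\ell}\E\,\mathcal N(E;H^{\wt\G_\ell,D})$ coming from the disjoint partition \eqref{eqn:Part2-SG} and inequality \eqref{eqn:4.16} of Lemma~\ref{lem:XL-induc} (iterated, then law of large numbers). It suffices to produce, with probability $\ge e^{-c\,3^\ell}$, a triangle $\wt\G_\ell$ on which $H^{\wt\G_\ell,D}$ has an eigenvalue $\le E$. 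On the event $\{V(x)\le E/2 \text{ for all }x\in\wt\G_\ell\}$, which has probability $\big(P_0([0,E/2])\big)^{|\wt\G_\ell|}\ge (C(E/2)^\kappa)^{3^\ell} = e^{-c\,3^\ell\log(1/E)}$, we can bound $\lambda_1(H^{\wt\G_\ell,D})\le \lambda_1(-\Delta^{\wt\G_\ell,D}) + E/2 \le c\,5^{-\ell}+E/2 \le E$ provided the constant in $5^{-\ell}\approx E$ is chosen appropriately (here I use the upper bound $\lambda_1(-\Delta^{\wt\G_\ell,D})\lesssim 5^{-\ell}$, the easier of the two finite-triangle spectral estimates, proved with an explicit test function). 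Therefore $\E\,\mathcal N(E;H^{\wt\G_\ell,D})\ge e^{-c\,3^\ell\log(1/E)}$, so $N(E)\ge 3^{-\ell}e^{-c\,3^\ell\log(1/E)}\ge e^{-c'\,3^\ell\log(1/E)}=e^{-c'E^{-\alpha/\beta}\log(1/E)}$. Taking $\log|\log\cdot|/\log E$ and letting $E\searrow 0$, the extra $\log(1/E)$ factor is negligible on the double-log scale, giving $\liminf_{E\searrow0}\log|\log N(E)|/\log E\ge -\alpha/\beta$. Together with the upper bound this proves \eqref{eqn:lif-SG}.

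\textbf{Main obstacle.} The crux is item (ii): establishing $\lambda_1(-\Delta^{\G_\ell,N})\gtrsim 5^{-\ell}$ (needed for the upper bound — a genuine Poincar\'e/spectral-gap lower bound, i.e.\ a Faber--Krahn-type inequality for the finite Sierpinski triangle at the right scale) and $\lambda_1(-\Delta^{\wt\G_\ell,D})\lesssim 5^{-\ell}$ (needed for the lower bound — the easier direction). The exponent $\log 5/\log 2=\beta$ is dictated by the self-similar Dirichlet form: under the canonical folding map $\G_{\ell}\to\G_{\ell-1}$ the energy scales by $3/5$ while the counting measure scales by $1/3$, so Rayleigh quotients scale by $1/5$ per level; I will make this rigorous either by the eigenvalue-decimation recursion $R(z)=z(4z+5)$ from \cite{teplyaev1998spectral} (which directly relates the Neumann/Dirichlet spectra at consecutive levels) or, more robustly for the truncated triangles $\wt\G_\ell$, by a direct renormalization of the quadratic form together with a base-case estimate at $\ell=O(1)$. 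A secondary (routine) point is making the iteration of Lemma~\ref{lem:XL-induc} down to a sublinear scale $\ell(E)$ compatible with the law-of-large-numbers step uniformly in $E$; this is handled exactly as in Theorem~\ref{thm:IDS-exist-gen} since the additive errors $O(3^{L-\ell})$ divided by $3^L$ are $O(3^{-\ell})=o(E^{\alpha/\beta})$, negligible on the scale of the bounds above. Finally, the large-deviation estimate $\Pr{|\G_\ell|^{-1}\sum V(x)\le 2cE}\le e^{-c'|\G_\ell|}$ requires only that $0=\inf\supp P_0$ is not an atom of full mass, which follows from $P_0$ being non-trivial; the hypothesis $P_0([0,\eps])\ge C\eps^\kappa$ is used (only) in the lower bound to keep $\big(P_0([0,E/2])\big)^{3^\ell}$ from being super-exponentially small, i.e.\ to confine the correction to the harmless $\log(1/E)$ factor.
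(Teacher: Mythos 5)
Your lower bound is essentially the paper's argument (disjoint truncated partition, modified Dirichlet bracketing, the event $\{V\le E/2 \text{ on } \wt\G_\ell\}$, an explicit test function giving $E_0(-\Delta^{\wt\G_\ell,D})\lesssim 2^{-\ell\beta}$ --- in the paper the test function is the simple-boundary ground state of a smaller interior triangle, chosen to avoid the boundary penalty in \eqref{eqn:Lap-D} --- and the harmless $\log(1/E)$ factor), and the finite-triangle eigenvalue estimates you defer are exactly the decimation results (Propositions~\ref{prop:Neumann-ev} and \ref{prop:Diri-ev}). The upper bound, however, has two genuine gaps. First, you feed the Neumann bracketing through the counting inequality \eqref{eqn:XL}, i.e. $\mathcal N(E;H^{\G_L,N})\le\sum_i\mathcal N(E;H^{\G_{\ell,i},N})+O(3^{L-\ell})$, and in your ``Main obstacle'' paragraph declare the normalized error $O(3^{-\ell})$ negligible. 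It is not: with $2^\ell\approx E^{-1/\beta}$ this error is $\approx E^{\alpha/\beta}$, which is only polynomially small, while the bound you must prove is $N(E)\le C e^{-cE^{-\alpha/\beta}}$; the additive term swamps the target and this route yields only $N(E)\lesssim E^{\alpha/\beta}$, whose double-log limit is $0$, not $-\alpha/\beta$. The paper avoids \eqref{eqn:XL} here entirely: the Neumann form on $\G_L$ is \emph{exactly} the sum of the Neumann forms on the $2^\ell$-subtriangles, and the one-vertex overlaps are absorbed by halving the potential, giving the error-free subadditivity \eqref{eqn:N-brack}--\eqref{eqn:N-sum-upper} and hence \eqref{eqn:SG-Lif-upper} with no additive correction.

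Second, and more fundamentally, your key step
$\P\bigl(E_0(H^{\G_\ell,N})\le E\bigr)\le\P\bigl(|\G_\ell|^{-1}\sum_{x}V(x)\le 2cE\bigr)$
is asserted, not proved, and the underlying deterministic implication is false: a small ground-state energy only controls the average of $V$ weighted by $|\psi(x)|^2$ for some low-Dirichlet-energy $\psi$, and $\psi$ may place negligible weight on the sites carrying most of $\sum_x V(x)$. For instance, if $V$ is very large at a single site and small elsewhere, one may test with the simple-boundary ground state of an interior subtriangle avoiding that site, so $E_0(H^{\G_\ell,N})$ stays of order $2^{-\ell\beta}$ while the uniform average of $V$ is arbitrarily large; since no boundedness or moment assumption is made on $P_0$, such configurations cannot be dismissed. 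This is precisely where the paper truncates the potential at scale $2^{-\ell\beta}$ as in \eqref{eqn:wtV-def} and applies Temple's inequality (Proposition~\ref{prop:temple}) with the constant Neumann ground state, which yields the deterministic bound $E_0\ge\frac{1}{2|\G_\ell|}\sum_x\wt V(x)$; the truncation also makes the summands bounded, which your Chernoff/Cram\'er step needs as well (Hoeffding is applied to $2^{\ell\beta}\wt V$, not to $V$). Without this truncation-plus-Temple step (or an equivalent argument), the crux of the Lifshitz upper bound is missing.
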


In the following, we again omit the subscript $\omega$ and write $H=H_\omega$ and $V=V_\omega$. 
The proof relies on the method called Dirichlet--Neumann bracketing as reviewed in Section~\ref{sec:background}. 
 The original Dirichlet--Neumann bracketing principle refers to bounds on the spectrum obtained through additive schemes on a (disjoint) partition of the vertex set of a graph. The  Neumann and Dirichlet Laplacians (and the associated Schr\"odinger operators) are picked so that the corresponding quadratic forms give a pair of complementary bounds  on the original Hamiltonian. 
 
 Here, we continue to use Laplacians on finite triangles with the three boundary conditions defined in \eqref{eqn:Lap-simple}, \eqref{eqn:Lap-N} and \eqref{eqn:Lap-D} in the previous section. We will divide the large triangle $\G_L$ into small triangles (fundamental subdomains) of size $2^\ell$, where $2^\ell\sim E^{-1/\beta}$ is picked according to the energy level $E$. Then the quadratic form of the operator on an arbitrarily large triangle can be approximated by a sum of  quadratic forms on these small triangles, leading to the desired bound on the eigenvalue counting function by the Rayleigh--Ritz principle.

\subsection{Neumann bracketing and the Lifshitz tail upper bound }
For the upper bound, we will use the non-disjoint cover $\mathcal P$ defined in \eqref{eqn:Part1-SG} and the Neumann Laplacian on each of the small $2^\ell$-triangles, where $2^\ell\sim E^{-1/\beta}$ will be specified later. More precisely, given $L>\ell>0$, write 
\begin{align}
    \G_L=\bigcup_{j=1}^{3^{L-\ell}}\G_{\ell,j},
\end{align}
where $\G_{\ell,j}$ are all the $2^\ell$-triangles in $\G_L$.  Let $-\Delta^{\G_{\ell,j},N}$ be the Neumann Laplacian on $\G_{\ell,j}$ as in \eqref{eqn:Lap-N}. The same argument in \eqref{eqn:Lap-N-quad}, inductively on $\G_L,\G_{L-1},\cdots, \G_\ell$ (and all the triangles isometric to them), gives 
\[ 
\ipc{f}{\Delta^{\G_L,N} f}_{\ell^2(\G_L)}=\sum_{\G_{\ell,j}\subseteq \G_L  }\ipc{f}{\Delta^{\G_{\ell,j},N} f}_{\ell^2(\G_{\ell,j})}. 
\]

 Since the cover is \emph{not} disjoint, considering the overlapped onsite potential at the extreme vertices as in \eqref{eqn:Lap-N-quad-sch}, we obtain 
 \begin{align}\label{eqn:N-brack}
    \ipc{f}{ \big(-\Delta^{\G_L,N}+ V^{\G_L}\big)  f}_{\ell^2(\G_L)}   \ge \sum_{\G_{\ell,j}\subseteq \G_L  }\ipc{f}{\big(-\Delta^{\G_{\ell,j},N}+ \frac{1}{2}V^{\G_{\ell,j}}\big)  f}_{\ell^2(\G_{\ell,j})}.  
 \end{align}
As before, let $\mathcal N(E;X)=\#\{{\rm eigenvalues}\ E'\ {\rm of}\ X \ {\rm such\ that\ }\ E'\le E\}$ be the eigenvalue counting function of an operator $X$ below the energy $E$. Applying Lemma \ref{lem:NH<NH12} to \eqref{eqn:N-brack}, we obtain 
\begin{align}\label{eqn:N-sum-upper}
    \mathcal N(E; -\Delta^{\G_L,N}+ V^{\G_L}) \le    \sum_{\G_{\ell,j}\subseteq \G_L  }  \mathcal N(E;-\Delta^{\G_{\ell,j},N}+ \frac{1}{2}V^{\G_{\ell,j}}).
\end{align}
Next we divide both sides by $|\G_L|=\frac{1}{2}(3^{L+1}+3)\ge 3^L$ and take the expectation value. Since there are $3^{L-\ell}$
terms in   the   sum  which are identically distributed,  the above inequality yields 
\begin{align}\label{eqn:SG-Lif-upper}
     \frac{1}{|\G_L|} \E\mathcal N(E; H^{\G_L,N} )\le \frac{3^{L-\ell}}{3^L}  \E\mathcal N(E;-\Delta^{\G_{\ell },N}+ \frac{1}{2}V^{\G_{\ell }})\le   \P \big( E_0 \le E \big),
\end{align}
where $E_0=E_0(H^\ell)$ is the smallest eigenvalue of $H^\ell=-\Delta^{\G_{\ell },N}+ \frac{1}{2}V^{\G_{\ell }}$, and we used that $\mathcal N(  E;H^\ell)\le  |\G_\ell|  \one_{ \{E_0 \le E \} } \le 3^\ell \one_{ \{E_0 \le E \} }    $ in the last inequality.

It is enough to bound $\P \big( E_0(H^\ell)\le E \big)$  from above. This will be achieved by bounding $E_0\big( H^{ \ell}\big)$ from below.  The key ingredient is the following  Temple's  inequality.
\begin{proposition}[Temple, \cite{temple1928theory}] \label{prop:temple}
    Let $H$ be a self-adjoint operator with an isolated non-degenerate eigenvalue $E_0=\inf \sigma(H)$, and let $E_1=\inf \big(\sigma(H) \backslash\{E_0\}\big)$
.  Then for any
$\psi\in \mathcal D (H)$  (domain of $H$), which satisfies $\ipc{\psi}{H\psi}<E_1$, $\|\psi\|=1$, then the following bound holds:
\begin{align}\label{eqn:temple}
    E_0\ge \ipc{\psi}{H\psi}-\frac{\ipc{H\psi}{H\psi}-\ipc{\psi}{H\psi}^2}{E_1-\ipc{\psi}{H\psi}}. 
\end{align}
\end{proposition}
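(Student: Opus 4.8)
The plan is to prove Temple's inequality via the spectral theorem, by exploiting the positivity of the operator $(H-E_0)(H-E_1)$, which encodes the spectral gap hypothesis $\sigma(H)\subseteq\{E_0\}\cup[E_1,\infty)$.

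First I would observe that for every $\lambda\in\sigma(H)$ one has $(\lambda-E_0)(\lambda-E_1)\ge 0$: this holds trivially when $\lambda=E_0$, and when $\lambda\ge E_1$ both factors are nonnegative. Since $\psi\in\mathcal D(H)$ with $\|\psi\|=1$, its spectral measure $\mu_\psi$ is a probability measure supported on $\sigma(H)$ with $\int\lambda^2\,d\mu_\psi(\lambda)=\ipc{H\psi}{H\psi}<\infty$, so the functional calculus gives
\[
0\le\int(\lambda-E_0)(\lambda-E_1)\,d\mu_\psi(\lambda)=\ipc{\psi}{H^2\psi}-(E_0+E_1)\ipc{\psi}{H\psi}+E_0E_1.
\]

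Next I would set $\eta:=\ipc{\psi}{H\psi}$ and rearrange the displayed inequality into $E_0(E_1-\eta)\ge E_1\eta-\ipc{\psi}{H^2\psi}$. The hypothesis $\eta<E_1$ makes $E_1-\eta>0$, so dividing is legitimate and yields $E_0\ge(E_1\eta-\ipc{\psi}{H^2\psi})/(E_1-\eta)$. A short algebraic manipulation rewrites the right-hand side as $\eta-(\ipc{\psi}{H^2\psi}-\eta^2)/(E_1-\eta)$, and then using self-adjointness $\ipc{\psi}{H^2\psi}=\ipc{H\psi}{H\psi}$ together with $\eta^2=\ipc{\psi}{H\psi}^2$ produces exactly \eqref{eqn:temple}.

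There is no serious obstacle here; the only point needing a little care is the justification of the operator inequality $(H-E_0)(H-E_1)\ge0$ when $H$ is unbounded, which is precisely why the hypothesis is stated as $\psi\in\mathcal D(H)$ — this guarantees $\int\lambda^2\,d\mu_\psi<\infty$ so that every term in the expansion above is finite. In the intended application $H=H^\ell$ is a finite-dimensional matrix, so this subtlety does not even arise.
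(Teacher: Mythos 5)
Your proof is correct: the paper does not prove Proposition~\ref{prop:temple} itself but refers to standard references, and your argument — integrating the pointwise inequality $(\lambda-E_0)(\lambda-E_1)\ge 0$ on $\sigma(H)$ against the spectral measure of $\psi$, then dividing by $E_1-\ipc{\psi}{H\psi}>0$ and rearranging — is exactly the classical proof found in those references. Your handling of the unbounded case is also right: $\psi\in\mathcal D(H)$ gives $\int\lambda^2\,d\mu_\psi=\ipc{H\psi}{H\psi}<\infty$, so the nonnegative integral is finite and the identification $\int\lambda^2\,d\mu_\psi=\ipc{H\psi}{H\psi}$ (rather than a literal $\ipc{\psi}{H^2\psi}$, which would need $\psi\in\mathcal D(H^2)$) is the form actually used in \eqref{eqn:temple}.
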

The proof of Temple's inequality  can be found in e.g. \cite{simon1985lifschitz,kirsch2007invitation,aizenman2015random}. To apply Temple's inequality to $H^\ell$, we also need the lower bound of the Neumann Laplacian eigenvalue. 
\begin{proposition}\label{prop:Neumann-ev}
    Let $E_1=E_1(-\Delta^{\G_\ell,N})$ be the first non-zero eigenvalue of the Neumann Laplacian $-\Delta^{\G_\ell,N}$. There are numerical constants $c_0=15/2,c_0'=60$ such that \begin{align}\label{eqn:Neumann-ev}
   \frac{c_0}{2^{\ell \beta }} \le      E_1\le \frac{c_0'}{2^{\ell \beta }}. 
    \end{align}
\end{proposition}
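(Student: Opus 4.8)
The plan is to estimate the first nonzero eigenvalue $E_1(-\Delta^{\G_\ell,N})$ of the Neumann Laplacian on the $2^\ell$-triangle $\G_\ell$ by connecting it to a well-understood quantity with known $2^{\ell\beta}$-scaling, namely the behavior of the simple random walk on $\G_\ell$ (or equivalently, the Dirichlet form / effective resistance / heat kernel on the Sierpinski graph, which scales with the walk dimension $\beta=\log 5/\log 2$). Recall that $-\Delta^{\G_\ell,N}$ is precisely the graph Laplacian of the finite subgraph $\G_\ell$ (no modification at the corners), so its quadratic form is $\frac12\sum_{x\sim y}(f(x)-f(y))^2$ by \eqref{eqn:gauss-green-finite}, the constant function is the unique zero mode, and by min--max
\begin{align*}
  E_1(-\Delta^{\G_\ell,N}) = \min\left\{ \frac{\tfrac12\sum_{x\sim y\ \mathrm{in}\ \G_\ell}(f(x)-f(y))^2}{\sum_{x\in\G_\ell}\abs{f(x)-\bar f}^2}\ :\ f \ \mathrm{nonconstant}\right\},
\end{align*}
where $\bar f$ is the mean of $f$ over $\G_\ell$. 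So what is needed is a two-sided Poincar\'e (spectral gap) inequality on the finite Sierpinski triangle with the sharp power of $2^\ell$.

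First I would prove the \emph{upper} bound $E_1\le c_0'/2^{\ell\beta}$ by exhibiting an explicit test function. The natural choice is built recursively using the self-similar structure: use an eigenfunction $\phi$ on $\G_\ell$ associated to a small eigenvalue, or more simply use the known exact eigenvalue decimation/spectral similarity for the Sierpinski gasket Laplacian (the rational map $R(z)=z(4z+5)$ appearing in Teplyaev's theorem, cited in Claim~\ref{clm:ef-move}). In fact the cleanest route is to invoke the spectral decimation identity: eigenvalues of $-\Delta^{\G_{\ell+1},N}$ are pulled back under $R$ from those of $-\Delta^{\G_\ell,N}$, so the smallest nonzero eigenvalue contracts by a factor asymptotic to $1/5$ at each step (since $R(z)\approx 5z$ near $0$), giving $E_1(-\Delta^{\G_\ell,N})\asymp 5^{-\ell}=2^{-\ell\beta}$ directly, with explicit constants traceable from the base case $\ell$ small (a $3\times3$ or $6\times 6$ matrix). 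This simultaneously yields both bounds with clean numerical constants; one then just tracks the base-case eigenvalue through the decimation to pin down $c_0=15/2$ and $c_0'=60$. If one prefers to avoid decimation, the lower bound $E_1\ge c_0/2^{\ell\beta}$ can instead be obtained from a Poincar\'e inequality proved by induction on $\ell$: split $\G_\ell$ into its three sub-triangles $\G_{\ell-1,i}$, apply the inductive Poincar\'e bound on each, and control the cross terms (differences of the three local means) using a chaining/path argument through the shared corner vertices and a resistance estimate of order $(\tfrac53)^{\ell}$; the $2^{\ell\beta}$-rate emerges because $\beta = \log 5/\log2$ and the resistance scales like $(5/3)^\ell$ while the volume scales like $3^\ell$.

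I would organize the write-up as: (1) identify $-\Delta^{\G_\ell,N}$ with the bare graph Laplacian and record the Rayleigh quotient characterization of $E_1$; (2) state the spectral decimation relation from \cite{teplyaev1998spectral}, \cite{shima1991eigenvalue} for the Neumann/graph Laplacian on $\G_\ell$, i.e. $\sigma(-\Delta^{\G_{\ell+1},N})\setminus\{\text{exceptional set}\}$ is the $R$-preimage of $\sigma(-\Delta^{\G_\ell,N})$; (3) track the orbit of the smallest nonzero eigenvalue: since $R(z)=z(4z+5)$ and for small $z>0$ one has $5z\le R(z)\le 6z$ (say, for $z$ in a fixed small interval), the preimage branch near $0$ satisfies $z/6 \le R^{-1}_{\mathrm{near\ 0}}(z)\le z/5$, so iterating from the base value at some fixed small $\ell_0$ gives the two-sided bound with constants; (4) compute the base case explicitly to get $c_0=15/2$, $c_0'=60$. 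The main obstacle I anticipate is step (3)--(4): making sure the smallest nonzero eigenvalue stays on the ``near-zero'' branch of the decimation at every stage (rather than jumping to the other preimage branch or hitting one of the finitely many exceptional/forbidden eigenvalues from the decimation procedure), and then carefully propagating the inequalities $5z\le R(z)\le 6z$ with the correct domain of validity so that the resulting constants are genuinely $15/2$ and $60$ and not merely $\asymp 1$. This is a finite bookkeeping problem but it is where all the care is needed; everything else is either standard min--max or a direct citation to the spectral analysis of the Sierpinski lattice Laplacian.
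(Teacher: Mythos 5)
Your overall strategy --- track the smallest nonzero eigenvalue through the spectral decimation map $R(z)=z(4z+5)$ --- is the same as the paper's, but as written there are two concrete gaps. First, the decimation/explicit-spectrum results of Shima and Teplyaev that you invoke hold for the \emph{probabilistic} (degree-normalized) Laplacian $\Delta_p^{\G_\ell,N}=D\,\Delta^{\G_\ell,N}$ with $D=\mathrm{Diag}\{\deg_{\G_\ell}(x)^{-1}\}$, not for the combinatorial Neumann Laplacian appearing in the statement. On the finite triangle $\G_\ell$ the three corner vertices have degree $2$ while all other vertices have degree $4$, so $-\Delta^{\G_\ell,N}$ is not a scalar multiple of $-\Delta_p^{\G_\ell,N}$ and does not satisfy the exact decimation identity. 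The paper first obtains $\tfrac{15}{4}2^{-\ell\beta}\le E_1(-\Delta_p^{\G_\ell,N})\le 15\cdot 2^{-\ell\beta}$ from Teplyaev's explicit formula \eqref{eqn:Tep-eigen} (the second-largest eigenvalue is exactly $f^{\circ(\ell-1)}(-3/4)$ with $f$ as in \eqref{eqn:deci-f}, and the monotonicity of $f$ settles your worry about branch choices and exceptional values), and then converts to the combinatorial operator via the eigenvalue-majorization bound \eqref{eqn:EjAB} applied to the product $D(-\Delta^{\G_\ell,N})$; since the eigenvalues of $D$ lie in $[1/4,1/2]$, this costs factors $2$ and $4$, and these are precisely where $c_0=15/2$ and $c_0'=60$ come from. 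Without some such comparison step, tracking a base case through a decimation for ``the Neumann/graph Laplacian'' is not applying the cited results to the right operator and cannot land on the stated constants.

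Second, your step (3) as stated would fail to give the lower bound at the correct order. If all you use is $5z\le R(z)\le 6z$ on a fixed small interval, the near-zero preimage branch only satisfies $z/6\le R^{-1}(z)\le z/5$, and iterating this $\ell$ times gives a lower bound of order $6^{-\ell}$, which is smaller than the required $5^{-\ell}=2^{-\ell\beta}$ by the exponentially growing factor $(6/5)^{\ell}$; no amount of constant bookkeeping repairs a per-step loss bounded away from $1$. What is needed is a per-step estimate whose multiplicative error is proportional to the (geometrically decaying) eigenvalue itself, so the errors are summable along the orbit: this is the paper's inequality $\tfrac15 x(1-x)\le f(x)\le \tfrac15 x$ on $[-1,0]$, which by induction yields $\tfrac{4}{5^{n}}x\le f^{\circ n}(x)\le \tfrac{1}{5^{n}}x$ as in \eqref{eqn:f-it} (Appendix B), i.e.\ a convergent product rather than a fixed-ratio loss at every step. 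With these two repairs your outline coincides with the paper's proof; your alternative Poincar\'e/resistance route for the lower bound is plausible but is not what the paper does and would not by itself produce the stated numerical constants.
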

\begin{remark}
To apply Temple's inequality, we only need the lower bound of $E_1$. The upper bound is provided for completeness.  Note that $\G_\ell$ is actually the (half-sided) ball $B_R=B(O,R)$ (with respect to the gasket graph metric), centered at the origin $O=(0,0)$, with radius $R=2^\ell$. The proposition is thus equivalent to the (two-sided) asymptotic behavior $E_1(-\Delta^{B_R,N}) \sim R^{-\beta}$.
Note that the first (smallest) Dirichlet eigenvalue on a ball or a triangle of the same size has the same order asymptotic $R^{-\beta}$; see Proposition~\ref{prop:Diri-ev}. 
\end{remark}

 We will first use these two propositions to complete the proof of the Lifshitz upper bound. 
Proposition \ref{prop:Neumann-ev} will use the explicit iteration formula of the Neumann eigenvalues from \cite{teplyaev1998spectral}. The proof is  left to the end of the section.

\begin{proof}[Proof of the upper bound of Eq.~\eqref{eqn:lif-SG}]
Denote by $E_0(X),E_1(X)$ the first and second smallest eigenvalue   of   an operator $X$ respectively in the proof. 
    We consider a truncated potential
    \begin{align}\label{eqn:wtV-def}
        \wt V(x):=\min \left\{\frac{1}{2}V(x), \frac{c_0}{3}2^{-\ell \beta }\right\} ,
    \end{align}
    where $c_0=15/2$ is the constant given in \eqref{eqn:Neumann-ev}. 
    Let $\wt H^{ \ell}:= -\Delta^{\G_{\ell },N}+ \wt V^{\G_{\ell }}$, and recall $H^\ell=-\Delta^{\G_\ell,N}+\frac{1}{2}V^{\G_\ell}$. 
    Clearly,    $\wt H^{ \ell}  \le H^\ell$ by the definition of $\wt V$.   By the min-max principle \eqref{eqn:NH1<NH2}, then
    \begin{align}\label{eqn:SGE00}
      E_0\big( \wt H^{ \ell}\big)\le   E_0\big( H^{ \ell}\big)  .
    \end{align}
The rest of the work is bounding $E_0\big( \wt H^{ \ell}\big)$ from below by Temple's inequality. We will apply Proposition \ref{prop:temple} to $\wt H^{ \ell}$, with $E_0=E_0\big( \wt H^{ \ell}\big)$, $E_1=E_1\big( \wt H^{ \ell}\big)$, and 
     \[\psi(x)=\frac{1}{\sqrt{|\G_\ell|}}, \ x\in \G_\ell \]
     being the normalized constant ground state of the Neumann Laplacian $-\Delta^{\G_\ell, N}$. 
    
To proceed, we need a lower bound of $E_1\big( \wt H^{ \ell}\big)$.   Combining the min-max principle, the inequality $  \wt H^{ \ell}\ge -\Delta^{\G_\ell,N}$ and the lower bound of $E_1$ in Proposition~\ref{prop:Neumann-ev}, we obtain
\[ E_1\big( \wt H^{ \ell}\big)\ge E_1(-\Delta^{\G_\ell,N})\ge c_0\frac{1}{2^{\ell \beta }}. \]
     Recall that $\psi= |\G_\ell|^{-1/2} $ is the constant eigenfunction of $\Delta^{\G_\ell,N}$ associated with the eigenvalue $0$. Then  $\Delta^{\G_\ell,N}\psi=0$, and so 
     \begin{align}
       \ipc{\psi}{\wt H^{ \ell}\psi}=\ipc{\psi}{\wt V^{\G_\ell}\psi}= \frac{1}{|\G_\ell|}\sum_{x\in \G_\ell}\wt V(x)\le \frac{c_0}{3}2^{-\ell \beta }< E_1\big( \wt H^{ \ell}\big).
     \end{align}
     Hence, the conditions of Temple's inequality are all met. The second term on the right hand side of Temple's inequality \eqref{eqn:temple} can be bounded from above as
     \begin{align}
   \frac{\ipc{\wt H^{ \ell}\psi}{\wt H^{ \ell}\psi}-\ipc{\psi}{\wt H^{ \ell}\psi}^2}{E_1-\ipc{\psi}{\wt H^{ \ell}\psi}}  \le  &\,     \frac{\ipc{\wt H^{ \ell}\psi}{\wt H^{ \ell}\psi} }{ E_1-\ipc{\psi}{\wt H^{ \ell}\psi}} 
   =  \frac{\ipc{\wt V^{\G_\ell}\psi}{\wt V^{\G_\ell}\psi} }{ E_1-\ipc{\psi}{\wt H^{ \ell}\psi}}\nonumber \\
   \le &\, \frac{\frac{c_0}{3}2^{-\ell \beta }|\G_\ell|^{-1}\sum_{x\in \G_\ell}\wt V(x) }{ c_02^{-\ell \beta }-\frac{c_0}{3}2^{-\ell \beta }}=\frac{1}{2|\G_\ell|}\sum_{x\in \G_\ell}\wt V(x).  \label{eqn:5.14}
     \end{align}
Applying Temple's inequality \eqref{eqn:temple}, together with \eqref{eqn:SGE00} and \eqref{eqn:5.14}, thus gives 
     \begin{align}
    E_0\big(   H^{\G_\ell}\big)\ge    E_0\big( \wt H^{ \ell}\big)\ge &\,    \ipc{\psi}{\wt H^{ \ell}\psi}-\frac{\ipc{\wt H^{ \ell}\psi}{\wt H^{ \ell}\psi}-\ipc{\psi}{\wt H^{ \ell}\psi}^2}{E_1-\ipc{\psi}{\wt H^{ \ell}\psi}} \\
  \ge &\,     \frac{1}{|\G_\ell|}\sum_{x\in \G_\ell}\wt V(x)-\frac{1}{2|\G_\ell|}\sum_{x\in \G_\ell}\wt V(x)=\frac{1}{2|\G_\ell|}\sum_{x\in \G_\ell}\wt V(x). \label{eqn:Temple-app}
     \end{align}
  Note that $\{2^{\ell \beta}\wt V(x)\}_{x\in \G_\ell}$ are i.i.d. random variables with range in $[0,c_0/3]$, and with common mean 

\begin{align}
    \mu_\ell=\E\Big(\min\big\{2^{\ell \beta}V(x),\ \frac{c_0}{3} \big\}\Big)\ge \frac{c_0}{3}\P\big(2^{\ell \beta}V(x)>\frac{c_0}{3}\big)=\frac{c_0}{3}\Big[1-\P\big(V(x)\le\frac{c_0}{3\cdot 2^{\ell \beta}}\big)\Big]. 
\end{align}
Therefore, 
\begin{align}\label{eqn:inf-mu}
    \liminf_{\ell\to \infty} \mu_\ell\ge \frac{c_0}{3}[1-\P(V(x)=0)]=:\frac{c_0}{3}p_1>0, 
\end{align}
using that $p_0=1-p_1=\P(V(x)=0)<1$ since the distribution is non-trivial (the support contains more than one point). 
Then for $E>0$, let 
 \begin{align}\label{eqn:lE}
     \ell=\Big\lfloor \frac{1}{\beta\log 2}\log\Big(\frac{c_0p_1}{16}E^{-1} \Big)\Big\rfloor
 \end{align}
 so that 
 \begin{align}\label{eqn:lE2}
  2^{\ell \beta+1}E\le \frac{c_0p_1}{8}, \ \ {\rm and}\ \ 2^\ell\ge \frac{1}{2}\Big(\frac{c_0p_1}{16 }\Big)^{1/\beta} \cdot E^{-\frac{1}{\beta}}  .
 \end{align}
Combing the first inequality in  \eqref{eqn:lE2} with \eqref{eqn:Temple-app} and \eqref{eqn:SG-Lif-upper}, we obtain 
\begin{align}\label{eqn:5.21}
 \frac{1}{|\G_L|} \E\mathcal N(E; H^{\G_L,N} )\le \P \big( E_0 \le E \big)\le
   \P \Big( \frac{1}{2|\G_\ell|}\sum_{x\in \G_\ell}\wt V(x)\le E \Big)
\le \P \Big( \frac{1}{|\G_\ell|}\sum_{x\in \G_\ell}2^{\ell \beta}\wt V(x)\le \frac{c_0p_1}{8} \Big),
\end{align}
and so the problem is reduced to estimating the right-most probability from above.

Applying  the standard type of large deviation estimate (Hoeffding inequality)  to $2^{\ell \beta}\wt V(x)$,  for $E$ sufficiently small (the smallness depending only on $c_0,p_0$), we obtain 
\begin{align}\label{eqn:LDT-app}
 \P \left( \frac{1}{|\G_\ell|}\sum_{x\in \G_\ell}2^{\ell \beta}\wt V(x)\le\frac{c_0p_1}{8}\right) \le e^{-cE^{-\frac{\alpha}{\beta}} },
\end{align}
    where $c$ only depends on  $c_0,p_0,\alpha,\beta$  (in particular, is independent of $E,\ell$). The argument is quite standard and close to the proof for the $\Z^d$ case, except for the choice of the size of the fundamental domain $\ell$ in \eqref{eqn:lE} due to the specific volume control parameter and the walk dimension on the Sierpinski lattice. We sketch the proof of \eqref{eqn:LDT-app} for completeness. 
    The exponentially decaying probability estimate is provided by Hoeffding's inequality for sums of bounded i.i.d. random variables.
    \begin{proposition}[Hoeffding {\cite{hoe1963}}]\label{prop:hoeffding}
        If $\{Y_k\}_{1\le k\le K}$ are i.i.d. random variables ranging in $[0,b]$, then for any $\eps>0$, there is $c=c(\eps,b)>0$ such that 
        \begin{align}
            \P\Big(\frac{1}{K}\sum_{k=1}^KY_k-\E(Y_k)\ge \eps\Big)\le e^{-cK}. 
        \end{align}
    \end{proposition}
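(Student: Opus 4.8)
The statement is the classical Hoeffding inequality, and the plan is to prove it by the exponential moment (Chernoff) method combined with Hoeffding's lemma. First I would center the variables: set $X_k=Y_k-\E(Y_k)$, so that the $X_k$ are i.i.d., have mean zero, and take values in an interval of length $b$, namely $X_k\in[-\mu,b-\mu]$ with $\mu=\E(Y_k)$. The probability in question is then $\P\big(\frac1K\sum_{k=1}^K X_k\ge\eps\big)$.

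Next, for any $t>0$ I would apply the exponential Markov inequality together with independence:
\begin{align*}
\P\Big(\frac1K\sum_{k=1}^K X_k\ge\eps\Big)\le e^{-tK\eps}\,\E\,e^{t\sum_{k=1}^K X_k}=e^{-tK\eps}\prod_{k=1}^K\E\,e^{tX_k}.
\end{align*}
The core input is Hoeffding's lemma: if $X$ is mean zero and supported in an interval of length $b$, then $\E\,e^{tX}\le e^{t^2b^2/8}$ for every real $t$. I would prove this by convexity, bounding $e^{tx}$ from above on the interval by the secant line through its values at the two endpoints, taking expectations (the linear term drops because $\E X=0$), and then checking that the logarithm of the resulting expression, written as a function of $u=tb$, vanishes to first order at $u=0$ and has second derivative of the form $q(1-q)\le\tfrac14$; a second-order Taylor expansion then gives the bound $u^2/8$.

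Combining the two steps yields $\P\big(\frac1K\sum_k X_k\ge\eps\big)\le e^{-tK\eps+Kt^2b^2/8}$ for all $t>0$, and optimizing the exponent over $t$ (the minimizer being $t=4\eps/b^2$) gives $\P\big(\frac1K\sum_k X_k\ge\eps\big)\le e^{-2K\eps^2/b^2}$. Thus the claim holds with $c=c(\eps,b)=2\eps^2/b^2>0$. I do not expect any substantial obstacle here: the only mildly technical ingredient is Hoeffding's lemma, a short convexity computation, and everything else is the routine Chernoff bound followed by a one-variable optimization.
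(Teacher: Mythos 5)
Your proof is correct: the centering, the Chernoff/exponential-Markov step, Hoeffding's lemma via the secant-line convexity bound and the second-order Taylor estimate with $L''\le 1/4$, and the optimization $t=4\eps/b^2$ all go through and give the explicit constant $c=2\eps^2/b^2$, which is more than the proposition requires. The paper itself gives no proof of this proposition --- it is quoted as a classical result with a citation to Hoeffding's 1963 paper --- and your argument is exactly the standard proof of that cited inequality, so there is nothing to reconcile; note only that in the paper's application the bound is used for the lower tail ($\le -\eps$), which follows from your statement by applying it to $b-Y_k$.
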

We have $Y_k=2^{\ell \beta}\wt V_k$ and $K=|\G_\ell|$. Using \eqref{eqn:inf-mu}, take $\ell$ sufficiently large (depending only on $c_0$ and $p_1$)  so that $\mu_\ell\ge c_0p_1/4$. Then Proposition~\ref{prop:hoeffding} gives
\begin{align}
  \P \Big( \frac{1}{|\G_\ell|}\sum_{x\in \G_\ell}2^{\ell \beta}\wt V(x)\le \frac{c_0p_1}{8} \Big) 
  \le &\,\P \Big( \frac{1}{|\G_\ell|}\sum_{x\in \G_\ell}2^{\ell \beta}\wt V(x)-\mu_\ell\le \frac{c_0p_1}{8}-\frac{c_0p_1}{4} \Big) \\
  \le &\, e^{-c|\G_\ell|},
\end{align}
    where $c>0$ only depends on $c_0$ and $p_1$. 
By the  volume lower bound $|\G_\ell|\ge 3^\ell= 2^{\ell \alpha}$, and the lower bound $2^\ell \gtrsim E^{-1/\beta}$ from \eqref{eqn:lE2}, we see 
   $|\G_\ell|\ge c'E^{-\frac{\alpha}{\beta}}$ for some constant $c'$ depending only on $c_0,p_0,\alpha,\beta$. This completes the proof of \eqref{eqn:LDT-app}. The smallness condition of $E$ is determined by the largeness requirement of $\ell$ through the relation \eqref{eqn:lE}. 
    
    Thus \eqref{eqn:5.21} becomes
     \begin{align}
   \frac{1}{|\G_L|} \E\mathcal N(E; H^{\G_L,N} ) \le    \P \big( E_0(H^{ \ell})\le E \big)  \le      \P \big( \frac{1}{2|\G_\ell|}\sum_{x\in \G_\ell}\wt V(x)\le E \big)\le ce^{-c_1E^{-\alpha/\beta}}. 
     \end{align}
     Finally, for fixed $E$, taking the limit of $L\to \infty$ and using \eqref{eqn:IDS-exist-half}, we obtain $ N(E)\le  ce^{-c_1E^{-\alpha/\beta}}$. Then taking the double log limit as $E\searrow 0$ implies the desired Lifshitz tail upper bound        \[ \limsup_{E\searrow 0}   \frac{\log \big|\log  N  (E)\big|}{\log E}\le -\frac{\alpha}{\beta}. \]
\end{proof}

It remains to complete the 
\begin{proof}[Proof of Proposition~\ref{prop:Neumann-ev}]
Let $\Delta^{\G_\ell,N}$ be the (combinatorial) Laplacian on $\ell^2(\G_\ell)$ with Neumann boundary condition, i.e., it is the subgraph Laplacian on $\G_\ell$. Denote the associated probabilistic Laplacian by $\Delta^{\G_\ell,N}_p=D  \Delta^{\G_\ell,N} $, where $D={\rm Diag}\{  \deg_{\G_\ell}(x)^{-1}  \}$ is the 
multiplication operator (diagonal matrix) by the reciprocal of the vertex degree (so all entries are either $1/2$ or $1/4$).  All the eigenvalues of $\Delta^{\G_\ell,N}_p$ can be explicitly determined by the decimation method as described in \cite{shima1991eigenvalue,teplyaev1998spectral}. We will use the following formulation in \cite[Proposition 3.12]{teplyaev1998spectral}: For $\ell\ge 1$, the eigenvalues of $\Delta^{\G_\ell,N}_p$ are given by 
\begin{align}\label{eqn:Tep-eigen}
    \sigma(\Delta^{\G_\ell,N}_p)=\left\{-\frac{3}{2}\right\}\cup \left(\bigcup_{m=0}^{\ell-1} R_{-m}\left\{0,-\frac{3}{4}\right\}\right),
\end{align}
where $R(z)=z(4z+5)$, and $R_{-m}A$ is the preimage of a set $A\subseteq \R$ under the $m$-th composition power of $R$. For any $x\in \R$, 
its preimage under $R$ is 
\[R_{-1}\{x\}=\Big\{\ \frac{-5-\sqrt{25+16x}}{8},\ \frac{-5+\sqrt{25+16x}}{8}\ \Big\}.  \]
Denote the larger root  (at least for $x\ge-25/16$)  by 
\begin{align}\label{eqn:deci-f}
f(x)=\frac{-5+\sqrt{25+16x}}{8}.
\end{align}
By \eqref{eqn:Tep-eigen} and the monotonicity of $f$, the largest eigenvalue of $\Delta^{\G_\ell,N}_p$ is $0$, and the second largest  eigenvalue of $\Delta^{\G_\ell,N}_p$ is the $(\ell-1)$-th iteration of $-3/4$ under $f$, i.e., $f^{\circ(\ell-1)}(-3/4)=f\circ f \circ \cdots \circ f(-3/4)$. 
Computation of the series expansion of $f$ with the Taylor remainder theorem shows that for $-1\le x\le0$,  
\begin{align}
   \frac{1}{5}x(1-x) \le f(x)\le \frac{1}{5}x, 
\end{align}
which implies that for $n\ge 0$ and $-1\le x\le 0$
\begin{align}\label{eqn:f-it}
       \frac{4}{5^n}x\le f^{\circ n}(x)\le \frac{1}{5^n}x.
\end{align}
The upper bound is immediate, the lower bound of \eqref{eqn:f-it} can be proved by a direct induction and the constant is not optimal. We include the computation in  Appendix~\ref{sec:f-it}.

By \eqref{eqn:f-it}, we obtain 
\begin{align}
 -15\frac{1}{2^{\beta \ell}}  =- \frac{3}{4}\frac{4}{5^{\ell-1}}\le f^{\circ(\ell-1)}(-3/4)\le -\frac{3}{4}\frac{1}{5^{\ell-1}}=-\frac{15}{4}\frac{1}{2^{\beta \ell}}, 
\end{align}
 in which we used $5^\ell=2^{\ell \log 5/\log 2}=2^{\beta\ell}$. 
In other words, the first (smallest) eigenvalue of $-\Delta^{\G_\ell,N}_p$ is $E_0(-\Delta^{\G_\ell,N}_p)=0$, and the second (the first non-zero) eigenvalue of $-\Delta^{\G_\ell,N}_p$ is $E_1(-\Delta^{\G_\ell,N}_p)=- f^{\circ(\ell-1)}(-3/4)$ satisfying 
\begin{align}\label{eqn:E1-lower-p}
 \frac{15}{4}\frac{1}{2^{\beta \ell}}\le   E_1(-\Delta^{\G_\ell,N}_p) \le  15\frac{1}{2^{\beta \ell}}. 
\end{align}

Note that both  $D $ and $-\Delta^{\G_\ell,N}$ are positive semidefinite. In addition, the largest eigenvalue of $D $ is $1/2$. Then by the  majorization theory of eigenvalues \eqref{eqn:EjAB}, one has 
\begin{align}
 \frac{1}{2}E_1\big( -\Delta^{\G_\ell,N} \big)\ge    E_1\Big(D (-\Delta^{\G_\ell,N})\Big)=E_1\Big( -\Delta_p^{\G_\ell,N} \Big) ,
\end{align}
which, together with \eqref{eqn:E1-lower-p}, implies that 
\[ E_1(-\Delta^{\G_\ell,N})  \ge 2E_1(-\Delta^{\G_\ell,N}_p) \ge \frac{15}{2}\frac{1}{2^{\beta \ell}}.  \]
Similarly, using that the smallest eigenvalue of $D $ is $1/4$ and applying \eqref{eqn:EjAB} again, we obtain
\[ E_1(-\Delta^{\G_\ell,N})  \le 4E_1(-\Delta^{\G_\ell,N}_p) \le 60\frac{1}{2^{\beta \ell}}.  \]
\end{proof}

%%%%%%%%%%%%%%%%%%%%%%%%%%%%%%%%%%%%%%%%%%%%%%%%%%%%%%%%%%%%%%%%%%%%%%%%%%%%%%%%%%

\subsection{(Modified) Dirichlet bracketing and the Lifshitz tail lower bound}

For the lower bound of \eqref{eqn:lif-SG}, we use the disjoint partition $\wt{\mathcal P}$ \eqref{eqn:Part2-SG} and the modified Dirichlet Laplacian \eqref{eqn:Lap-D}. Similar to the Neumann bracketing, given $L>\ell>0$, write 
\begin{align}
    \G_L=\Bigg(\bigcup_{j=1}^{3^{L-\ell}}\wt \G_{\ell,j}\Bigg)\cup \mathcal R, 
\end{align}
but where $\wt \G_{\ell,j}$ are all the truncated (and hence disjoint) $2^\ell$-triangles associated with $\G_{\ell,j}\subseteq \G_L$ and $\mathcal R$ is the collection of all the extreme vertices of all $\G_{L,j}\subseteq \G_L$; see Figure~\ref{fig:Vpart2}. \begin{figure}
    \centering
    \includegraphics[width=0.5\linewidth]{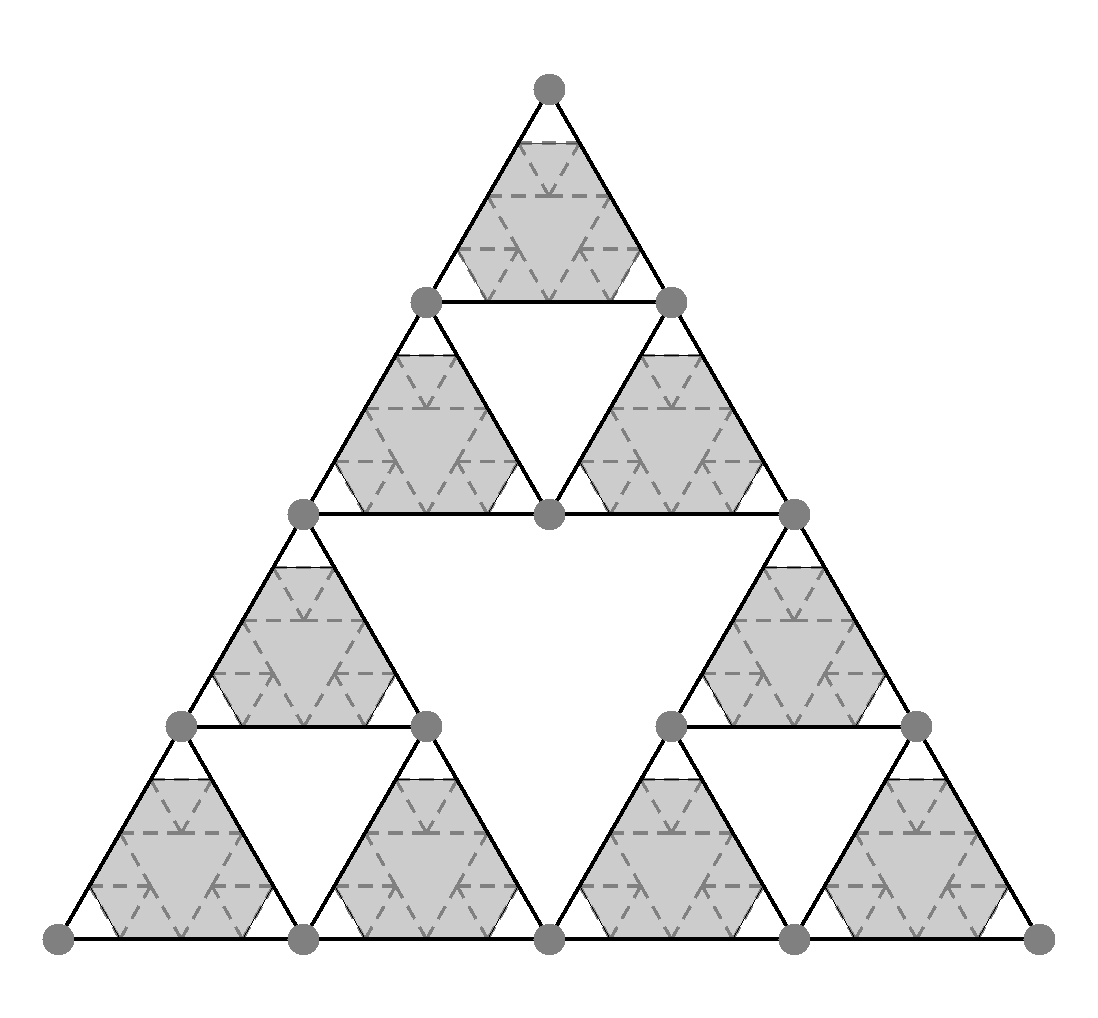}
    \caption{The $2^4$-triangle $\G_4$ (the big triangle) consists of 9 $2^2$-triangle $\{\G_{2,i}\}_{i=1}^9$. The $2^2$-triangles have totally 15 extreme vertices (the filled circles) $\mathcal R=\{e_j\}_{j=1}^{15}$.  
    Remove $\mathcal R$ from $\G_4$, then the truncated triangles $\wt \G_{2,i}$ (the shadowed ones) are disjoint, and $(\cup_{i=1}^9\wt \G_{2,i})\cup \mathcal R$ form a disjoint partition of $\G_4$.}
    \label{fig:Vpart2}
\end{figure}

Let  $-\Delta^{\wt \G_{\ell,j},D}$ be the   (modified) Dirichlet Laplacian on $\wt \G_{\ell,j}$ as in \eqref{eqn:Lap-D}.   Using the same argument as in \eqref{eqn:energy-bound1}  to bound the removed edge energy between   $\G_{\ell,j}$ and $\mathcal R$, we obtain  
\begin{align}
  \ipc{f}{-\Delta^{\G_L,N} f}_{\ell^2(\G_L)}\le \sum_{\wt \G_{\ell,j}\subseteq \G_L  }\ipc{f}{-\Delta^{\wt \G_{\ell,j},D} f}_{\ell^2(\wt \G_{\ell,j})}+8\sum_{e_j\in \mathcal R}f(e_j)^2,   
\end{align}
which implies 
\begin{align}
  \ipc{f}{(-\Delta^{\G_L,N}+V^{\G_L}) f}_{\ell^2(\G_L)}\le \sum_{\wt \G_{\ell,j}\subseteq \G_L  }\ipc{f}{(-\Delta^{\wt \G_{\ell,j},D}+V^{\wt \G_{\ell,j}}) f}_{\ell^2(\G_{\ell,j})}+\sum_{e_j\in \mathcal R}(8+V_{e_j})f(e_j)^2.   
\end{align}

Then by Lemma~\ref{lem:A3}, dropping terms on the right hand side as in \eqref{eqn:4.16},
\begin{align}
    \mathcal N( E; H^{\G_L,N} ) \ge  \sum_{\wt \G_{\ell,j}\subseteq \G_L  }  \mathcal N(E;    H^{\wt \G_{\ell,j},D} ).
\end{align}

Taking the expectation both sides, using the fact that all $\wt \G_{\ell,j}$ are isometric to $\G_\ell$ and $\{V(x)\}$ are i.i.d., we obtain 
\begin{align}\label{eqn:D-direct-sum-lower}
  \E \mathcal N(E; H^{\G_L,N}) \ge 3^{L-\ell} \E \mathcal N( E; H^{\wt \G_{\ell},D}).
\end{align}
 Let $E_0=E_0(H^{\wt \G_{\ell},D})$ be the ground state energy of $H^{\wt \G_{\ell},D}$. Given $E>0$, if $E_0(H^{\wt \G_{\ell},D})\le E$, then $\mathcal N(E; H^{\wt \G_{\ell},D})$ is at least one. Hence, 
\begin{align} \label{eqn:D-N-P-lower}
     \E \mathcal N(E; H^{\wt \G_{\ell},D})\ge \P\Big(E_0(H^{\wt \G_{\ell},D})\le E\Big).
\end{align}
It is enough to bound $\P\Big(E_0(H^{\wt \G_{\ell},D})\le E \Big)$ from below, or equivalently, to bound the ground state energy $E_0(H^{\wt \G_{\ell},D})$ from above. 
In order to make use of estimates for the ground state energy of the Laplacian with {simple} boundary conditions (Proposition~\ref{prop:Diri-ev}), we consider slightly smaller truncated triangles that avoid the boundary vertices.
For $\ell$ sufficiently large, let $\wt T \subsetneq \wt \G_\ell$ be a truncated $2^{\ell_1}$-triangle with side length $\ell_1=\ell-2$, and  located away from the (interior) boundary vertices $\{o_i\}_{i=1}^6$  of  $\wt \G_\ell$ (see Figure \ref{fig:wtB}).
\begin{figure}
    \centering
\includegraphics[width=.5\textwidth]{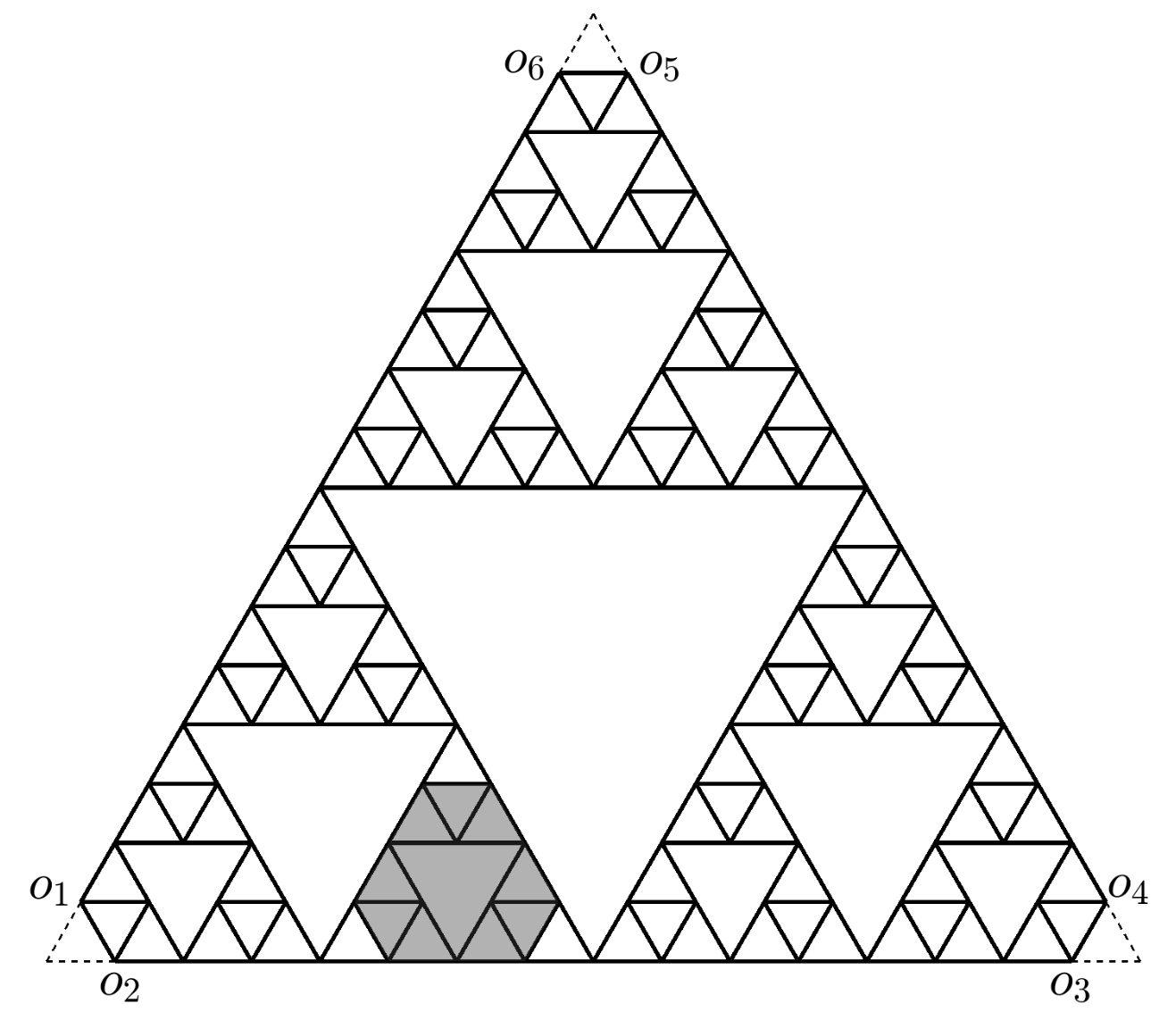}
    \caption{The entire figure is a truncated $2^4$-triangle $\wt \G_4$. The shaded region is a smaller truncated $2^2$-triangle $\wt T\subsetneq \wt \G_4$, located near the midpoint of the bottom edge so that $\wt \T$ is strictly away from the 6 interior boundary vertices of $\wt \G_4$. }
    \label{fig:wtB}
\end{figure}
Let $\phi_0$ be the ground state of the Laplacian $-\Delta^{\wt T}$ on $\wt T$, with the simple boundary condition as in \eqref{eqn:Lap-simple}. Extend $\phi_0$ to $\ell^2(\wt \G_\ell)$ by setting $\phi_0(x)=0$ on $\wt \G_\ell\backslash \wt T$.
Then $-\Delta^{\wt \G_\ell,D}\phi_0=-\Delta^{\wt T}\phi_0$ since $\wt T$ is located away from the interior boundaries of $\wt \G_\ell$. 
By the min-max principle, then
\begin{align}
   E_0(H^{\wt \G_{\ell},D})=\inf_{\varphi\neq 0}\frac{\ipc{\phi}{H^{\wt \G_{\ell},D}\phi}}{\ipc{\phi}{\phi}} \le & \,\inf_{\varphi\neq 0}\frac{\ipc{\phi}{-  \Delta^{\wt \G_\ell,D}\phi}}{\ipc{\phi}{\phi}}+\max_{\wt \G_\ell}V(x) \nonumber\\
     \le & \,\frac{\ipc{\phi_0}{-  \Delta^{\wt \G_\ell,D}\phi_0}}{\ipc{\phi_0}{\phi_0}}+\max_{\wt \G_\ell}V(x)
      = \frac{\ipc{\phi_0}{-\Delta^{\wt T }\phi_0}}{\ipc{\phi_0}{\phi_0}}+\max_{\wt \G_\ell}V(x) \nonumber\\
       = &\, E_0\big(-\Delta^{\wt T }\big)+\max_{\wt \G_\ell}V(x). \label{eqn:E0-1}
\end{align}
We need the following upper bound of the simple (zero Dirichlet) Laplacian eigenvalue on truncated triangles.
\begin{proposition}\label{prop:Diri-ev}
Let $\wt \G_\ell \subseteq \G$ be a truncated $2^\ell$-triangle. 
Let $E_0(-\Delta^{\wt \G_\ell})$ be the ground state energy (smallest eigenvalue) of the  Laplacian with simple boundary conditions $-\Delta^{\wt \G_\ell}$. There are numerical constants $c_0=40,c'_0=10$, such that for any $\ell\in \N$,     \begin{align}\label{eqn:Diri-E0-upper-tri}
  \frac{c'_0}{2^{\ell \beta}}   \le    E_0(-\Delta^{\wt \G_\ell})\le \frac{c_0}{2^{\ell \beta}} . 
    \end{align}

\end{proposition}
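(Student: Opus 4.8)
The plan is to identify $-\Delta^{\wt\G_\ell}$ exactly with (four times) the classical Dirichlet Laplacian on the finite Sierpinski triangle $\G_\ell$, whose spectrum is accessible by the same spectral decimation used for Proposition~\ref{prop:Neumann-ev}, and then to read off its ground state eigenvalue and sandwich it with the iteration estimate \eqref{eqn:f-it}.

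First I would observe that deleting the three extreme vertices of $\G_\ell$ leaves every vertex of $\wt\G_\ell$ with degree $4$ inside $\G_\ell$: the two neighbours of a deleted extreme vertex, like all other non-extreme vertices of $\G_\ell$, already have full degree within $\G_\ell$. Hence the simple-boundary Laplacian $-\Delta^{\wt\G_\ell}$ of \eqref{eqn:Lap-simple} is precisely the combinatorial Dirichlet Laplacian of $\G_\ell$ obtained by imposing zero values at the three corners, and since all retained degrees equal $4$ one has the exact identity $-\Delta^{\wt\G_\ell}=4\bigl(-\Delta_p^{\wt\G_\ell}\bigr)$, where $-\Delta_p^{\wt\G_\ell}$ is the corresponding probabilistic Dirichlet Laplacian. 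In particular, unlike the Neumann case (where the degree-$2$ corners forced the majorization step \eqref{eqn:EjAB} and a factor $4$ loss in both directions), the passage between the two normalizations here is an exact scalar factor of $4$, which is why the sharp-looking constants $c_0'=10$, $c_0=40$ are attainable.

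Next I would invoke the spectral decimation for the Dirichlet spectrum of the probabilistic Laplacian on the level-$\ell$ gasket (Fukushima--Shima \cite{fuku1992}; see also \cite{shima1991eigenvalue,teplyaev1998spectral}), the Dirichlet analogue of \eqref{eqn:Tep-eigen}. It writes $\sigma\bigl(-\Delta_p^{\wt\G_\ell}\bigr)$ as a union of preimages under iterates of $R(z)=z(4z+5)$ of the level-$1$ seed values; the base case is checked by hand, since $\wt\G_1$ is the triangle $K_3$ and $-\Delta_p^{\wt\G_1}=\tfrac54 I-\tfrac14 J$ has eigenvalues $\tfrac12$ (simple) and $\tfrac54$. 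Following the branch of smallest magnitude through the decimation identifies the ground state eigenvalue as $E_0\bigl(-\Delta_p^{\wt\G_\ell}\bigr)=\bigl|f^{\circ(\ell-1)}(-\tfrac12)\bigr|$, with $f$ the larger preimage branch of $R$ as in \eqref{eqn:deci-f}. Since $f$ is increasing and $f(-1)=-\tfrac14$, the iterates $f^{\circ n}(-\tfrac12)$ stay in $[-1,0]$, so \eqref{eqn:f-it} applies with $x=-\tfrac12$ and $n=\ell-1$:
\[
 \frac12\cdot\frac{1}{5^{\,\ell-1}}\;\le\;\bigl|f^{\circ(\ell-1)}(-\tfrac12)\bigr|\;\le\;\frac12\cdot\frac{4}{5^{\,\ell-1}}.
\]
Multiplying by $4$ and using $5^\ell=2^{\ell\beta}$ yields $\dfrac{10}{2^{\ell\beta}}\le E_0\bigl(-\Delta^{\wt\G_\ell}\bigr)\le\dfrac{40}{2^{\ell\beta}}$, which is \eqref{eqn:Diri-E0-upper-tri}.

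The main obstacle is the second step: correctly setting up (or precisely quoting) the Dirichlet decimation and verifying that the bottom of the spectrum is governed solely by the branch $-\tfrac12\mapsto f(-\tfrac12)\mapsto\cdots$ — i.e.\ that neither the companion seed $-\tfrac54$ (which satisfies $R(-\tfrac54)=0$) nor any other preimage produces a smaller eigenvalue at any level, and that the level and multiplicity bookkeeping in the decimation really places the $(\ell-1)$-st iterate of $-\tfrac12$ in $\sigma\bigl(-\Delta_p^{\wt\G_\ell}\bigr)$. Once this accounting is pinned down, the remainder is the routine use of \eqref{eqn:f-it} already developed for Proposition~\ref{prop:Neumann-ev}, and the estimate holds for every $\ell\ge1$ (the case $\ell=1$ being the equality case of the lower bound). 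As a cross-check for the lower bound one may also note that $-\Delta^{\wt\G_\ell}$ is a codimension-$3$ compression of the simple-boundary Laplacian $-\Delta^{\G_\ell}$ of the untruncated triangle, so Cauchy interlacing gives $E_0(-\Delta^{\wt\G_\ell})\ge E_0(-\Delta^{\G_\ell})$, reducing that half of the claim to the (again decimation-computable) bottom Dirichlet eigenvalue of $\G_\ell$.
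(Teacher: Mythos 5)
Your proposal is correct and follows essentially the same route as the paper: identify $-\Delta^{\wt\G_\ell}$ with $4$ times the probabilistic Dirichlet Laplacian, read off the ground state as the $(\ell-1)$-st $f$-iterate of $-\tfrac12$ from the spectral decimation of \cite{shima1991eigenvalue,teplyaev1998spectral}, and apply \eqref{eqn:f-it} to get $c_0'=10$, $c_0=40$. The decimation bookkeeping you flag as the main obstacle is exactly what the paper disposes of by citing \cite[\S 6]{teplyaev1998spectral}, so there is no gap beyond the paper's own level of detail.
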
 
    This is the analogue of Proposition~\ref{prop:Neumann-ev} for the simple (zero Dirichlet) Laplacian. One proof is again based on the recursive expression for the eigenvalue obtained by the  decimation method in \cite{shima1991eigenvalue,teplyaev1998spectral}. By \cite[\S 6]{teplyaev1998spectral}, the ground state eigenvalue is given by
    \begin{align}
         E_0( \Delta^{\wt \G_\ell})=4f^{\circ(\ell-1)}\Big(-\frac{1}{2}\Big),
    \end{align}
    where $f(x)$ is the same function as in \eqref{eqn:deci-f}. Using again the iteration estimate of $f^{\circ(n)}(x)\sim x/5^n$ in \eqref{eqn:f-it}, one obtains \eqref{eqn:Diri-E0-upper-tri} with $c_0=40$ and $c_0'=10$. We omit the details here. 
    
\begin{remark}    
    By the min-max principle, the same estimate holds for the Dirichlet Laplacian on a (non-truncated) $2^\ell$-triangle $\G_\ell$. 
    More generally, the asymptotic behavior of the first Dirichlet eigenvalue on a graph ball, $ E_0(-\Delta^{B(x,r)}) \approx r^{-\beta}$, always holds on graphs satisfying the Heat Kernel Bound $\mathrm{HK}(\alpha,\beta)$, using the fact that $E_0$ is always proportional to the reciprocal of the exist time from balls; see e.g. \cite[Corollary 2]{shou2024}.     
\end{remark}

Applying the upper bound of  \eqref{eqn:Diri-E0-upper-tri} to  \eqref{eqn:E0-1} with $\wt T=\wt \G_{\ell_1},\ell_1=\ell-2$, we arrive at 
\begin{align}
   E_0(H^{\wt \G_{\ell},D})\le \frac{c_1}{2^{\ell \beta}}+\max_{\wt \G_\ell}V(x). 
\end{align}
Let 
\begin{align*}
    \ell=\Big\lceil \frac{1}{\beta\log 2}\log\big(2c_1 E^{-1}\big) \Big\rceil,
\end{align*}
so that 
\begin{align}\label{eqn:lE-upper}
    \frac{c_1}{2^{\ell \beta }}\le \frac{1}{2}E,\ \ {\rm and}\ \ \ \  2^\ell\le 2 (2c_1)^{\frac{1}{\beta}}\cdot E^{-\frac{1}{\beta}}.
\end{align}
Then $E_0(H^{\wt \G_{\ell},D})\le E/2+\max_{\wt \G_\ell}V(x)  $, which implies  for sufficiently small $E$,
\begin{align}
  \P\Big(E_0(H^{\wt \G_{\ell},D})\le E\Big)\ge \P\Big(\max_{x\in \wt \G_\ell}V(x)\le E/2\Big) \ge&\;  \P\Big({\textrm{For all} } \ x\in\wt \G_\ell, V(x)\le E/2\Big)\nonumber \\
 \ge &\; C(E/2)^{\kappa |\wt \G_\ell|} \\
  \ge &\; Ce^{c_2 (\log(E/2)) E^{-\frac{\alpha}{\beta}}}, \label{eqn:PE0-lower}
\end{align}
where we used the probability distribution assumption $\P\big(V(x)\le E\big)\ge CE^{\kappa}$  from \eqref{eqn:V-Lif-ass}, and the upper bound of 
\[|\wt \G_\ell|\le  3^{\ell+1}=3\times 2^{\ell \alpha}\lesssim E^{-\alpha/\beta}  \]
 from \eqref{eqn:lE-upper} (since $\log(E/2)$ is negative for small $E$). Putting \eqref{eqn:PE0-lower} together with \eqref{eqn:D-direct-sum-lower} and \eqref{eqn:D-N-P-lower}, we obtain 
\begin{align}
    \frac{1}{|\G_L|} \E\mathcal N(E; H^{\G_L,N})\ge \frac{3^{L-\ell}}{|\G_L|} \P\Big(E_0(H^{\wt \G_{\ell},D})\le E\Big)\ge   c_3E^{-\frac{\alpha}{\beta}}e^{c_2 (\log(E/2)) E^{-\frac{\alpha}{\beta}}}.
\end{align}
 Similarly  as for the upper bound, for fixed $E$, taking the limit as $L\to \infty$ and again using \eqref{eqn:IDS-exist-half}, we obtain
        $ N(E)\ge  c_3E^{-\frac{\alpha}{\beta}}e^{c_2 (\log(E/2)) E^{-\frac{\alpha}{\beta}}}$. Then taking the double log limit as $E\searrow 0$ implies the desired Lifshitz tail lower bound
        \[ \liminf_{E\searrow 0}   \frac{\log \big|\log  N  (E)\big|}{\log E}\ge -\frac{\alpha}{\beta}. \]

%%%%%%%%%%%%%%%%%%%%%%%%%%%%%%%%%%%%%%%%%%%%%%%%%%%%%%%%%%%%%%%%%%%%%%%%%%%%%%%%%%%%%%%%%%%%%%%
\appendix 
\section{Eigenvalue counting comparison}\label{sec:min-max}
The min-max principle by E. Fischer \cite{fischer1905quadratische}
and R. Courant \cite{courant1920eigenwerte} for self-adjoint operators (bounded from below) is a useful tool to count the eigenvalues  below the essential spectrum of the operator.  In this appendix, we briefly summarize some of the consequences of the min-max principle for   self-adjoint linear operators $H$ (actually real symmetric matrices) on a finite dimensional Hilbert space $\mathcal H=\C^K$. In this case, $H$ has eigenvalues $\lambda_1\le \cdots\le \lambda_K$ and the eigenvalue counting function is denoted as  $ \mathcal N(E;H)=\#\{i:\lambda_i\le E\}$. Clearly, 
$ \mathcal N(E;H)={\rm dim\ span}\{\phi^i\in \mathcal H:\ H\phi^i=\lambda_i\phi^i,\lambda_i\le E\}={\rm dim}\{f\in \mathcal H:\ipc{f}{Hf}\le E\ipc{f}{f}\}$.  As a consequence, if 
     $\ipc{f}{Hf}> E\ipc{f}{f}$,  for all $f$ in a subspace $S\subseteq \mathcal H,$
 then \begin{align}\label{eqn:codim}
      \mathcal N(E;H)\le {\rm codim}S. 
 \end{align}
 and if 
     $\ipc{f}{Hf}\le E\ipc{f}{f}$,  for all $f\in S,$
 then \begin{align}\label{eqn:dim}
      \mathcal N(E;H)\ge {\rm dim}S. 
 \end{align}
 
 For the matrix case,  the min-max principle reads (see e.g. \cite[{Theorem 4.13}]{aizenman2015random})
\begin{align}
    \lambda_n=\min_{\phi^1,\cdots,\phi^n \in \mathcal H}\max\big\{\ipc{\phi}{H\phi}:\ \ \phi\in {\rm span}(\phi^1,\cdots,\phi^n) , \ \|\phi\|=1\big\}.
\end{align}
Hence, if   $H_i,i=1,2$ are two self-adjoint   operators on   $\mathcal H$ satisfying $H_1\ge H_2$ in the sense of positive operators on $\mathcal H$, then 
    \begin{align}\label{eqn:NH1<NH2}
        \mathcal N(E;H_1) \le \mathcal N(E;H_2). 
    \end{align}
 We would like to compare the eigenvalue counting functions between matrices up to some rank one perturbations or orthogonal projections onto a smaller subspace. 

\begin{lemma} \label{lem:A1}
    Let $H$ be  a real symmetric matrix  on $\mathcal H=\C^K$. Let $H_1$ be the orthogonal projection of $H$ acting on a linear subspace $\mathcal H_1\cong \C^{L}$.  Then 
\begin{align}\label{eqn:NH-cauchy-inter}
    \mathcal N(E;H_1)\le  \mathcal N(E;H)\le  \mathcal N(E;H_1)+{\rm codim}H_1.
\end{align}
Now suppose $H_2=H+\sum_{j=1}^mc_je_{i_j}e^T_{i_j}$ is a rank $m$ diagonal perturbation of $H$. Then 
\begin{align}\label{eqn:NH-diag-pert}
      |\mathcal N(E;H)-\mathcal N(E;H_2)|\le   m.
\end{align}
\end{lemma}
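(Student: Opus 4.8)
The plan is to derive both inequalities from the basic subspace-dimension characterizations of the eigenvalue counting function recorded just above the lemma, namely $\mathcal N(E;H)\ge \dim S$ whenever $\ipc{f}{Hf}\le E\ipc{f}{f}$ for all $f\in S$, and $\mathcal N(E;H)\le \codim S$ whenever $\ipc{f}{Hf}> E\ipc{f}{f}$ for all $f\in S$. For the projection statement \eqref{eqn:NH-cauchy-inter}, write $P$ for the orthogonal projection onto $\mathcal H_1$, so $H_1 = PHP\big|_{\mathcal H_1}$, and observe that for $f\in\mathcal H_1$ one has $\ipc{f}{H_1 f}=\ipc{f}{Hf}$. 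The left inequality $\mathcal N(E;H_1)\le\mathcal N(E;H)$ then follows by taking $S$ to be the span of eigenvectors of $H_1$ with eigenvalue $\le E$: these lie in $\mathcal H_1$, so $\ipc{f}{Hf}=\ipc{f}{H_1 f}\le E\|f\|^2$ on $S$, giving $\mathcal N(E;H)\ge\dim S=\mathcal N(E;H_1)$. For the right inequality, let $W$ be the span of eigenvectors of $H$ with eigenvalue $>E$, so $\dim W = K-\mathcal N(E;H)$ and $\ipc{g}{Hg}>E\|g\|^2$ for all nonzero $g\in W$. The subspace $W\cap\mathcal H_1$ has dimension at least $\dim W - \codim\mathcal H_1$, and on it $\ipc{f}{H_1 f}=\ipc{f}{Hf}>E\|f\|^2$; hence $\mathcal N(E;H_1)\le \dim\mathcal H_1 - \dim(W\cap\mathcal H_1)\le \dim\mathcal H_1 - \dim W + \codim\mathcal H_1 = \mathcal N(E;H)$, which rearranges to the claimed bound since $\dim\mathcal H_1 = K-\codim\mathcal H_1$. (Equivalently this is the Cauchy interlacing theorem, which may be cited directly.)

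For the rank-$m$ diagonal perturbation \eqref{eqn:NH-diag-pert}, the cleanest route is to induct on $m$, reducing to the rank-one case $H_2 = H + c\,e e^T$ with $\|e\|=1$. Here I would split into the two sign cases. If $c>0$, then $H_2\ge H$, so by \eqref{eqn:NH1<NH2} we get $\mathcal N(E;H_2)\le\mathcal N(E;H)$; for the reverse direction, take $S$ the span of eigenvectors of $H$ with eigenvalue $\le E$ and let $S' = S\cap \{e\}^\perp$, which has dimension $\ge \mathcal N(E;H)-1$ and on which $\ipc{f}{H_2 f}=\ipc{f}{Hf}\le E\|f\|^2$, so $\mathcal N(E;H_2)\ge\mathcal N(E;H)-1$. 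The case $c<0$ is symmetric (swap the roles of $H$ and $H_2$, noting $H = H_2 + (-c)ee^T$ with $-c>0$). This establishes $|\mathcal N(E;H)-\mathcal N(E;H_2)|\le 1$ in the rank-one case, and iterating over the $m$ rank-one summands $c_j e_{i_j}e_{i_j}^T$ gives the bound $m$ by the triangle inequality.

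I do not expect a serious obstacle here; the only point requiring mild care is the bookkeeping of dimensions and codimensions in the projection argument (making sure the intersected subspace has the asserted dimension, via $\dim(U\cap V)\ge\dim U+\dim V-K$), and keeping the inequality directions straight across the two sign cases of the rank-one perturbation. Alternatively, both parts are standard consequences of Weyl's inequalities / Cauchy interlacing for Hermitian matrices, so if brevity is preferred one can simply invoke those, but the self-contained min-max argument above is short enough to include in full.
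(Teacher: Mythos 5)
Your argument for the second inequality in \eqref{eqn:NH-cauchy-inter} does not actually prove it. Intersecting the $>E$ eigenspace $W$ of $H$ with $\mathcal H_1$ can only bound $\mathcal N(E;H_1)$ from \emph{above}: your chain reads $\mathcal N(E;H_1)\le \dim\mathcal H_1-\dim(W\cap\mathcal H_1)\le \dim\mathcal H_1-\dim W+\codim\mathcal H_1=\mathcal N(E;H)$, which is precisely the left inequality you had already established, and no rearrangement converts $\mathcal N(E;H_1)\le\mathcal N(E;H)$ into $\mathcal N(E;H)\le\mathcal N(E;H_1)+\codim\mathcal H_1$. What the right inequality requires is a \emph{lower} bound on $\mathcal N(E;H_1)$. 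The fix is the mirror image of your left-inequality step: let $S$ be the span of the eigenvectors of $H$ with eigenvalue $\le E$; then $\dim(S\cap\mathcal H_1)\ge \dim S-\codim\mathcal H_1=\mathcal N(E;H)-\codim\mathcal H_1$, and for $f\in S\cap\mathcal H_1$ one has $\ipc{f}{H_1f}=\ipc{f}{Hf}\le E\ipc{f}{f}$, so \eqref{eqn:dim} applied to $H_1$ on $\mathcal H_1$ gives $\mathcal N(E;H_1)\ge \mathcal N(E;H)-\codim\mathcal H_1$. (Equivalently: take the $>E$ eigenspace $W_1$ of $H_1$, regard it as a subspace of $\mathcal H$ on which the form of $H$ exceeds $E$, and apply \eqref{eqn:codim} to get $\mathcal N(E;H)\le K-\dim W_1=\mathcal N(E;H_1)+\codim\mathcal H_1$.) With this correction, or by simply citing the Cauchy interlacing theorem as you suggest in passing --- which is exactly what the paper does --- the first part is in order.

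Your proof of \eqref{eqn:NH-diag-pert} is correct, and it takes a different route from the paper's: the paper observes that $H$ and $H_2$ have the same compression $\wt H$ onto $\{e_{i_1},\ldots,e_{i_m}\}^\perp$, a subspace of codimension $m$, and applies \eqref{eqn:NH-cauchy-inter} to both operators, whereas you iterate a rank-one estimate (split by the sign of $c_j$, using \eqref{eqn:NH1<NH2} in one direction and a codimension-one subspace of the $\le E$ eigenspace in the other) and sum over the $m$ terms. Both arguments are short; yours has the minor advantage of not relying on the first part of the lemma, the paper's of handling all $m$ diagonal perturbations in one stroke.
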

\begin{proof}
    Suppose $H$ has eigenvalues $\lambda_1\le \cdots\le \lambda_K$ and $H_1$ has eigenvalues $\mu_1\le \cdots\le \mu_{L}$. By the Cauchy interlacing theorem, 
\begin{align}
    \lambda_i\le \mu_i\le \lambda_{i+K-L}, i=1,\cdots,L.
\end{align}
which implies \eqref{eqn:NH-cauchy-inter}. 

For the second part, note that $H$ and $H_2$ have the same orthogonal projection $\wt H$, obtained by deleting the rows and columns from $H$ for $i_1,\cdots,i_m$, onto the subspace $\wt {\mathcal H}=\{e_{i_1},\ldots,e_{i_m}\}^\perp$ with ${\rm codim}\wt {\mathcal H}=m$. Then \eqref{eqn:NH-cauchy-inter}  implies 
\begin{align}
  \mathcal N(E; H_2 )-m\le \mathcal N(E;\wt H )\le   \mathcal N(E;H)\le  \mathcal N(E;\wt H )+m\le \mathcal N(E;  H_2 )+m, 
\end{align}
which is 
\eqref{eqn:NH-diag-pert}.  
\end{proof}

\begin{lemma}\label{lem:NH<NH12}
Let $H$ be a self-adjoint matrix on $\mathcal{H}\cong\C^n$, and let $\{v_1,\ldots,v_n\}$ be an orthonormal basis for $\mathcal{H}$. For $i=1,\ldots,k$, set $\mathcal{H}_i=\operatorname{span}(\{v_m:m\in I_i\})$ for sets $I_i\subseteq\{1,\ldots,n\}$, and let $H_i$ be a self-adjoint operator on $\mathcal{H}_i$ for each $i=1,\ldots,k$.
Suppose $\bigcup_{i=1}^kI_i=\{1,\ldots,n\}$, and that
\begin{align}\label{eqn:H>H12}
\ipc{f}{Hf}_{\mathcal H}\ge \sum_{i=1}^k\ipc{P_if}{H_iP_if}_{\mathcal H_i}, \quad\text{for all }f\in\mathcal H,
\end{align}
where $P_i$ is the orthogonal projection onto $\mathcal H_i$.
 Then \begin{align}\label{eqn:NH<H12}
      \mathcal N(E;H)\le  \sum_{i=1}^k\mathcal N(E;H_i),
 \end{align}
 where $\mathcal N(E;H_i)$ is the eigenvalue counting function for the  restriction  of $H_i$ on $\mathcal H_i$.  
\end{lemma}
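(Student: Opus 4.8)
The plan is to reduce everything to the min-max consequence \eqref{eqn:codim}: it suffices to produce a subspace $S \subseteq \mathcal H$ with $\codim S \le \sum_{i=1}^k \mathcal N(E; H_i)$ on which $\ipc{f}{Hf} > E\ipc{f}{f}$ for all nonzero $f$.

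First I would, for each $i$, split $\mathcal H_i$ orthogonally as $\mathcal H_i = T_i \oplus S_i$, where $T_i$ is the spectral subspace of $H_i$ for eigenvalues $\le E$ (so $\dim T_i = \mathcal N(E; H_i)$) and $S_i$ is the spectral subspace for eigenvalues $> E$. On $S_i$ one has $\ipc{g}{H_i g}_{\mathcal H_i} \ge E_i'\,\ipc{g}{g}$, where $E_i'$ denotes the smallest eigenvalue of $H_i$ exceeding $E$ (defined whenever $S_i \ne \{0\}$, in which case $E_i' > E$). I would then set
\[
  S \;=\; \{\, f \in \mathcal H : P_i f \in S_i \text{ for all } i=1,\dots,k \,\}.
\]
Because $T_i \subseteq \mathcal H_i$ and $P_i$ fixes $\mathcal H_i$ pointwise, the condition $P_i f \in S_i$ is equivalent to $f \perp T_i$, so $S = \big(\sum_{i=1}^k T_i\big)^{\perp}$ as subspaces of $\mathcal H$. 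Consequently $\codim S = \dim \sum_i T_i \le \sum_i \dim T_i = \sum_i \mathcal N(E; H_i)$, which gives the required codimension bound.

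Next I would verify the quadratic-form estimate on $S$. If $S = \{0\}$ there is nothing to check and $\mathcal N(E;H) \le \codim S$ already finishes; so assume $S \ne \{0\}$, which (since $\sum_i \mathcal H_i = \mathcal H$ by the covering, hence $\bigcap_i \ker P_i = \{0\}$) forces $S_i \ne \{0\}$ for at least one $i$, so $E' := \min\{E_i' : S_i \ne \{0\}\} > E$. For nonzero $f \in S$, the hypothesis \eqref{eqn:H>H12} and the lower bound on each $S_i$ give
\[
  \ipc{f}{Hf} \;\ge\; \sum_{i=1}^k \ipc{P_i f}{H_i P_i f}_{\mathcal H_i} \;\ge\; \sum_{i=1}^k E_i'\,\|P_i f\|^2 \;\ge\; E' \sum_{i=1}^k \|P_i f\|^2 ,
\]
where the $i$-th term is read as $0$ when $P_i f = 0$. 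The key observation — and the only place the covering hypothesis $\bigcup_i I_i = \{1,\dots,n\}$ together with the fact that each $\mathcal H_i$ is spanned by a subset of the orthonormal basis $\{v_m\}$ enters — is that $\sum_i P_i \ge I$: writing $f = \sum_m a_m v_m$, one has $\sum_i \|P_i f\|^2 = \sum_m |a_m|^2\,\#\{i : m \in I_i\} \ge \sum_m |a_m|^2 = \|f\|^2$. Hence $\ipc{f}{Hf} \ge E'\|f\|^2 > E\|f\|^2$ for all nonzero $f \in S$, and \eqref{eqn:codim} yields $\mathcal N(E;H) \le \codim S \le \sum_i \mathcal N(E;H_i)$.

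The main obstacle is essentially bookkeeping rather than conceptual: one must be a little careful with the strict inequality (handled painlessly by working with $E' > E$ instead of $E$, which costs nothing since $f \ne 0$), and with the operator inequality $\sum_i P_i \ge I$ — this fails for arbitrary subspaces and is exactly why the lemma insists that each $\mathcal H_i$ be a coordinate subspace. Everything else is a direct application of the min-max facts recorded in \eqref{eqn:codim} and the preceding discussion.
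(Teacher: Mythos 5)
Your argument is correct and is essentially the paper's own proof: you take the spectral subspaces of each $H_i$ below $E$ (the paper's $S_i$), pass to the orthogonal complement of their sum, use the coordinate-subspace covering to get $\sum_i\|P_if\|^2\ge\|f\|^2$, and conclude via the codimension form of the min-max principle \eqref{eqn:codim}. The only difference is cosmetic: by working with the smallest eigenvalue $E'>E$ of the parts above $E$, you make explicit the strict inequality that \eqref{eqn:codim} formally requires, a point the paper passes over silently.
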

\begin{proof}
    For any $i$, let $S_i=\{\phi\in \mathcal H_i\subseteq\mathcal H: \ipc{\phi}{H_i\phi}\le E\ipc{\phi}{\phi}\}$. 
    Let $  S=S_1+S_2+\cdots+ S_k\subseteq \mathcal H$.   
    Suppose 
     $f \in S^\perp\subseteq\cap_i  S_i^{\perp} $. 
    For each $i$, write $f=P_if+ P_i^{\perp}f$, where $P_i^\perp$ is the orthogonal projection onto $\mathcal{H}_i^\perp$. If $P_if\neq 0$, then $f \in    S_i^{\perp}$ implies $P_if \in S_i^{\perp}$.
    By the definition of $S_i$,  then
  $  \ipc{P_if}{H_iP_if}_{\mathcal H_i}\ge E\ipc{P_if}{P_if}$. Note if $P_if= 0$, the same inequality holds trivially. 
 Then \eqref{eqn:H>H12} implies that 
 \begin{align}
    \ipc{f}{Hf}_{\mathcal H}\ge   \sum_{i=1}^k E\ipc{P_if}{P_if} \ge E\ipc{f}{f}, 
  \end{align}
  where in the last inequality we used the definition of the $\mathcal{H}_i$ and that $\bigcup_{i=1}^kI_i=\{1,\ldots,n\}$.
 Therefore, by \eqref{eqn:codim}, \begin{align}
    \mathcal N(E;H)\le {\rm codim}S^\perp={\rm dim} S \le \sum_{i=1}^k {\rm dim}  S_i . 
 \end{align}
 Recalling that $\mathcal N(E;H_i)$ is taken on the Hilbert space $\mathcal{H}_i$, we see that 
 ${\rm dim}S_i= \mathcal N(E;H_i)$ by definition. 
\end{proof}

\begin{lemma}\label{lem:A3} 
Let $H$ be a real symmetric matrix on $\mathcal H$. Let $\mathcal H_i\subseteq \mathcal H$, for $i=1,\cdots,k$ with $k\ge2$, be subspaces, and let $H_i$ be a 
self-adjoint linear  operator  on  $\mathcal H_i$ for each $i=1, \cdots,k$.
Suppose 
that $\mathcal H_i$ and $\mathcal H_j$ are orthogonal for $i\neq j$, and that
\begin{align}\label{eqn:H<H12}
    \ipc{f}{Hf}_{\mathcal H}\le \sum_{i=1}^k\ipc{P_if}{H_iP_if}_{\mathcal H_i}, \quad\text{for all }f\in\mathcal H,
\end{align}
 where $P_i$ is the orthogonal projection onto $\mathcal H_i$.
 Then 
 \begin{align}\label{eqn:NH>H12}
      \mathcal N(E;H)\ge  \sum_{i=1}^k\mathcal N(E;H_i),
 \end{align}
  where $\mathcal N(E;H_i)$ is the eigenvalue counting function for the  restriction  of $H_i$ on $\mathcal H_i$.
 In particular, $\mathcal N(E;H)\ge  \mathcal N(E;H_i)$ for any $i$. 
\end{lemma}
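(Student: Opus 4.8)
The plan is to prove Lemma~\ref{lem:A3} as the exact dual of Lemma~\ref{lem:NH<NH12}, using the dimension-counting consequence of the min-max principle recorded in \eqref{eqn:dim}: to bound $\mathcal N(E;H)$ from below it suffices to produce a single subspace $S\subseteq\mathcal H$ of dimension $\sum_{i=1}^k\mathcal N(E;H_i)$ on which $\ipc{f}{Hf}_{\mathcal H}\le E\ipc{f}{f}$ for all $f\in S$.

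First I would build the candidate subspace blockwise. For each $i$ set
\[
S_i=\{\phi\in\mathcal H_i:\ \ipc{\phi}{H_i\phi}_{\mathcal H_i}\le E\ipc{\phi}{\phi}\},
\]
which, by the identity $\mathcal N(E;H_i)=\dim\{f\in\mathcal H_i:\ipc{f}{H_if}\le E\ipc{f}{f}\}$ stated at the start of this appendix, satisfies $\dim S_i=\mathcal N(E;H_i)$. Then set $S=S_1+S_2+\cdots+S_k\subseteq\mathcal H$. Since the $\mathcal H_i$ are pairwise orthogonal and $S_i\subseteq\mathcal H_i$, this sum is an orthogonal direct sum, so $\dim S=\sum_{i=1}^k\dim S_i=\sum_{i=1}^k\mathcal N(E;H_i)$.

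Next I would verify the form bound on $S$. Given $f\in S$, decompose $f=\sum_{i=1}^k f_i$ with $f_i\in S_i\subseteq\mathcal H_i$; orthogonality of the $\mathcal H_i$ yields $P_if=f_i$ and $\ipc{f}{f}=\sum_{i=1}^k\ipc{f_i}{f_i}$. Feeding this into the hypothesis \eqref{eqn:H<H12} and using $f_i\in S_i$ gives
\begin{align*}
\ipc{f}{Hf}_{\mathcal H}\le\sum_{i=1}^k\ipc{P_if}{H_iP_if}_{\mathcal H_i}=\sum_{i=1}^k\ipc{f_i}{H_if_i}_{\mathcal H_i}\le\sum_{i=1}^k E\ipc{f_i}{f_i}=E\ipc{f}{f}.
\end{align*}
Thus \eqref{eqn:dim} applies and $\mathcal N(E;H)\ge\dim S=\sum_{i=1}^k\mathcal N(E;H_i)$; the ``in particular'' claim follows because each $\mathcal N(E;H_i)\ge0$.

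I do not anticipate a genuine obstacle: the content of the lemma is that the roles of ``codimension/upper bound'' in Lemma~\ref{lem:NH<NH12} are swapped for ``dimension/lower bound.'' The single place where the orthogonality of the $\mathcal H_i$ (rather than a mere covering condition) is essential is the step $\dim S=\sum_i\dim S_i$, which fails for overlapping subspaces; this is why Lemma~\ref{lem:A3} requires the $\mathcal H_i$ disjoint whereas Lemma~\ref{lem:NH<NH12} only needs them to span $\mathcal H$. The one point to handle carefully is that one must \emph{not} assume $\bigoplus_i\mathcal H_i=\mathcal H$ (it need not be), but $S$ still lies in $\mathcal H$, which is all \eqref{eqn:dim} needs.
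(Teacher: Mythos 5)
Your proof is correct and follows essentially the same route as the paper: define $S_i$ as the (spectral) subspace of $\mathcal H_i$ on which the form of $H_i$ is $\le E$, take the orthogonal direct sum $S=\oplus_i S_i$, verify the form bound on $S$ via \eqref{eqn:H<H12} and orthogonality, and conclude with \eqref{eqn:dim}. Your explicit decomposition $f=\sum_i f_i$ with $P_if=f_i$ is just a slightly more detailed rendering of the paper's argument.
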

\begin{proof}
    Let $S_i=\{\phi\in \mathcal H_i: \ipc{\phi}{H_i\phi}\le E\ipc{\phi}{\phi}\}$. 
    Let $  S=\oplus_i S_i\subseteq\mathcal{H}$ be the direct sum of $S_i$ (which is well-defined since $S_i\subseteq \mathcal H_i$ are orthogonal).  
    Suppose $f \in S$, so that $P_if\in S_i$, which implies  $  \ipc{P_if}{H_iP_if}_{\mathcal H_i}\le E\ipc{P_if}{P_if}$ by the definition of $S_i$.  
 Then \eqref{eqn:H<H12} implies that 
 \begin{align}
    \ipc{f}{Hf}_{\mathcal H}\le  \sum_{i=1}^k  E\ipc{P_if}{P_if}\le E\ipc{f}{f}, 
  \end{align}
  where in the last inequality we used that the $ {S}_i$ are orthogonal so that $\sum_{i=1}^k P_i$ is the orthogonal projection onto $S$.
 Therefore, by \eqref{eqn:dim}, \begin{align}
    \mathcal N(E;H)\ge   {\rm dim} S =\sum_{i=1}^k {\rm dim}  S_i=\sum_{i=1}^k \mathcal N(E;H_i). 
 \end{align}
\end{proof}

\section{Asymptotic behavior of the iteration \eqref{eqn:f-it}}\label{sec:f-it}
Suppose that for $-1\le x\le0$,
\begin{align}\label{eqn:f-lower}
  0\ge   f(x)\ge \frac{1}{5}x(1-x).
\end{align}
We show that for $n\ge 1$, the $n$-th iteration of $f$ satisfies, for $-1\le x \le 0$,
\begin{align}\label{eqn:fn-lower}
 0\ge    f^{\circ n}(x)\ge  \frac{1}{5^n}x \big(1- S_n  x\big),\ \ \text{where } S_n=\sum_{m=0}^{n-1}\frac{(m+1)^2}{5^m}. 
\end{align}
Since $S_1=1$, \eqref{eqn:fn-lower} holds for $n=1$. Now suppose that \eqref{eqn:fn-lower} holds for some $n \ge 1$. 
Note that \eqref{eqn:f-lower} implies that $0\ge f^{\circ m}(x)\ge -1$ for $-1\le x\le0$ for any $m\ge1$.
Then 
\begin{align*}
    f^{\circ ({n+1})}(x)=   
    f\big(f^{\circ n}(x)\big)  
    \ge &\,  \frac{1}{5}f^{\circ n}(x)(1-f^{\circ n}(x)) \\
 \ge &\,  \frac{1}{5}\frac{1}{5^n}x \big(1- S_n  x\big)\Big[1-\frac{1}{5^n}x \big(1- S_n  x\big) \Big]  \\
 =& \, \frac{1}{5^{n+1}}x  \Big[1- S_n  x-\frac{1}{5^n}x \big(1- S_n  x\big)^2   \Big]. \numberthis
\end{align*}
Using the very loose bound$0<S_n\le n$ and $-1\le x\le 0$, one has 
$1- S_n  x\le 1+n$.

Hence,
\begin{align}
  f^{\circ ({n+1})}(x)\ge  \frac{1}{5^{n+1}}x  \Big[1- S_n  x-\frac{1}{5^n}x  (1+n )^2   \Big]=   \frac{1}{5^{n+1}}x (1- S_{n+1}  x  ),
\end{align}
which completes the induction. Direct computation gives $S_n\le S_\infty=75/32$ for all $n$. Therefore,  for $-1\le x \le 0$,
\begin{align}
   f^{\circ n}(x)\ge  \frac{1}{5^n}x \big(1- S_\infty  x\big)\ge \frac{1}{5^n}x \left(1+\frac{75}{32}\right)\ge    \frac{4}{5^n}x ,
\end{align}
which is the lower bound in \eqref{eqn:f-it}. 
\section{Ordered eigenvalues of the product of two semidefinite   matrices}
\begin{lemma}
    Denote by $E_0(X)\le E_1(X)\le \cdots E_{n-1}(X)$ the ordered eigenvalues (if they are all real) of a $n\times n$  matrix $X$.  Suppose $A,B$ are two $n\times n$ positive semidefinite Hermitian matrices. Then for   $j=0,\cdots,n-1$, 
    \begin{align}\label{eqn:EjAB}
 E_{0}(A)E_j(B) \le      E_j(AB)\le E_{n-1}(A)E_j(B).
    \end{align}
\end{lemma}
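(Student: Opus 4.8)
The only real subtlety is that $AB$ need not be Hermitian, so the ordered eigenvalues $E_j(AB)$ are a priori not even guaranteed to be real. The plan is therefore to replace $AB$ by a genuinely Hermitian matrix with the same characteristic polynomial, and then to run everything through monotonicity of ordered eigenvalues in the Loewner order.

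\emph{Step 1: reduction to a Hermitian matrix.} I would use the identity $\det(\lambda I - XY) = \det(\lambda I - YX)$, valid for any two square matrices $X,Y$. Applying it with $X = B^{1/2}$ and $Y = AB^{1/2}$, where $B^{1/2}$ is the unique positive semidefinite square root of $B$, shows that $AB = YX$ and $B^{1/2}AB^{1/2} = XY$ have the same characteristic polynomial. Since $B^{1/2}AB^{1/2}$ is Hermitian and positive semidefinite, this already gives that all eigenvalues of $AB$ are real and nonnegative, and that $E_j(AB) = E_j\big(B^{1/2}AB^{1/2}\big)$ for every $j = 0,\dots,n-1$.

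\emph{Step 2: sandwiching in the Loewner order.} Because $A$ is positive semidefinite Hermitian, $E_0(A)\,I \le A \le E_{n-1}(A)\,I$. Conjugating by the Hermitian matrix $B^{1/2}$ preserves these inequalities (if $M \le N$ then $Z^{*}MZ \le Z^{*}NZ$ for any $Z$), so
\[
E_0(A)\,B \;\le\; B^{1/2} A B^{1/2} \;\le\; E_{n-1}(A)\,B .
\]

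\emph{Step 3: eigenvalue monotonicity.} By the min-max principle (as recalled in Appendix~\ref{sec:min-max}), the map $M \mapsto E_j(M)$ is order preserving on Hermitian matrices, and $E_j(cM) = c\,E_j(M)$ for $c \ge 0$. Applying this to the chain of inequalities in Step 2 gives $E_0(A)\,E_j(B) \le E_j\big(B^{1/2}AB^{1/2}\big) \le E_{n-1}(A)\,E_j(B)$, and combining with the identification of Step 1 yields \eqref{eqn:EjAB}. The only step that requires genuine care is Step 1; note in particular that the alternative reduction via $A^{1/2}BA^{1/2}$ would \emph{not} work here, since $A^{1/2}BA^{1/2}$ admits no elementary sandwiching by scalar multiples of $B$ — conjugating the bounds on $A$ by $B^{1/2}$ is precisely what makes Step 2 go through.
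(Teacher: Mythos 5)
Your proposal is correct and takes essentially the same route as the paper: identify $E_j(AB)$ with $E_j\big(B^{1/2}AB^{1/2}\big)$ via the $XY$/$YX$ spectral identity, then sandwich $B^{1/2}AB^{1/2}$ between $E_0(A)\,B$ and $E_{n-1}(A)\,B$ and invoke min--max monotonicity. If anything, your appeal to equality of characteristic polynomials is slightly more careful than the paper's statement that $AB$ is ``similar'' to $B^{1/2}AB^{1/2}$, which glosses over the case of singular $B^{1/2}$.
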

This is some well-known result of the theory majorization of eigenvalues. A more general version can be found in e.g. \cite[p.340, H.1.d. Theorem (Lidski\v{i}, 1950)]{marsh2011}. We sketch the proof of the special case \eqref{eqn:EjAB} here for the reader's convenience. 
\begin{proof}
    Since $B$ is positive semidefinite Hermitian, then $B^{1/2}$ is well-defined, and is also positive semidefinite Hermitian. Rewrite $AB=(AB^{1/2})\cdot B^{1/2}$ which is similar to $B^{1/2}\cdot(AB^{1/2})$. Hence, the ordered eigenvalues (counting multiplicity) of $AB$ are the same as $B^{1/2} AB^{1/2} $, i.e., 
    \[E_{j}(AB)=E_j(B^{1/2} AB^{1/2}),\ \  j=0,\cdots,n-1.  \]
    On the other hand, denote by $E_{\max}=E_{n-1}(A)$ the largest eigenvalue of $A$. Then $E_{\max}-A\ge 0$  implies  $B^{1/2}E_{\max}B^{1/2}\ge B^{1/2}AB^{1/2}$ (both inequalities being in the positive semidefinite sense). Then by the min-max principle, one has for   $j=0,\cdots,n-1$,  
    \[E_j(B^{1/2} AB^{1/2})\le E_j(B^{1/2}E_{\max}B^{1/2})=E_{\max}\cdot E_j(B^{1/2}  B^{1/2})=E_{\max}\cdot E_j(B ), \]
    which is the upper bound. The lower bound can be proved exactly in the   same way using  $B^{1/2}E_{0}(A)B^{1/2}\le B^{1/2}AB^{1/2}$. 
\end{proof}

\noindent\textbf{Acknowledgments.}
\phantomsection
\addcontentsline{toc}{section}{Acknowledgments}
The authors would like to thank Jeffrey Schenker for many useful discussions about the Anderson model on the Sierpinski gasket, in particular, for suggesting studying the spectrum  using the Weyl criterion. S.Z. was supported by the NSF grant DMS-2418611.

%%%%%%%%%%%%%%%%%%%%%%%%%%%%%%%%%%%%%%%%%%%%%%%%%%%%%%%%%%%%%%%%%%%%%%%%%%%%%%%%%%%%%%%%%%%%%%%%%%

\bibliographystyle{abbrv}
\bibliography{SG.bib}

{
  \bigskip
  \vskip 0.08in \noindent --------------------------------------

\footnotesize
\medskip
L.~Shou, {Joint Quantum Institute, Department of Physics, University of Maryland, College Park, MD 20742, USA}\par\nopagebreak
    \textit{E-mail address}:  \href{mailto:lshou@umd.edu}{lshou@umd.edu}

\vskip 0.4cm

  W. ~Wang, {LSEC, Institute of Computational Mathematics and Scientific/Engineering Computing, Academy of Mathematics and Systems Science, Chinese Academy of Sciences, Beijing 100190, China}\par\nopagebreak
  \textit{E-mail address}: \href{mailto:ww@lsec.cc.ac.cn}{ww@lsec.cc.ac.cn}
  
\vskip 0.4cm

S.~Zhang, {Department of Mathematics and Statistics, University of Massachusetts Lowell, 
Southwick Hall, 
11 University Ave.
Lowell, MA 01854
 }\par\nopagebreak
  \textit{E-mail address}: \href{mailto:shiwen\_zhang@uml.edu}{shiwen\_zhang@uml.edu}
}

\end{document}